\algnewcommand{\Input}{\item[\textbf{Input:}]}
\algnewcommand{\Output}{\item[\textbf{Output:}]}
\renewcommand{\tr}{\operatorname{Tr}}
\newcommand{\ddist}{d_{\diamond}}
\newcommand{\pphys}{p_{\rm phys}}
\newcommand{\barthetatarget}{\bar{\theta}_{\rm target}}
\newcommand{\mathC}{\mathcal{C}}
\newcommand{\mathI}{\mathcal{I}}
\newcommand{\mathM}{\mathcal{M}}
\newcommand{\mathR}{\mathcal{R}}
\newcommand{\mathU}{\mathcal{U}}
\newcommand{\bartheta}{\bar{\theta}}
\newcommand{\barP}{\bar{P}}
\newcommand{\supp}{{\rm supp}}
\newtheorem{theorem}{Theorem}
\newtheorem{definition}{Definition}
\newtheorem{lemma}{Lemma}
\newtheorem{proposition}{Proposition}
\newcommand{\black}[1]{\textcolor{black}{#1}}
\begin{document}

\title{Theory and Architecture of Syndrome-Resolved Logical Gates}

\author{Nobuyuki Yoshioka}
\email{ny.nobuyoshioka@gmail.com}
\affiliation{International Center for Elementary Particle Physics, The University of Tokyo, 7-3-1 Hongo, Bunkyo-ku, Tokyo 113-0033, Japan}

\author{Alireza Seif}
\affiliation{IBM Quantum, IBM T.J. Watson Research Center, Yorktown Heights, NY, USA}

\author{Peter Groszkowski}
\affiliation{National Center for Computational Sciences, Oak Ridge National Laboratory, Oak Ridge, Tennessee 37831, USA}

\author{Andrew Cross}
\affiliation{IBM Quantum, IBM T.J. Watson Research Center, Yorktown Heights, NY, USA}

\author{Ali Javadi-Abhari}
\affiliation{IBM Quantum, IBM T.J. Watson Research Center, Yorktown Heights, NY, USA}

\begin{abstract}
We introduce a general framework for weak transversal gates---syndrome-resolved implementation of logical unitaries realized by local physical unitaries---and demonstrate its wide impact in three applications: partial fault-tolerant architecture, short-depth state preparation, and magic state distillation.
Our theoretical contribution is to prove a sufficient condition for a general Calderbank-Shor-Steane code to admit weak transversal gates, and to present an efficient algorithm to determine the physical multiqubit Pauli rotations.
To demonstrate the practicality of weak transversal gates in the near future,
we propose a partially fault-tolerant Clifford+$\phi$ architecture that implements in-place Pauli rotations via a repeat-until-success strategy. Numerical simulations indicate that a rotation of $0.001$ attains a logical error of $9.6\times10^{-5}$ on a surface code with $d=7$ at a physical error rate of $\pphys=10^{-4}$, while avoiding the spacetime overheads of magic state factories, small angle synthesis, and routing. Resource estimates on surface and [[144,12,12]] bivariate bicycle codes for a Trotter-like circuit with $N=108$ logical qubits indicate runtime reductions of factors of 778 and 45, respectively, relative to the conventional Clifford+T execution, due to the natural parallelism of rotation gates.
\black{Looking further ahead, we consider state-preparation tasks allowing postselection. We show that weak transversal gadgets prepare high-level equatorial magic states with a substantially improved cost-accuracy tradeoff, achieving up to a $10^3$-fold reduction in spacetime cost for a magic state at the 11th level of the Clifford hierarchy on a surface code of $d=11$.
Furthermore, we integrate weak transversal gates into magic state distillation pipelines, reducing the logical error rate of level-2 distillation of the $\sqrt{T}$ state by up to a factor of 94 with a small increase in spacetime cost.
}
This work opens a novel paradigm for non-Clifford operations beyond conventional transversal and factory-based approaches.
\end{abstract}

\maketitle
\onecolumngrid

\section{Introduction} \label{sec:intro}

Transversal gates, the logical gates that can be implemented with only local operations on physical qubits, play a central role in fault-tolerant quantum computation (FTQC).
Beyond their direct use as fault-tolerant gate operations, they are also fundamental in protocols such as distillation of quantum resources~\cite{BravyiKitaev2004, Knill2004PostselectedQC}.
However, systematic methods to identify transversal gates in the conventional sense remain limited: a naive approach requires exponential-time computation~\cite{babai2011code, webster2023transversal}, and consequently, most known constructions have been derived on a case-by-case basis~\cite{BravyiHaah2012, Bombin2015GaugeColorCodes, paetznick2013universal, vasmer2019three}.
Considering that practical architectures of FTQC require resilience to qubit dropout~\cite{nagayama2017surface, auger2017fault, strikis2023quantum,  debroy2024luci, leroux2024snakes} or the use of dynamically changing error correcting codes (QECCs)~\cite{hastings2021dynamically, eickbusch2024demonstrating}, a more flexible and efficient framework for determining the feasibility of gate implementations is desired. In particular, in the earliest realizations of FTQC, combinations of fault-tolerant and non-fault-tolerant gates are envisioned to outperform full-fledged FTQC~\cite{choi2023fault, gavriel2023transversal, akahoshi2024partially, toshio2025practical}, and hence it is essential to develop methodologies that reduce the burden on quantum computers.

\begin{figure*}[t]
    \centering
\includegraphics[width=1.\linewidth]    {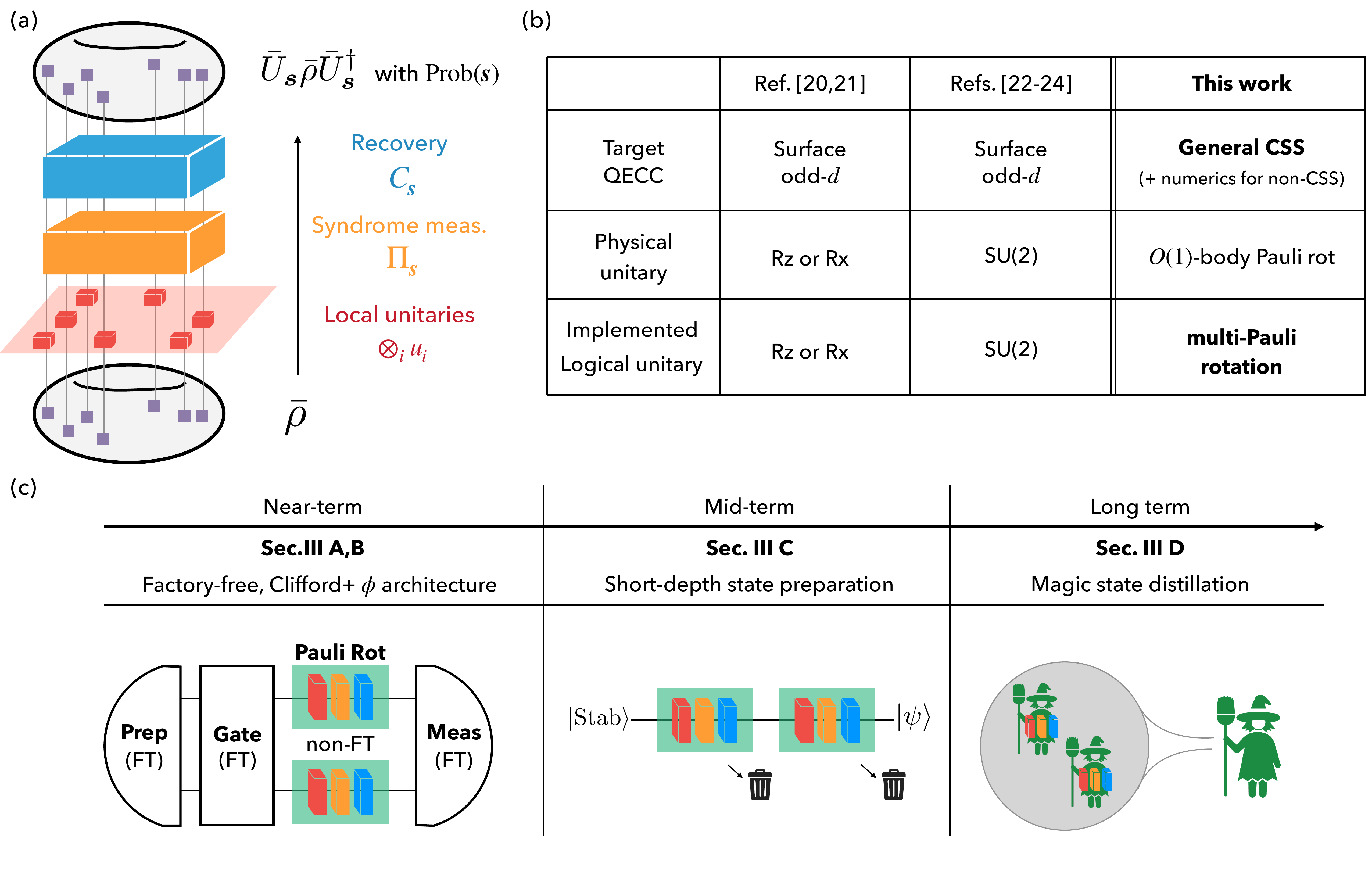}
    \caption{
(a) Concept of a weak transversal gate. If a set of given local unitaries followed by syndrome measurement with result $\bm s$ and correction operation $C_{\bm s}$ results in a logical unitary operation $\bar{U}_{\bm s}$ that is {\it independent of the input state}, then we call such an operation weakly transversal.  (b) Comparison with existing works that study weak transversal gates~\cite{bravyi2018correcting, suzuki2017efficient, cheng2024emergent, PhysRevLett.131.060603, behrends2024statistical}. Our main contribution in Sec.~\ref{sec:theory} is to establish a general condition for CSS codes to admit weak transversal implementation of multi-body Pauli rotation gates.
\black{(c) Applications considered in this work: partially fault-tolerant architecture (Secs.~\ref{subsec:inplace-rot},~\ref{subsec:resource-est}), short-depth state preparation (Sec.~\ref{subsec:stateprep}), and magic state distillation (Sec.~\ref{subsec:msd}), which we expect to be relevant in the near-, mid-, and long-term FTQC, respectively.}
     }
    \label{fig:fig1}
\end{figure*}

An emerging question is whether there exist transversal yet non-fault-tolerant gates.
Indeed, it has been found in the context of error correcting memory~\cite{bravyi2018correcting,suzuki2017efficient, PhysRevLett.131.060603,behrends2024statistical} and randomness generation~\cite{cheng2024emergent} that weaker versions of transversal gates exist. We refer to these as {\it weak transversal gates}, which implement a set of gates in a probabilistic manner depending on the syndrome, instead of a deterministic manner (Fig.~\ref{fig:fig1}).
Both works consider surface codes of odd code distance, with the assumption that identical physical unitaries are applied to all the data qubits. Extending this framework to quantum error correcting codes with multiple logical qubits, including quantum low-density parity check (qLDPC) codes, has been an open question, as noted recently by Cheng et al.~\cite{cheng2024emergent}.
Additionally, when considering inhomogeneous application of physical unitaries, it remains entirely unknown which weak transversal gates are allowed.

An outstanding question to be answered in this context is as follows: {\it Are Pauli rotations weakly transversal in general codes, and what does this enable computationally?}
We address this question by establishing
a general construction of weak transversal Pauli rotation gates in general Calderbank--Shor--Steane (CSS) codes,
including high-rate quantum low-density parity-check (qLDPC) codes~\cite{kovalev2012improved, tillich2014quantum, zeng2019higher, breuckmann2021quantum, bravyi2024highthreshold} and topological codes such as the surface code~\cite{kitaev2003fault, fowler2012surface} and color codes~\cite{bombin2006topological, landahl2011fault}~(Sec.~\ref{sec:theory}).
We further demonstrate the impact of weak transversal gates for quantum computing in three settings: factory-free Clifford+$\phi$ architecture, postselected short-depth state preparation, and enhanced magic state distillation:
\begin{itemize}
\item {\bf Factory-free, Clifford+$\phi$ architecture (Sec.~\ref{subsec:inplace-rot},\ref{subsec:resource-est}).}
We propose a partially-fault tolerant architecture that augments the instruction set with {\it in-place} Pauli rotations implemented via a repeat-until-success (RUS) strategy. As detailed in Sec.~\ref{subsec:inplace-rot}, estimate from circuit-level  simulations show that a rotation of $0.001$ achieves a logical error rate of $9.6 \times 10^{-5}$ within $30$ rounds of syndrome extraction on a $d=7$ surface code at $\pphys=10^{-4}.$
Compared to synthesizing the same rotation from fault-tolerant primitives, this reduces the number of code cycles by a factor of ten,
and also evades the space overhead of magic state factories as well as the overhead of distilling and routing, which is in sharp contrast to existing partial FTQC architectures~\cite{choi2023fault, gavriel2023transversal, akahoshi2024partially, toshio2025practical}.
We further find in Sec.~\ref{subsec:resource-est} that resource estimates for Trotter-like circuits with $N=108$ logical qubits assuming a surface code and a [[144, 12, 12]] bivariate bicycle code~\cite{bravyi2024highthreshold} show improvements by factors of 778 and 45, respectively, in the runtime over a conventional Clifford+T architecture, enabled by the natural parallelism of rotation gates.

\item \black{
{\bf Short-depth state preparation (Sec.~\ref{subsec:stateprep}).}
Looking ahead to regimes requiring much lower logical error rates, we study shallow-depth state preparation, which underlies primitives in quantum learning, quantum simulation, and phase estimation.
By utilizing postselection, weak-transversal rotation gadgets prepare high-level equatorial magic states with a markedly improved cost-accuracy tradeoff; for example, we find that a magic state at the 11th level of the Clifford hierarchy can be prepared with up to a $10^3$-fold reduction in spacetime cost on a $d=11$ surface code at $\pphys=10^{-4}$. We also introduce a beyond-state-of-the-art protocol for preparing the phase gradient state, a catalytic resource for the phase estimation algorithm.
}

\item \black{{\bf Magic state distillation (Sec.~\ref{subsec:msd}).} We next turn to ultra-high-precision regimes where magic state distillation is indispensable.
Postselected weak transversal gates exhibit nonlinear error suppression, which can be further enhanced by segmenting the target rotation angle at the cost of increased spacetime overhead.
While segmentation is expensive as a stand-alone state preparation method, it becomes advantageous within a distillation pipeline, where initial noisy state injections are the bottleneck in terms of error but contribute only a small fraction of the total spacetime cost.
For the Meier--Eastin--Knill protocol~\cite{meier2012magic}, we numerically find that integrating segmented weak transversal gates reduces the output logical error of the $\sqrt{T}$ state by a factor of $94$ with only a modest increase in spacetime cost, which compounds even more when we perform multiple rounds of distillation.
}
\end{itemize}

\section{Theory of weak transversal operation}\label{sec:theory}
\subsection{Preliminaries and setup}

A logical quantum operation $\bar{\mathcal{U}}$ is usually said to be {\it transversal} if it can be implemented by applying the same physical gate to each physical qubit in the code blocks.
In the following, we introduce the notion of {\it strong} and {\it weak} transversality.
The former coincides with the standard definition, while the latter differs in the sense that it admits probabilistic implementation of a gate set, and hence the fault distance is usually lower than that of strong transversal gates.
For the sake of simplicity, we assume that operations are done within a single code block of an $[[n, k, d]]$ QECC, while the notion can be straightforwardly extended to involve multiple code blocks.
We first provide the definition of a strong transversal gate on a single QECC patch.
Note that we hereafter distinguish logical and physical operations by the presence or absence of the bar above the symbols, e.g., $\bar{U}$ and $U$.

\begin{definition} (Strong transversality.) \label{def:strong-transversality}
Let $\bar{U}$ be a unitary that acts on one or more code blocks, each consisting of $n$ physical qubits. A physical unitary $V$ is called a {\it strongly transversal} implementation of $\bar{U}$ if $V= \bigotimes_{i=1}^n v_i,$ where $v_i$ acts only on the $i$-th physical qubit(s) of the code block(s) involved.
\end{definition}
One of the most prominent examples of strong transversal operations is the logical Pauli operators of qubit stabilizer codes; they can all be implemented with single-qubit Pauli gates.
Other examples include the CNOT gate for Calderbank-Shor-Steane (CSS) codes~\cite{shor1996fault} and the T gate for triorthogonal codes~\cite{BravyiHaah2012}.
In these cases, the physical operation $\mathcal{V}$ is a product of local unitaries; syndrome extraction and recovery operations are performed to correct errors, but do not play a crucial role in the implementation of the logical gate. We contrast this with weak transversal gates, in which the action of physical operations themselves is not closed within the individual symmetric space defined by stabilizers. We further discuss this point after providing the definition:

\begin{definition} (Weak transversality.) \label{def:weak-transversality}
Let $\mathcal{E} = \{(p_x, \bar{U}_x)\}_x$ be an ensemble of probabilities $\{p_x|~\sum_x p_x = 1, p_x > 0\}$ and unitaries $\{\bar{U}_x\}$ that act on logical qubits of a QECC patch consisting of $n$ qubits. Assume that the following conditions are satisfied by the physical operation $\mathcal{V}$: \begin{itemize}
    \item $\mathcal{V}$ is a constant-depth sequence of physically available local operations on $n$ qubits that implements $\bar{U}_x$ with probability $p_x$.
    \item The outcome does not depend on the input logical state.
\end{itemize}
 Then, $\mathcal{V}$ is a weak transversal gate.
Here we do not require the ability to choose from the set $\{\bar{U}_x\}$; it can be a random realization.
\end{definition}

The notion of weak transversality is relevant to the projected ensembles, which are ensembles of pure quantum states generated by projective measurements on a subspace of a Hilbert space (usually a subset of qubits for a multiqubit system)~\cite{choi2023preparing, cotler2023emergent}.
Here we instead consider an ensemble of logical unitaries realized via a sequence of physical unitaries followed by projective measurements and feedback operations (see Fig.~\ref{fig:fig1}).
Note that it is crucial to require that the ensemble of unitaries $\mathcal{E}$ does not rely on an input state $\rho$, as long as $\rho$ lives in the code space.
Otherwise, we must simulate the quantum circuit in order to deduce the implemented quantum operation, which quickly becomes intractable when $n$ is large.

Due to the celebrated Eastin-Knill theorem~\cite{eastin2009restrictions, zeng2011transversality}, there is no QECC that admits a strong transversal implementation of universal gate sets, leading to the need to rely on gate teleportation~\cite{gottesman1999demonstrating} or code switching~\cite{kubica2015universal, anderson2014fault, butt2024fault} to attain universality.
Meanwhile, the Eastin-Knill theorem does not rule out {\it weak} transversal implementation.
We therefore aim to make a first attempt to derive general conditions for a QECC to admit weak transversal implementation of Pauli rotation gates.
It is useful to refer to the entire continuous gate set of $e^{i \bar{P} \theta}$ as follows.
\begin{definition} (Weak transversal Pauli rotation gate.) Let $\bar{P}$ be a nontrivial logical Pauli operator on a single QECC patch. If $\mathcal{V}_{\Phi}$ is a constant-depth sequence of physically available local operations with parameter set of $\Phi$ such that, for any $\theta$ there exists $\Phi$ that implements $e^{i \bar{P} \theta}$ with nonzero probability, then $\mathcal{V}_{\Phi}$ is a weak transversal $\bar{P}$ rotation gate.
\end{definition}
As we see in the remainder of this section, $\mathcal{V}_\Phi$ consists of local unitaries, syndrome measurements, and a recovery map. With a slight abuse of terminology, we also refer to the local unitaries as weak transversal Pauli rotation gates.

Our main theoretical result regarding the weak transversal Pauli rotation gate is Theorem~\ref{thm:general-pauli-rot}, which provides
 a sufficient condition for arbitrary CSS codes.
 To prove this, we start in Sec.~\ref{subsec:condition-single-qubit} by assuming that (i) physical unitaries are single-qubit operations, and (ii) we aim to implement single-qubit Z rotation.
 As detailed in Sec.~\ref{subsec:condition-multiqubit}, we prove Theorem~\ref{thm:general-pauli-rot} by gradually lifting the conditions.
 Namely, we consider physical unitaries acting on multiple physical qubits, and also aim to provide a method for weak transversal gates for general Pauli rotations.
 While we restrict our discussion to CSS codes, we also briefly mention the applicability to non-CSS codes in Sec.~\ref{subsec:non-css}.

\subsection{Weak transversal Z rotation for odd distance }\label{subsec:condition-single-qubit}

Our first result is the sufficient condition to admit weak transversal Z rotation on a multiqubit code block, which also applies to X and Y rotations. After stating the theorem, we provide a proof sketch, with its full derivation deferred to Appendix~\ref{app:thm1-proof}.

\begin{restatable}{theorem}{weakZthm}(Weak transversal Z rotation gate on odd-distance code)\label{thm:z-rotation}
Let a logical QECC patch be an $[[n, k, d_x, d_z]]$ CSS code. Assume that the following conditions are satisfied:
\begin{enumerate}
    \item The code distance for the logical Z operator, $d_z$, is odd.
    \item The logical Z operator of interest, $\bar{Z}_{\rm target}$, is weight $d_z$.
\end{enumerate}
Then, $U \coloneqq \prod_{j \in {\rm supp}(\bar{Z}_{\rm target})} e^{i \theta Z_j}$ is a weak transversal Z rotation gate.
\end{restatable}

\begin{proof}[Proof sketch]
The core of the theorem is to prove (i) input agnosticity of the logical operation and (ii) the ability to implement an arbitrary logical angle by tuning the physical angle. These properties ensure weak transversality and a nontrivial logical operation, respectively. In the following, we outline the proof in order.

For input agnosticity, let $\Pi_0$ be the projector onto the code space and $\{\Pi_{\bm s}\}_{\bm s} \coloneqq \{\prod_{i=1}^{n-k}\frac{1 + (-1)^{s_i}g_i}{2}\}_{\bm s}$ the projectors corresponding to a syndrome measurement result ${\bm s}$, where $g_i$ is the $i$-th stabilizer generator. After applying
$U=\prod_{j\in \mathrm{supp}(\bar Z_{\mathrm{target}})} e^{i\theta Z_j}$ and measuring the syndrome,
the Kraus operator associated with outcome $\bm s$ is
$K_{\bm s} := \Pi_{\bm s} U$.
Then,  it suffices to show that the outcome probability
$p_{\bm s}(\bar\rho)=\mathrm{Tr}[K_{\bm s} \bar\rho K_{\bm s}^\dagger]$ is independent of the input logical state $\bar\rho$.

It is convenient to expand a given logical state in terms of logical Pauli operators as
\begin{equation}
\bar{\rho} = \Pi_{\bm 0} \left(\sum_{\bar{Q}\in\{\bar{I}, \bar{X}, \bar{Y}, \bar{Z}\}} c_{\bar{Q}} \bar{Q}\right) \Pi_{\bm 0} = \frac{1}{2^k}\Pi_{\bm 0} + \sum_{\bar Q\in \{\bar{I}, \bar{X}, \bar{Y}, \bar{Z}\}^{\otimes k}\backslash \bar I} c_{\bar Q}\,\Pi_0 \bar Q .
\end{equation}
Then
\begin{equation}
p_{\bm s}(\bar\rho)=\frac{1}{2^k}\mathrm{Tr}[\Pi_{\bm s} U\Pi_0 U^\dagger]+
\sum_{\bar Q\neq \bar I} c_{\bar Q}\,\mathrm{Tr}[\Pi_{\bm s} U\Pi_0 \bar Q U^\dagger].
\end{equation}
Since $\Pi_{\bm s}$ and $\Pi_0$ are linear combinations of stabilizers, it suffices to show that
\begin{equation}
\mathrm{Tr}[S_1 U S_2 \bar Q U^\dagger]=0 \label{eq:thm1-traceless}
\end{equation}
for all $S_1,S_2\in\mathcal S$ and all nontrivial logical Paulis $\bar Q$.

If $\bar Q$ contains any logical $X$ component, then $S_1S_2\bar Q$ has nonzero $X$-support.
Since the conjugation of $S_2 \bar{Q}$ by $U$ can only append $Z$'s on the overlapping sites (it never removes $X$ support),
the term remains traceless.
Next, if $\bar Q$ is purely logical $Z$, i.e., do not contain any X or Y component, conjugation by $U$ can only create additional $Z$ operators supported on
$\mathrm{supp}(\bar Z_{\mathrm{target}})\cap \mathrm{supp}_X(S_2)$.
Because $d_z$ is odd and $S_2$ commutes with $\bar Z_{\mathrm{target}}$, this overlap is even and strictly smaller than $d_z$.
Therefore it cannot cancel a nontrivial logical-$Z$ component, whose minimum weight is $d_z$ by assumption.
Thus, $\mathrm{Tr}[S_1 U S_2 \bar Q U^\dagger]=0$ in all cases, implying that $p_{\bm s}$ is input-agnostic.

In order to show the second part, namely the capability of implementing arbitrary logical angle, we identify $\bar{U}_{\bm s}$ as a $\bar Z_{\rm target}$-rotation up to a Pauli recovery operation.
We first expand the unitary as
\begin{equation}
U=\prod_{j\in\supp(\bar Z_{\rm target})}(\cos\theta\,I+i\sin\theta\,Z_j)
=\sum_{\bm b \in \{0, 1\}^{d_z}}(\cos\theta)^{d_z-|{\bm b}|}(i\sin\theta)^{|{\bm b}|} \prod_{j\in \supp(\bar{Z}_{\rm target})} Z_j^{b_j} .
\end{equation}
Each Z term determines a syndrome $\bm s$, which
 is uniquely related to a  minimum-weight $Z$ error
$E_{\bm s}$ with  $\supp(E_{\bm s})   \subseteq\supp(\bar Z_{\rm target})$ and $w:=|E_{\bm s}|\le(d_z-1)/2$.
Moreover, $E_{\bm s}$ and $E_{\bm s}\bar Z_{\rm target}$ have the same syndrome.
Projecting onto the $s$-syndrome subspace therefore yields, on the code space,
\begin{equation}
K_{\bm s} \propto
E_{\bm s}\Big[
(\cos\theta)^{d_z-w}(i\sin\theta)^w\,\bar I
+(\cos\theta)^w(i\sin\theta)^{d_z-w}\,\bar Z_{\rm target}
\Big]\Pi_0
= \sqrt{p_{\bm s}}\,E_{\bm s}\,e^{i\bar\theta_s\bar Z_{\rm target}}\Pi_0 ,
\end{equation}
with
\begin{equation}
p_{\bm s}=
\big|(\cos\theta)^{d_z-w}(\sin\theta)^w\big|^2+
\big|(\sin\theta)^{d_z-w}(\cos\theta)^w\big|^2,
\qquad
\tan\bar\theta_s = (-1)^{\frac{d_z-1}{2}-w}\tan^{\,d_z-2w}\theta .
\end{equation}
Because $d_z$ is odd, varying $\theta$ makes $\bar\theta_s$ attain any target angle modulo a global phase, with $p_{\bm s}>0$.

\end{proof}

\subsection{General Pauli rotation} \label{subsec:condition-multiqubit}
While we have assumed only single-qubit unitary for each physical qubit in \Cref{thm:general-pauli-rot}, weak transversal operation can be even more versatile if we allow multiqubit operations.
Deferring the detailed proof to Appendix~\ref{app:thm2-proof}, here we state the Theorem and then provide a sketch of proof.

\begin{restatable}{theorem}{weakPthm} (Weak transversal implementation of general Pauli rotation gate) \label{thm:general-pauli-rot}
Let $C$ be an $[[n,k,d]]$ CSS code, and let $M\leq d$ be an odd integer. Then, for any  logical Pauli operator $\bar{P}$ of the code $C$, there exists a set of Pauli operators $\{P_m\}_{m=1}^M~(P_m\in \{I, X, Y, Z\}^{\otimes n})$ such that a unitary $\prod_{m=1}^M e^{i \theta P_m}$ followed by syndrome extraction yields weak transversal implementation of $\bar{P}$-rotation gate.
\end{restatable}

\begin{proof}[Proof sketch]

Similarly to \Cref{thm:z-rotation}, we prove the theorem in two steps; first  the input agnosticity, and second the capability of implementing arbitrary logical angle.

For the first step, it is crucial to construct the actual unitary operation for the weak transversal gate.
To accomplish this, first recall that once we choose $U$ as a product of Pauli rotations, every operator appearing in the Pauli
expansion of $U$ is a \emph{subproduct} of the rotated Paulis. Consequently, when we evaluate expressions of the form
${\rm Tr}[S_1 U S_2 \bar{Q} U^\dagger]$,
the only potentially dangerous terms are those subproducts that have a nonzero action on the code space.
If some proper subproduct acts as a logical Pauli operator (up to stabilizers), it can interfere with the logical degrees of freedom
and spoil the input-agnosticity. Thus, a natural sufficient condition for extending Eq.~\eqref{eq:thm1-traceless} is to demand that
\emph{no proper subproduct survives on the code space}.

Toward this goal, we introduce a notion called {\it disjoint support partition}. For a given target logical operator $\bar{P}$, a set of physical Pauli operators$\{P_m\}_{m=1}^M$ is a disjoint support partition if
(i) $\bar P=\prod_{m=1}^M P_m$,
(ii) $M$ is an odd number, and
(iii) for any nonempty subset $R\subsetneq [M]$, it holds that $\Pi_0\Big(\prod_{m\in R}P_m\Big)\Pi_0=0$ where $\Pi_0$ is a projection operator onto the code space.
As we detail in Appendix~\ref{app:thm2-proof}, we have constructed an efficient algorithm to find such a partition of $M\leq d$ for arbitrary logical operators. Thanks to the property (iii),  a unitary defined as
\begin{equation}
U=\prod_{m=1}^M e^{i\theta P_m}
\end{equation}
can only multiply $\prod_{m\in R \subsetneq [M]}P_m$ upon conjugating a stabilizer $S$ as $U S U^\dag$, and hence it holds that
 ${\rm Tr}[S_1 U S_2 \bar{Q}U^\dagger] = 0$ for any stabilizers $S_1$ and $S_2$ and logical Pauli operator $\bar Q$.
This indicates that $U$ followed by the syndrome extraction yields an input-agnostic logical operation, and hence $U$ is a weakly transversal gate in the sense of Definition~\ref{def:weak-transversality}.

For the second step,  we first find that any product of Pauli operators from subset of disjoint support partition $\prod_{m=1}^M P_m^{b_m}~(b_m\in\{0, 1\})$ can be uniquely related to a Pauli correction operation $E_{{\bm s}({\bm b})}$ where $\bm s({\bm b})$ indicates the syndrome invoked by the Pauli operator.
Because $M$ is odd, every ${\bm b}$ has a unique complement ${\bm b}^c = {\bm 1} \oplus {\bm b}$ with $|{\bm b}^c|=M-|{\bm b}|$.
Using (i), we have $\prod_{m=1}^M P_m^{b_m} P_m^{b_m^c} = \bar{P}_{\rm target}$ and $E_{\bm s(\bm b)}=E_{{\bm s}({\bm b}^c)}$ up to global phase, thus
\begin{equation}
\Pi_0 E_{\bm s(\bm b)}E_{{\bm s(\bm b)}}\Pi_0= \Pi_0 E_{\bm s(\bm b)}E_{\bm s(\bm b^c)}\Pi_0 = \Pi_0. \label{eq:thm2-KL-1}
\end{equation}
Next, for any pair of $\bm b$ and $\bm b'$ satisfying ${\bm s(\bm b)}\neq {\bm s}(\bm b')$, their symmetric difference  corresponds to a nonempty proper subset
$R\subsetneq[M]$, hence using (iii) we have
\begin{equation}
\Pi_0 E_{\bm s(\bm b)}E_{{\bm s({\bm b}')}}\Pi_0 = 0,    \label{eq:thm2-KL-2}
\end{equation}
Combining Eqs. \eqref{eq:thm2-KL-1} and~\eqref{eq:thm2-KL-2}, the set $\{E_{\bm s(\bm b)}\}_{{\bm b}}$ satisfies a Knill--Laflamme-type orthogonality on the code space.
Therefore stabilizer syndrome measurement can discriminate errors $E_{\bm s}$, which in principle does not discriminate the terms $\prod_m P_m^{b_m}$ and its complement $\prod_m P_m^{b_m^c} = \bar{P}_{\rm target}\cdot \prod_m P_m^{b_m}$.

The above argument shows that, for a given syndrome $\bm s$,   only the Pauli operators $\prod_{m=1}^M P_m^{b_m}$ corresponding to bitstrings $\bm b$  with $\bm s=\bm s(\bm b),~|\bm b|\leq (M-1)/2$ and its complement contribute to the logical operation. Concretely, the Kraus operator can be written using $w:=|\bm b|$ as
\begin{equation}
K_{\bm s} \propto
E_{\bm s}\Big[
(\cos\theta)^{M-w}(i\sin\theta)^{w}\,\bar I
+(\cos\theta)^{w}(i\sin\theta)^{M-w}\,\bar P_{\rm target}
\Big]
= \sqrt{p_{\bm s}}\,E_{\bm s}\,e^{i\bar\theta_{\bm s}\bar P_{\rm target}},
\end{equation}
with
\begin{equation}
p_{\bm s}=
\big|(\cos\theta)^{M-w}(\sin\theta)^{w}\big|^2+
\big|(\sin\theta)^{M-w}(\cos\theta)^{w}\big|^2,
\qquad
\tan\bar\theta_{\bm s} = (-1)^{\frac{M-1}{2}-w}\tan^{\,M-2w}\theta .
\end{equation}
Since $M$ is odd, varying $\theta$ makes $\bar\theta_{\bm s}$ attain any target angle modulo a global phase with $p_{\bm s}>0$.
Combining first and second steps proves that the circuit yields a weak transversal implementation of the $\bar P_{\rm target}$-rotation gate.
\end{proof}

\subsection{General Pauli rotation in non-CSS codes} \label{subsec:non-css}

\begin{figure}[b]
    \centering
    \includegraphics[width=0.7\linewidth]{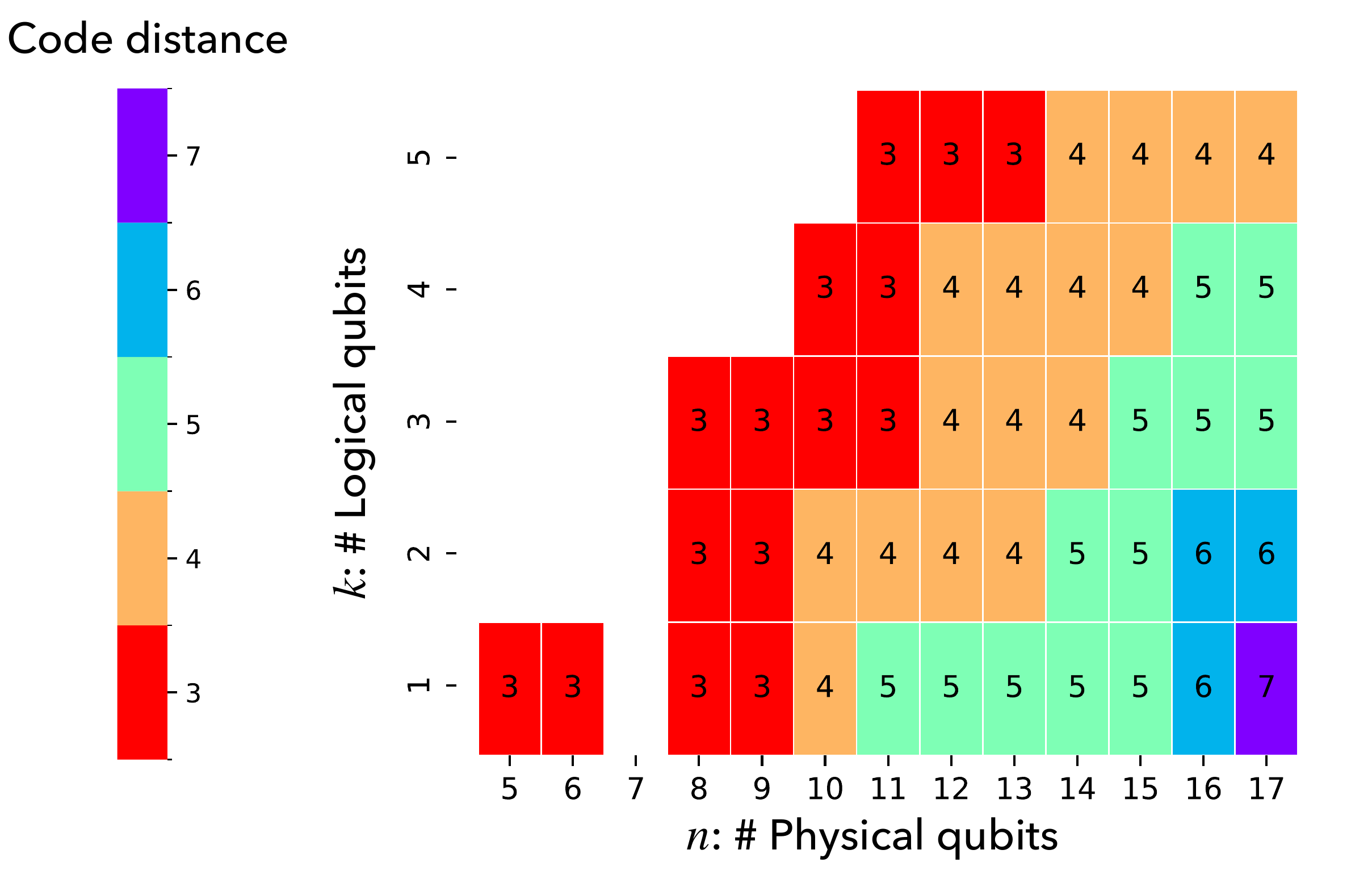}
    \caption{Numerical verification of weak transversal gates in non-CSS codes. The stabilizers are all taken from Ref.~\cite{codetable}. The color of the cells (and the inset integers) indicate the code distance. For all the non-CSS codes shown in this figure, weak transversal implementation of arbitrary Pauli rotation $e^{i \theta \bar{P}}~(\bar{P} \in \{\bar{I}, \bar{X}, \bar{Y}, \bar{Z}\}^{\otimes k})$ with $M \leq d$ is found.
    The missing cells are either CSS codes or invalid code parameters without logical operator.}
    \label{fig:non-css}
\end{figure}

In \cref{subsec:condition-multiqubit}, we have proved for CSS codes that logical Pauli rotation gates can be implemented probabilistically under some condition.
We cannot straightforwardly extend our results to non-CSS code since we have exploited the CSS property to construct disjoint support partition of logical operators; we leave it as an interesting future problem to investigate general non-CSS codes.

For future reference, here we report numerical results for small instances of non-CSS codes that are taken from Ref.~\cite{codetable}.
As shown in Fig.~\ref{fig:non-css}, we have confirmed that all non-CSS codes with $d\ge 3$ within our numerics admit weak transversal implementation  of arbitrary Pauli rotation $e^{i \theta \bar{P}}$ where $\bar{P} \in \{\bar{I}, \bar{X}, \bar{Y}, \bar{Z}\}^{\otimes k}$ using physical unitary with $M=3$ partitions.
Due to the current implementation for the calculation, we are limited to small number of logical qubits as $k \leq 5$, and also number of physical qubits as $n \leq 17$.

\section{Applications of weak transversal gates} \label{sec:applications}

We now demonstrate that weak transversal gates have versatile applications ranging from early- to long-term FTQC.
We first target the early-FTQC regime, where reducing the resource bottleneck of large Clifford+T circuits is more urgent than realizing an extremely low-error universal instruction set.
After introducing the repeat-until-success protocol to implement in-place Pauli rotation gate in Sec.~\ref{subsec:inplace-rot}, we incorporate it  into a factory-free, partially fault-tolerant architecture in Sec.~\ref{subsec:resource-est} and benchmark it against Clifford+T compilation with magic-state factories for a Trotter-like circuit.
The remaining subsections address higher-precision regimes in which state preparation becomes a central primitive.
In Sec.~\ref{subsec:stateprep}, we study short-depth preparation of high-level equatorial magic states and, in particular, the phase-gradient state, which is a catalytic resource for quantum Fourier transforms and phase-estimation routines.
We show that weak transversal gates with postselection improve the cost-accuracy tradeoff relative to state-of-the-art teleportation- or synthesis-based baselines.
In Sec.~\ref{subsec:msd}, we further show that segmented weak transversal rotations can substantially improve the output accuracy of magic state distillation with only a modest increase in circuit volume.

\subsection{Repeat-until-success protocol for in-place, addressable Pauli rotation gate}\label{subsec:inplace-rot}

Here, we aim to utilize weak transversal gates for non-fault-tolerant implementation of Pauli rotation gates.
Since the logical angles are realized randomly according to the syndrome measurement result, we consider a repeat-until-success strategy; at each step, we choose physical angles $\theta$ so that $\bar{\theta}$ is realized under some given syndrome $\bm s$.
If we obtain the desired syndrome, we terminate the iteration. If not, we again implement local unitaries, with updated physical angles so that the undesired logical angle is compensated. See Fig.~\ref{fig:rus_history}(a) for a graphical description.

Notable features of the current schemes are as follows.
\begin{itemize}
    \item {\it In-place implementation.} All the operations are closed within the logical patches and, importantly, do not require magic state factories. This implies that we can parallelize the non-Clifford gates, which are usually bottlenecked by the limited number of magic state factories.
    \item {\it Addressability.} The physical unitary is applied only to the chosen support of the target logical Pauli operator and is then followed by ordinary syndrome extraction. Therefore, even when a logical patch contains many physical qubits, the protocol can address a selected logical Pauli rotation without driving the entire patch by a strong transversal pattern.
    \item {\it Generality.} As discussed in Sec.~\ref{sec:theory}, the in-place Pauli rotation can be applied to a broad class of quantum error correcting codes. This includes many families of codes such as surface codes, color codes, and qLDPC codes that are considered to be leading candidates for fault-tolerant quantum computers.
    \item {\it On-the-fly protocol.} If the physical angles are homogeneously taken as $\theta$, then there is an analytical expression for one-to-one correspondence between the physical angle $\theta$ and the logical angle $\bar{\theta},$ as summarized in Lemma~\ref{lem:logical-theta-dist}. This is extremely helpful for the electronics, since we can calculate physical angles on-the-fly.
\end{itemize}

Since the physical angles are updated according to the syndromes, the proposed scheme is non-fault-tolerant. However, as we demonstrate in Sec.~\ref{subsec:resource-est}, in-place Pauli rotations are effective in the megaquop regime, i.e., where circuits require millions of gates under Clifford+T architecture, for two reasons. First, in such a regime, the logical error rate is moderate so that the overhead of some fault-tolerant instructions may exceed that of non-fault-tolerant ones. Second, the number of magic state factories are limited and hence parallelization by in-place, addressable operation plays a significant role.

It is useful to introduce a  convenient formula connecting the syndrome measurement result and the realized logical angle.
While we defer the proof to Appendix~\ref{app:thm2-proof}, we state the result here.
\begin{lemma}\label{lem:logical-theta-dist}
Let $C$ be an $[[n, k, d]]$ CSS code, and let $M \leq d$ be an odd integer. Let the following  be a unitary that is constructed to be a disjoint support partition of $\bar{P}_{\rm target}$,
$$U= \prod_{m=1}^M e^{i \theta X^{\bm A_m} Z^{\bm B_m}}.$$
Then, the logical operation of the unitary followed by the
syndrome measurement $\mathcal{M}$ and recovery map $\mathC$ is given as
$$\mathC \circ \mathM \circ \mathU(\bar{\rho}) = \sum_{\chi=0}^{ \frac{M-1}{2}} p_{\chi} \bar{\mathR}_{\bartheta_\chi}(\bar{\rho}),
$$ where $\bar{\mathR}_{\bartheta}(\cdot):= e^{i \bartheta \barP}(\cdot)e^{-i \bartheta \barP}$ is a logical rotation channel and $\chi$ is the weight of correction operation $C_{\bm s}$ given a syndrome $\bm s$. The probabilities and angles are given as
\begin{align}
    p_{\chi} &= \begin{pmatrix}
        M \\
        \chi
    \end{pmatrix} \left(|\cos^{M-\chi}\theta \sin^\chi \theta|^2 + |\cos^\chi \theta \sin^{M - \chi}\theta|^2\right), \label{eq:p_chi}\\
    \bar{\theta}_\chi &= \arctan \left((-1)^{\frac{M-1}{2}-\chi} \tan^{M- 2\chi}\theta\right). \label{eq:logical-theta-expression}
\end{align}
\end{lemma}

Now
we are ready to describe the repeat-until-success strategy.
Assume that one aims to implement $e^{i \barthetatarget \bar{P}_{\rm target}},$ and one has performed its weak transversal implementation that resulted in logical angle of $\bartheta.$
If  $\bartheta=\barthetatarget$, then one has succeeded in the implementation, so one can simply proceed to the next gate.
If not, one must compensate for the deviation in the next step; compute the physical angle $\theta$ using Lemma~\ref{lem:logical-theta-dist} so that $\barthetatarget - \bartheta$ is realized in the next step with as high a probability as possible. More concretely, the steps can be summarized as follows.

\begin{figure}[t]
    \centering
    \includegraphics[width=0.99\linewidth]{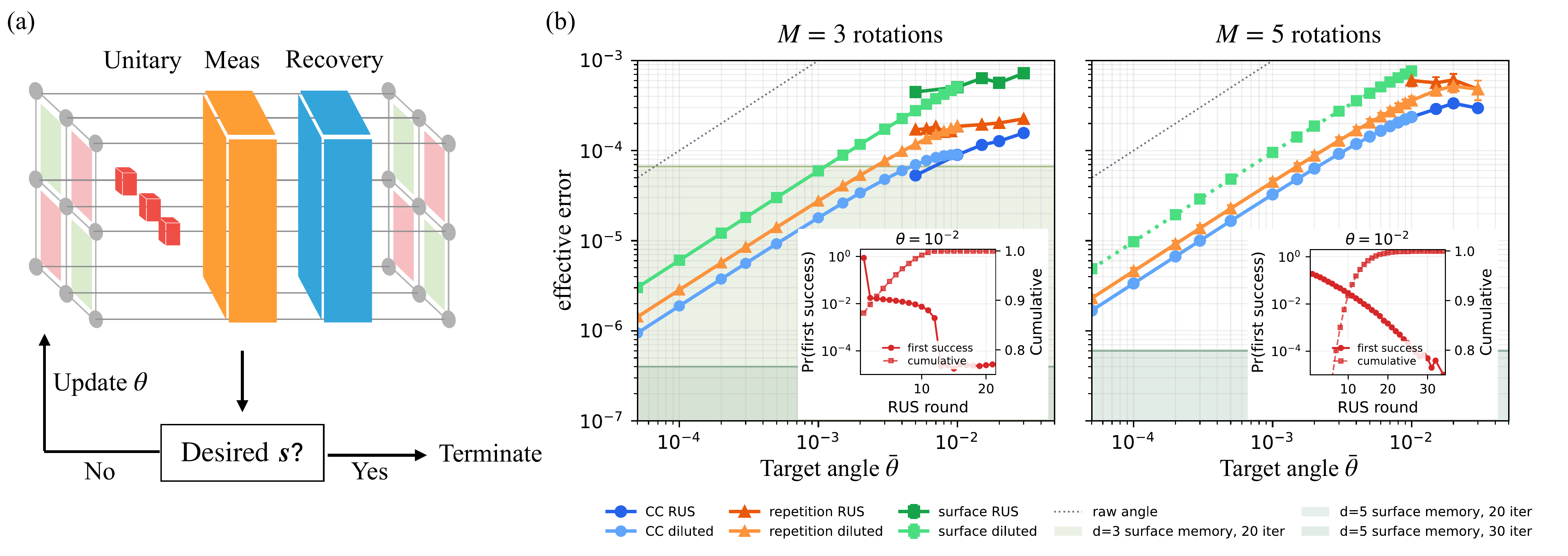}
    \caption{Repeat-until-success implementation of in-place Pauli rotations.
    (a) Schematic of the protocol.
    Physical Pauli rotations are applied on the support of the target logical Pauli operator, syndrome extraction identifies the realized logical branch, and the physical angle for the next attempt is updated unless the target angle has already been reached.
    (b) Numerical evaluation of the effective error induced by the protocol when $M=3$ and $M=5$ physical rotation gates are used, with $\pphys=10^{-4}$.
    We show ordinary RUS and diluted RUS data for the code-capacity simulation (CC) and for circuit-level simulations based on repetition and rotated surface codes.
    The insets show the history of first-success and cumulative-success probabilities  at $\bartheta=10^{-2}$.
    The dotted surface curve in the $M=5$ panel is an estimate obtained by multiplying the data of $d=M=5$ repetition code simulation data following the error rate ratio between $d=M=3$ repetition and surface codes.
    The faint and thick shaded band indicates the memory error of $d=3$ and 5 surface codes under accumulation over maximal count of iterations.}
    \label{fig:rus_history}
\end{figure}

\begin{enumerate}
    \item[0.] Initialize the cumulative logical angle as $\bar{\varphi}^{(l=0)} = 0,$ where $l$ denotes the count of iteration.
    \item[1.] Set the target logical angle as $\barthetatarget - \bar{\varphi}^{(l-1)},$ and calculate the physical angle for the $l$-th step, $\theta^{(l)}$, from Eq.~\eqref{eq:logical-theta-expression}, such that the target rotation's probability is maximized.
    \item[2.] For each ${\bm s}^{(l)}$, compute the realized angle $\bar{\theta}_{{\bm s}^{(l)}}$.
    If $\bar{\varphi}^{(l)}: = \bar{\varphi}^{(l-1)} + \bar{\theta}^{(l)}_{{\bm s}^{(l)}}$ satisfies $\bar{\varphi}^{(l)}= \barthetatarget$, then terminate the iteration. Otherwise, record the syndrome ${\bm s}^{(l)}$ and the cumulative logical angle $\bar{\varphi}^{(l)}$ for each syndrome.

    \item[3.] If terminated, compute the accumulation of Pauli operation during the protocol, and either apply the correction or record as Pauli frame to keep the state in the code space.
    If not terminated, update the iteration count as $l \leftarrow l+1$, and return to Step 1.
\end{enumerate}

If there is no hardware error, the RUS strategy systematically improves the precision as the number of attempts increases, in analogy with RUS protocols based on gate teleportation~\cite{duclos2015reducing, campbell2016_efficient}.
The relevant question is therefore how much error is induced by syndrome-extraction faults during the protocol.
We first study this for $M=3$ case using the code-capacity noise model where the syndrome extraction measurement result is flipped by $\pphys=10^{-4}$.
For a target angle $\bartheta=10^{-2}$, about $90\%$ of branches terminate in the first round, more than $99\%$ terminate within 10 rounds, and more than $99.9\%$ terminate within 20 rounds.
At this truncation depth, the induced logical operation has diamond distance below $10^{-4}$ from the target rotation.

We then compare this code-capacity estimate with circuit-level simulations using a $d=3$ bitflip code and a $d=3$ rotated surface code, as shown in Fig.~\ref{fig:rus_history}(b).
At $\bartheta=10^{-2}$, the circuit-level estimates are larger than the code-capacity value by roughly factors of two and four, respectively.
The dominant mechanism is syndrome misclassification caused by faults in the extraction circuit: a single fault can change the inferred value of $\chi$ in Eq.~\eqref{eq:logical-theta-expression}, and hence select the wrong compensation angle in the next RUS attempt.
Especially for small angles $\bartheta \ll 1$, the no-syndrome branch is misclassified into finite syndrome with probability  $\pphys M$, and results in unnecessary $\bartheta$ correction by further RUS. Although this simple argument may lead to scaling of $O(\pphys M \bartheta)$, the additional error during the preceding RUS protocol seems to yield sublinear scaling with $\bartheta$ in the plotted region.
Details of the numerical implementation and decoder choices are also given in Appendix~\ref{app:rus-numerics}.

One way to improve the logical error rate is to use mixed synthesis~\cite{hastings2016turning, yoshioka2024hunting, akibue2024probabilistic}.
Mixed synthesis probabilistically combines implementable unitary channels whose coherent mismatch from the desired target cancels to higher order.
In the present setting, we mix a noisy RUS implementation at a larger angle with an almost noiseless identity channel to synthesize a smaller target angle.
Because the weight of the larger-angle channel is proportional to the desired target angle, the stochastic contribution from RUS faults is also suppressed linearly in $\bartheta$.
The residual mismatch caused by mixing channels with different angles is much smaller than this stochastic contribution in the parameter regime considered here, so the diluted RUS protocol achieves the scaling $\mathcal{O}(\pphys \bartheta)$; see Appendix~\ref{app:mixed-synthesis}.
For $\bartheta=10^{-3}$, the $M=3$ data give ordinary-RUS effective errors of $1.2\times10^{-4}$, $1.2\times10^{-4}$, and $1.3\times10^{-4}$ for the code-capacity simulation and circuit-level simulations of repetition and surface codes, respectively.
After dilution, these values become $1.8\times10^{-5}$, $2.8\times10^{-5}$, and $6.0\times10^{-5}$.
Note that this is comparable with memory error of $d=3$ surface code ($\sim 5\times 10^{-5}$), while that for $d=5$~($\sim 5\times10^{-7}$) is significantly lower.
Repeating the diluted comparison for $M=5$ gives $3.3\times10^{-5}$, $4.5\times10^{-5}$, and $9.6\times10^{-5}$, where the last value is estimated from the ratio between repetition and surface codes in $M=3$.

It is natural to expect that the error in RUS protocol is  governed by the number of physical rotations $M$, rather than the code distance alone.
Namely, the protocol is limited by faults that corrupt the syndrome information used to identify the realized branch, rather than by the memory distance of the underlying code.
As explained in Appendix~\ref{app:scheduling-d7-surface},  the CNOT schedule of syndrome extraction can be intentionally modified so that inserting a single-qubit Pauli rotation in the middle of the extraction circuit realizes an effective multiqubit rotation, which
 makes it possible to use the same value of $M$ even in larger-distance surface-code patches.
The modified schedule intentionally lowers the fault distance for memory errors and therefore allows memory errors to accumulate more easily, but for the rotation angles considered in Fig.~\ref{fig:rus_history}(b), the RUS-induced error is still the dominant contribution.
In the remainder of this work, we use the estimated error for further numerical evaluation.

\subsection{Megaquop regime: partially fault-tolerant architecture} \label{subsec:resource-est}

\begin{figure}[b]
    \centering
    \includegraphics[width=0.99\linewidth]{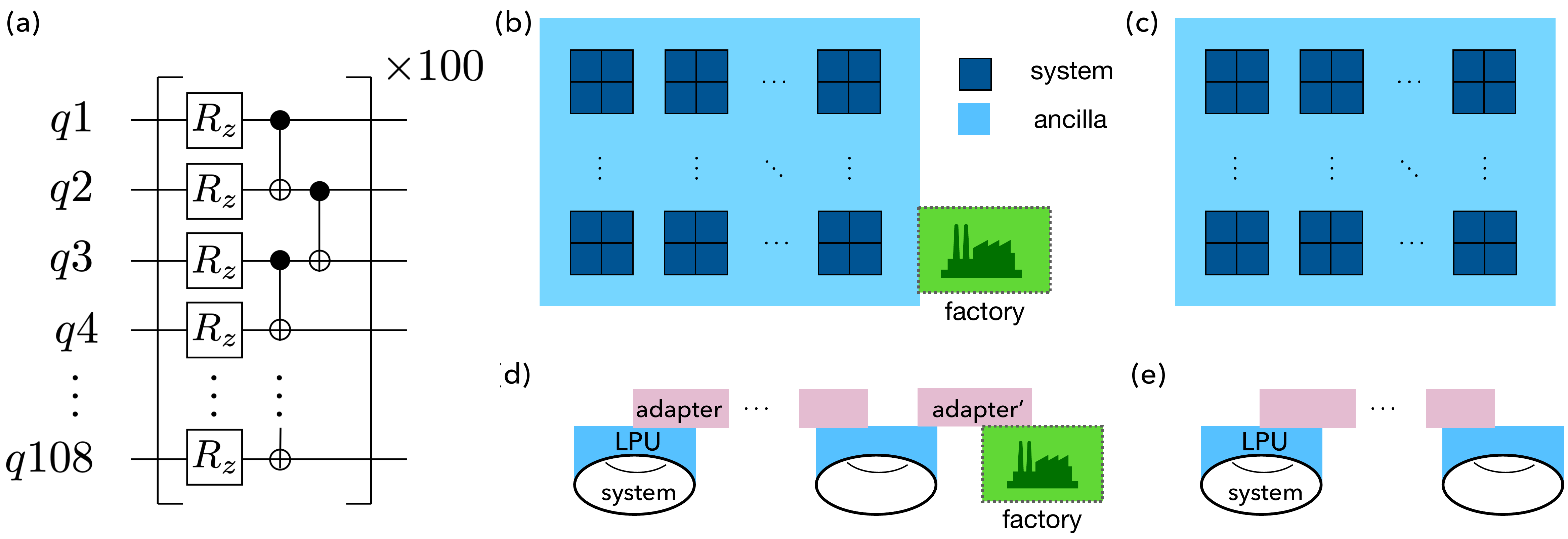}
    \caption{(a) Benchmarked quantum circuit of $N=108$ logical qubits. In each unit of gates, qubits individually undergo a $Z$ rotation with angle $0.001$ followed by two layers of CNOT gates on even and odd edges. Such a unit is repeated 100 times. For architectures, we consider (b) surface code using Clifford+T compilation with 1 to 64 magic-state factories, (c) surface code using in-place Pauli rotation, (d) [[144,12,12]] bivariate bicycle code (known as gross code) using Clifford+T compilation, and (e) gross code using in-place Pauli rotation. The code distance for the surface code is $d=7$, and that for magic state factories is $d=7$ for both surface and gross codes.
    }
    \label{fig:architecture}
\end{figure}

Early fault-tolerant quantum computers are unlikely to support the full overhead of high-throughput magic-state factories and long Clifford+T synthesis pipelines.
Therefore, partially fault-tolerant architectures  offer a natural route toward early practical quantum advantage by combining reliable Clifford operations with carefully controlled non-fault-tolerant primitives.
Here we consider Trotterized quantum dynamics as a benchmark, because (i) quantum dynamics is considered as the primary candidate for practical quantum advantage~\cite{campbell2022early, beverland2022assessing, akahoshi2025compilation}, and (ii) it contains many small-angle Pauli rotations and hence directly exposes the resource bottleneck that in-place weak-transversal rotations are designed to remove.
The performance of each architecture is quantified by the following figures of merit:
\begin{itemize}
    \item {\it Timesteps $\tau$.} Assuming that every physical operation takes the same amount of time, we call one such unit a timestep~\cite{yoder2025tour}.
    The total number of timesteps $\tau$ estimates the wall-clock time of circuit execution. Physical $R_{\rm Z}$ gates are implemented as virtual gates and hence are assumed to take no timestep~\cite{mckay2017efficient}.
    \item {\it Active Physical Qubits $N_{\rm phys}$.} If a qubit is idling during the entire execution of the circuit, we regard such a qubit to be not active. Otherwise, the qubit is active.
    We count the number of active physical qubits $N_{\rm phys}$ to quantify the spatial cost of the circuit.
    \item {\it Circuit volume $V$.} The circuit volume is defined as $V := \tau \cdot N_{\rm phys}$.
    \item {\it Total logical error rate $\bar{p}_{\rm tot}$.} When the logical error rate of the $i$-th operation is $\bar{p}_i$, we estimate the total logical error rate by the first-order sum $\bar{p}_{\rm tot}:= \sum_i \bar{p}_i.$
\end{itemize}

\begin{figure}[t]
    \centering
    \includegraphics[width=0.95\linewidth]{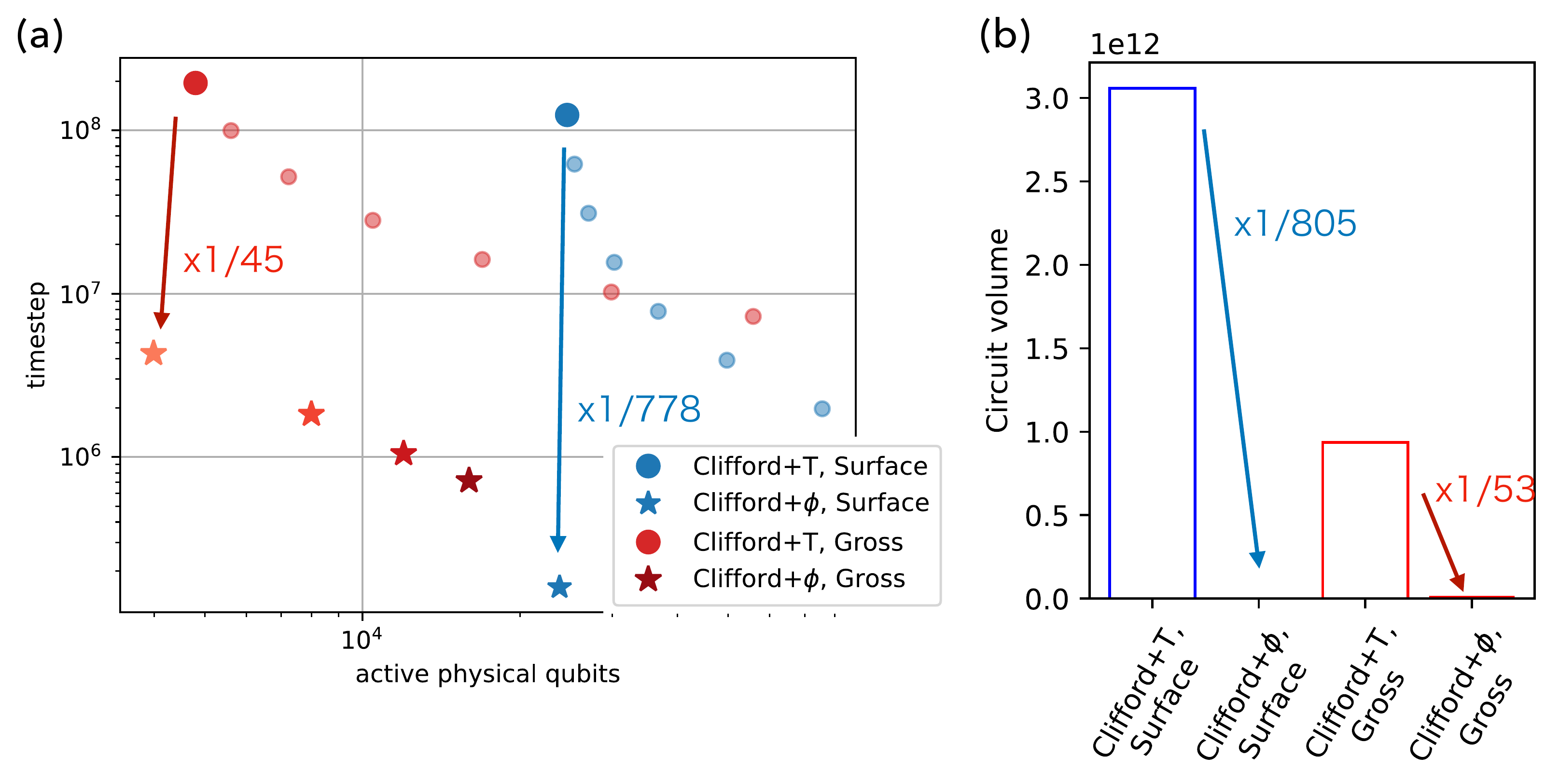}
    \caption{Resource estimation of (a) timesteps and active physical qubits and (b) the circuit volume for the circuit shown in Fig.~\ref{fig:architecture}(a).
    Compared to the Clifford+T architecture with a single magic state factory, the timesteps are reduced by factors of 780 and 45 for surface ($d=7$) and gross code, respectively, and the circuit volume is reduced by factors of 805 and 53. The faint circles indicate optimistic Clifford+T estimates with 2, 4, 8, 16, 32, and 64 magic-state factories, assuming that the runtime of synthesis can be reduced linearly with the number of factories. The data points for Clifford+$\phi$ architecture in gross code indicate that we use 10, 20, 30, and 40 modules, which enables distributed allocation of system qubits and hence more parallelized execution of CNOT gates.
    For both codes, the number of physical rotations for the weak transversal gates is $M=5$, and the physical error rate is assumed to be $\pphys=10^{-4}$.
    }
    \label{fig:resource-est}
\end{figure}

We evaluate these quantities for the four architectures shown in Fig.~\ref{fig:architecture}.
Throughout the estimates, we assume superconducting physical qubits and take the physical error rate of each operation to be $\pphys=10^{-4}$. Each Pauli rotation has angle $0.001$ and is assumed to be implemented using $M=5$ physical rotations with a logical error rate of $9.6\times10^{-5}$, following the numerical results from the previous subsection.
We choose parameters for circuit synthesis and distillation protocols so that the total error rate satisfies $\bar{p}_{\rm tot} \sim 1$, reflecting that error mitigation techniques remain available on top of error correction~\cite{tsubouchi2023universal, takagi2023universal, quek2022exponentially}.
The detailed formulas are deferred to Appendix~\ref{app:resource-est-details}; here we summarize the architectural differences.
\begin{itemize}
    \item[] {\bf Surface code,  Clifford+T.} (Fig.~\ref{fig:architecture}(b)) Clifford gates are implemented using standard surface-code operations: CNOT by lattice surgery~\cite{horsman2012surface}, Hadamard by a transversal Hadamard followed by logical patch rotation~\cite{fowler2018low, litinski2019game}, and S by lattice surgery~\cite{litinski2019game}. Non-Clifford rotations are synthesized into Clifford+T circuits, and each T gate is implemented by magic-state teleportation. We assume one level of magic-state distillation~\cite{litinski2019magic}.
    \item[] {\bf Surface code, Clifford+$\phi$.} (Fig.~\ref{fig:architecture}(c)) Clifford gates are implemented as above, while logical $\bar{R}_{\rm Z}$ gates are executed directly as in-place Pauli rotations. Each in-place rotation uses the $M=5$ gadget composed of two $R_{\rm ZZ}$ gates and three $R_{\rm Z}$ gates. Since each rotated surface-code patch encodes a single logical qubit, these logical $Z$ rotations can be applied in parallel over all patches.
    \item[] {\bf Gross code, Pauli-based computation.} (Fig.~\ref{fig:architecture}(d)) We compile the Clifford+T circuit into Pauli-based computation, where Clifford operations are absorbed by propagating Pauli operators through the Clifford circuit~\cite{bravyi2016trading,litinski2019game,yoder2025tour}. The resulting instructions are scheduled using the bicycle-code ISA, consisting of idle operations, shift automorphisms, and in-module/inter-module Pauli measurements. Each non-Clifford operation appears as a T-type Pauli rotation $e^{-i\pi P/8}$, equivalently $R_P(\pi/4)$ up to phase convention, for a possibly multi-body Pauli operator $P$, and is implemented by T-state injection consuming one magic state. We assume one level of distillation in a distance-7 surface-code magic-state factory, connected to the gross-code module through an adapter.
    \item[] {\bf Gross code, Clifford+$\phi$.} (Fig.~\ref{fig:architecture}(e)) Logical $\bar{R}_{\rm Z}$ gates are executed directly as in-place weak-transversal Pauli rotations using three $R_{\rm ZZ}$ gates and two $R_{\rm ZZZ}$ gates. CNOT gates are implemented by measurement-based Clifford operations. For in-module CNOTs, Pauli-product measurements are mediated by a pivot logical qubit; for CNOTs between different modules, the same measurement-based pattern is realized using inter-module Pauli measurements through the gross-code adapters.
\end{itemize}

The resulting resource estimates, for the circuit shown in Fig.~\ref{fig:architecture}(a), are shown in Fig.~\ref{fig:resource-est} and Table~\ref{tab:resource-est-total}.
We find that a drastic resource reduction is achieved in both surface and gross codes.
By replacing synthesized rotations with in-place rotations, the total timestep $\tau$ is reduced by a factor of 778 for the surface code and by a factor of 45 for the gross code.

For the surface code, the dominant bottleneck of Clifford+T execution is the limited supply of magic states for synthesized rotation gates. The Clifford+$\phi$ architecture resolves such a limitation, and hence reduces the number of timesteps by a factor of 778.
We also estimate an optimistic multi factory model in which the contribution of unitary synthesis for Pauli rotation  decreases linearly with the number of magic-state factories. Even under this assumption, the additional factory footprint keeps the Clifford+T estimates far above the Clifford+$\phi$ architecture in circuit volume.

For the gross code, the runtime improvement from single-factory architecture to factory-free architecture is factor of 45, which is less pronounced because the CNOT layers are comparatively expensive. This overhead originates from the fact that logical operators are not completely disjoint within one gross-code module and that in-module CNOTs are mediated by pivot logical qubits. Consequently, not all in-place rotations or pivot-mediated CNOTs can be performed simultaneously, and we schedule only mutually compatible groups in parallel. We find that such an overhead can be reduced by distributing the $N=108$ system qubits over larger number of gross-code modules. In Table~\ref{tab:resource-est-total}, the 10-, 20-, 30-, and 40-module Clifford+$\phi$ estimates restrict the maximum number of system logical qubits per module to 11, 6, 4, and 3, respectively, which increases the parallelism of the CNOT layers. Since the standard gross-code Clifford+T estimate is based on Pauli-based computation, we only report the single-factory estimate; increasing the number of factories does not translate into the same simple parallelization model as in the surface-code Clifford+T estimate. Further improvements may be possible by using additional pivot qubits or more optimized schedules. We leave such an investigation as a future work.

\begin{table*}[t]
\centering
\begin{tabular}{|c|c|c|c|c|c|c|}
\hline
Architecture & Code & \# module & \# factory & Timestep $\tau$ & Active physical qubits $N_{\rm phys}$ & Total error rate $\bar{p}_{\rm tot}$ \\
\hline \hline
Clifford+T & \multirow{7}{*}{Surface} & 108 & 1 & $124{,}216{,}320$ & $24{,}624$ & $1.12191$ \\
\cline{1-1}\cline{3-7}
Clifford+T &  & 108 & 2 & $62{,}124{,}960$ & $25{,}434$ & $1.12191$ \\
\cline{1-1}\cline{3-7}
Clifford+T &  & 108 & 4 & $31{,}079{,}280$ & $27{,}054$ & $1.12191$ \\
\cline{1-1}\cline{3-7}
Clifford+T &  & 108 & 8 & $15{,}556{,}440$ & $30{,}294$ & $1.12191$ \\
\cline{1-1}\cline{3-7}
Clifford+T &  & 108 & 16 & $7{,}795{,}020$ & $36{,}774$ & $1.12191$ \\
\cline{1-1}\cline{3-7}
Clifford+T &  & 108 & 32 & $3{,}914{,}310$ & $49{,}734$ & $1.12191$ \\
\cline{1-1}\cline{3-7}
Clifford+T &  & 108 & 64 & $1{,}973{,}955$ & $75{,}654$ & $1.12191$ \\
\hline
Clifford+T & Gross & 10 & 1 & $194{,}998{,}776$ & $4{,}801$ & $0.83833$ \\
\hline \hline
Clifford+$\phi$ & Surface & 108 & 0 & $159{,}600$ & $23{,}814$ & $1.03742$ \\
\hline
Clifford+$\phi$ & \multirow{4}{*}{Gross} & 10 & 0 & $4{,}426{,}800$ & {\bf $3{,}991$} & $1.03714$ \\
\cline{1-1}\cline{3-7}
Clifford+$\phi$ & & 20 & 0 & $1{,}531{,}200$ & $7{,}991$ & $1.03707$ \\
\cline{1-1}\cline{3-7}
Clifford+$\phi$ & & 30 & 0 & $1{,}070{,}400$ & $11{,}991$ & $1.03713$ \\
\cline{1-1}\cline{3-7}
Clifford+$\phi$ & & 40 & 0 & $746{,}400$ & $15{,}991$ & $1.03713$ \\
\hline
\end{tabular}
\caption{
Total resource estimates for the benchmark circuit of Fig.~\ref{fig:architecture}(a), with $N=108$ logical qubits, $100$ circuit layers, $\bar{\theta}=0.001$, and $\pphys=10^{-4}$.
For surface-code Clifford+T compilation, the table sweeps $1,2,4,8,16,32,64$ distance-$7$ magic-state factories at synthesis precision $\varepsilon_{\rm syn}=10^{-4}$, assuming that the rotation-synthesis contribution to the timestep is parallelized by the number of factories while the first-order logical error is set by the same gate count.
For gross-code Clifford+T compilation, we only report the single-factory estimate, since increasing the number of factories does not directly translate into the same simple parallelization model. The gross-code Clifford+T row uses $\lceil N/11\rceil=10$ gross modules plus one factory, because one logical slot per module is reserved as a pivot/mediator slot in this comparison. The 10-module gross-code Clifford+$\phi$ row uses the same number of gross modules without a factory.
For gross-code Clifford+$\phi$ compilation, the listed variants use the locked-pivot estimates for the indicated number of gross modules.
The total error rate is the first-order sum $\bar{p}_{\rm tot}=\sum_i \bar{p}_i$.
}
\label{tab:resource-est-total}
\end{table*}

\subsection{Gigaquop regime: short-depth state preparation circuits} \label{subsec:stateprep}
Short-depth state preparation is a heavily used subroutine in high-accuracy quantum algorithms.
Representative examples include GHZ/cat-like entangled states for quantum-enhanced learning~\cite{Leibfried2004, giovanetti2006quantum, ma2024learning}, tensor-network states for many-body simulation~\cite{malz2024_mpslog,smith2024_mpsconst,smith2023_aklt}, and phase-gradient states used for coherent phase rotations in quantum arithmetic and phase estimation~\cite{Gidney_2018}.
Here we use phase-gradient-state preparation as a benchmark, because it is built from a sequence of small-angle non-Clifford rotations, possibly interleaved with Clifford operations.

\begin{table}[b]
    \centering
\begin{tabular}{c|c|c}
Method                                                          & Logical Error Rate                 & Logical qubitcycle $V_L$  \\
\hline \hline
Teleport, Li~\cite{li2015magic}                                                            & $O(p)$              & $O(d)$  \\
\hline
Teleport, Li+MEK~\cite{meier2012magic, campbell2016_efficient}                         & $O(p^2)$            & $O(dl)$ \\
\hline
Teleport, transversal injection~\cite{choi2023fault, toshio2025practical}                                         & $O(p 2^{-l})$       & $14 d +O(d^2 2^{-2l/d})$   \\
\hline \hline
\begin{tabular}[c]{@{}c@{}}Weak transversal gate\\ (general, with postselection)\end{tabular}    & $O(p2^{-2l(1-1/M)}r^{2/M-1})$ & $rd + O(dMr^{2-2/M}2^{-2l/M})$    \\
\hline
\end{tabular}
    \caption{Asymptotic scaling of logical error rate and logical qubitcycle for small-angle gadget of $l(\gg 1)$-th level of Clifford hierarchy. We assume that logical qubit is of distance $d$.
    The weak transversal row assumes $r$ segments and postselection.
    }
    \label{tab:small-angle-subroutines}
\end{table}

We compare two conceptually different routes to small-angle rotations: teleportation-based and weak transversal methods.
Teleportation-based methods first prepare an equatorial magic state, either by angle-agnostic injection~\cite{li2015magic}, angle-agnostic injection followed by MEK distillation~\cite{meier2012magic, campbell2016_efficient}, or angle-dependent injection~\cite{choi2023fault, toshio2025practical}, and then consume it through a teleportation gadget.
These methods inherit a fallback recursion: an unfavorable measurement outcome at level $l$ requires a corrective rotation at level $l-1$, so the total gate error can be dominated by lower hierarchy levels, as in partial FTQC architectures.
Weak transversal gates take a more direct route.
Syndrome-detected attempts can simply be discarded, so postselection removes the fallback recursion and preserves the superlinear small-angle error scaling of the weak transversal gadget.
This scaling can be further improved by segmenting the target rotation into smaller angles, at the cost of additional attempts.

In the following, for the sake of comparison with prior work, we assume that logical qubits are encoded in rotated surface code.
We measure spacetime cost in what we refer to as {\it logical qubitcycles}, denoted by $V_L$.
When a logical patch has fixed size, $V_L$ is simply the sum of active logical-patch code cycles.
When the size of logical patches changes during a protocol, on the other hand, we convert the physical qubitcycles $V$ as $V_L=V/(2d^2)$, where $d$ is the final code distance.

To isolate the basic mechanism, we first consider a single Pauli rotation gate $R_{\rm Z}(\pi/2^l)$ with $l\geq 3$.
Table~\ref{tab:small-angle-subroutines} summarizes the asymptotic scaling of the relevant small-angle subroutines (see Appendix~\ref{app:small-angle-gate} for details), which is numerically demonstrated in Fig.~\ref{fig:state-prep} for $\pphys=10^{-4}$ on a $d=11$ rotated surface code.
Indeed, the angle-agnostic teleportation approach (MEK) is competitive for the lowest levels of the Clifford hierarchy, while weak transversal gates become increasingly favorable for smaller angles.
For $l=10$, a weak transversal gate with $r=10$ segments reaches an error rate comparable to MEK-distilled teleportation with roughly a $10^2$-fold smaller logical qubitcycle cost.
For $l=11$, the weak transversal gate with $r=1$ reaches a similar error scale with approximately a $10^3$-fold smaller cost.
In this small-angle regime, the postselection failure probability is only a few percent, so the cost is close to the baseline value $rd$.

\begin{figure*}[h]
    \centering
\includegraphics[width=1.\linewidth]    {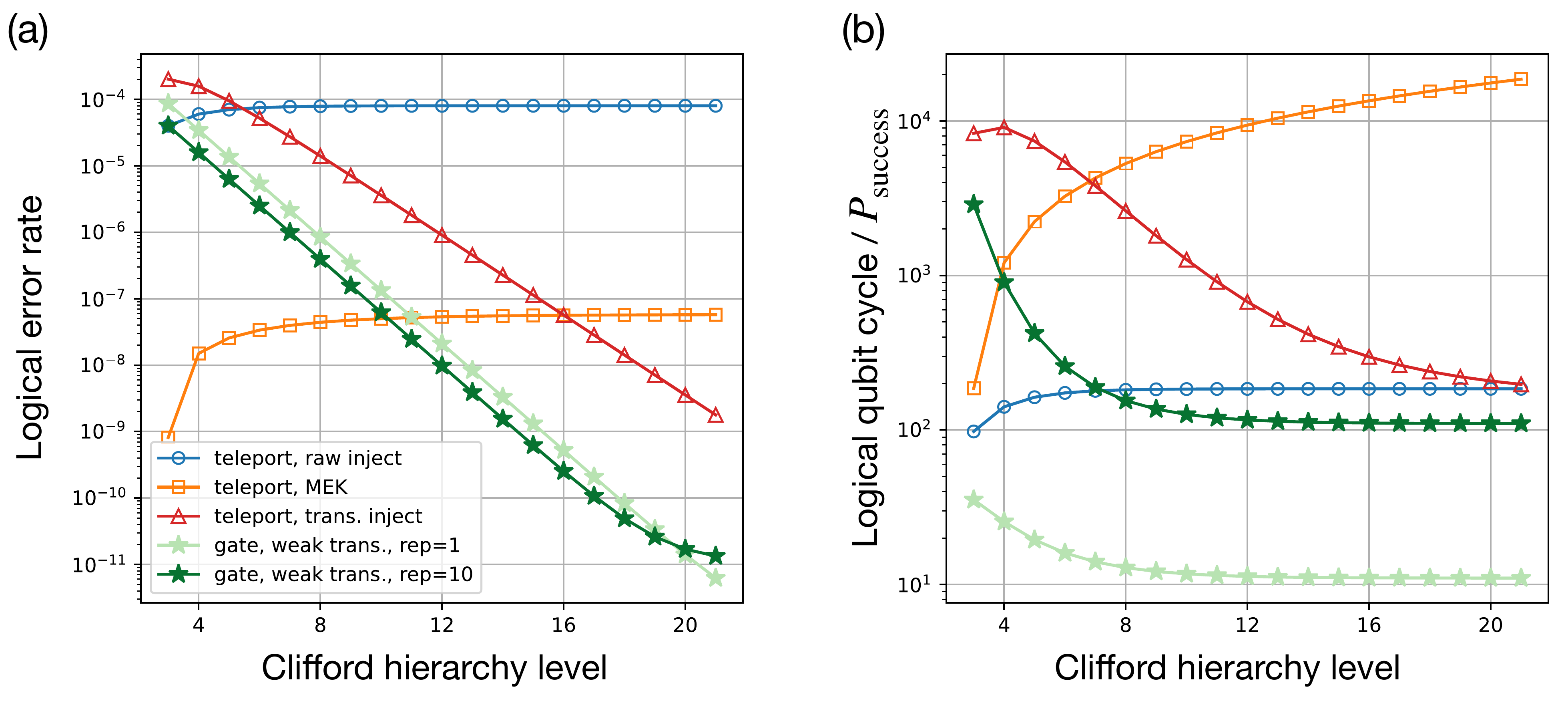}
    \caption{
Resource estimates for implementing $R_{\rm Z}(\pi/2^k)$ on a distance-$11$ rotated surface code with $\pphys=10^{-4}$.
Panel (a) compares logical error rates, and panel (b) compares logical qubitcycles per successful preparation.
For the weak transversal gate, we assume $M=3$ physical rotations.
     }
    \label{fig:state-prep}
\end{figure*}

The single-rotation benchmark is useful, but many state-preparation tasks require a collection of small rotations rather than one isolated gate.
We therefore next apply the same comparison to circuits of the form
\begin{align}
  \prod_{k=1}^K R_{\rm Z}(\theta_k) |G\rangle,
\end{align}
where $|G\rangle$ is a $K$-qubit graph state.
This form includes Trotterized Hamiltonian simulation circuits~\cite{suzuki1985general}, partial execution of instantaneous quantum polynomial circuits~\cite{bremner2011classical}, and phase-gradient state preparation.
We focus on the $K$-qubit phase-gradient state~\cite{gidney2016gradients2qfts, Gidney_2018},
\begin{align}
|{\rm PG}_K\rangle := \left(\bigotimes_{k=1}^K R_{\rm Z}\left(\frac{\pi}{2^k}\right) \right) |+\rangle^{\otimes K},
\end{align}
which can catalyze phase rotations and reduce the $T$ count of quantum Fourier transforms from $O(n^2\log 1/\epsilon)$ to $O(n^2)$.

This formulation makes the phase-gradient state a direct test of whether small-angle components should be prepared, neglected, or handled by weak transversal gates.
We compare protocols that either prepare the required equatorial magic states by MEK distillation, discard high-level components, or use a hybrid strategy.
In the hybrid strategy, MEK distillation is used up to level $K'$, while the remaining smaller rotations are prepared using segmented weak transversal gates with postselection.

\begin{figure*}[t]
    \centering
\includegraphics[width=0.65\linewidth]    {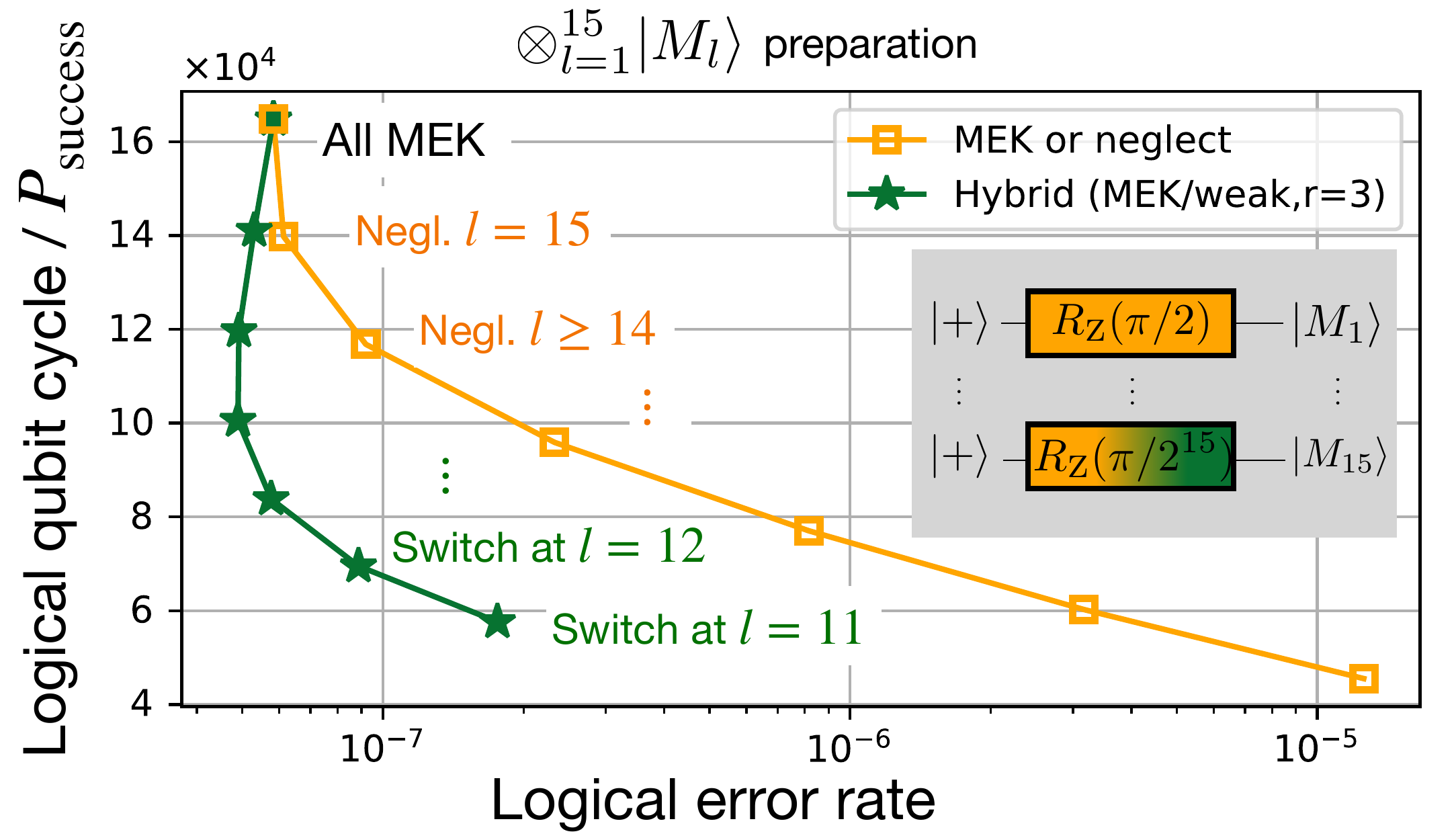}
    \caption{
Logical error and spacetime tradeoff for preparing the phase-gradient state $|{\rm PG}_{K}\rangle$ with $K=15$, assuming $\pphys=10^{-4}$ and a distance-$13$ surface code.
Orange points use MEK distillation up to a cutoff and neglect the remaining higher-level components.
Green points use hybrid protocols that combine MEK distillation for lower levels with weak transversal gates for higher levels.
}
    \label{fig:state-prep-tradeoff}
\end{figure*}

Figure~\ref{fig:state-prep-tradeoff} shows the resulting tradeoff for $K=15$.
The data show that increasing the number of weak-transversal segments is not uniformly optimal.
For larger angles, segmentation can be costly because the lower success probability increases the expected volume; for smaller angles, the same mechanism gives a favorable error-volume tradeoff.
The hybrid protocol also improves over the simple strategy of neglecting the highest-level phase-gradient components~\cite{Gidney_2018}: weak transversal gates make those small-angle components inexpensive enough to include, improving accuracy without the full cost of MEK distillation.

\begin{figure*}[b]
    \centering
\includegraphics[width=0.85\linewidth]    {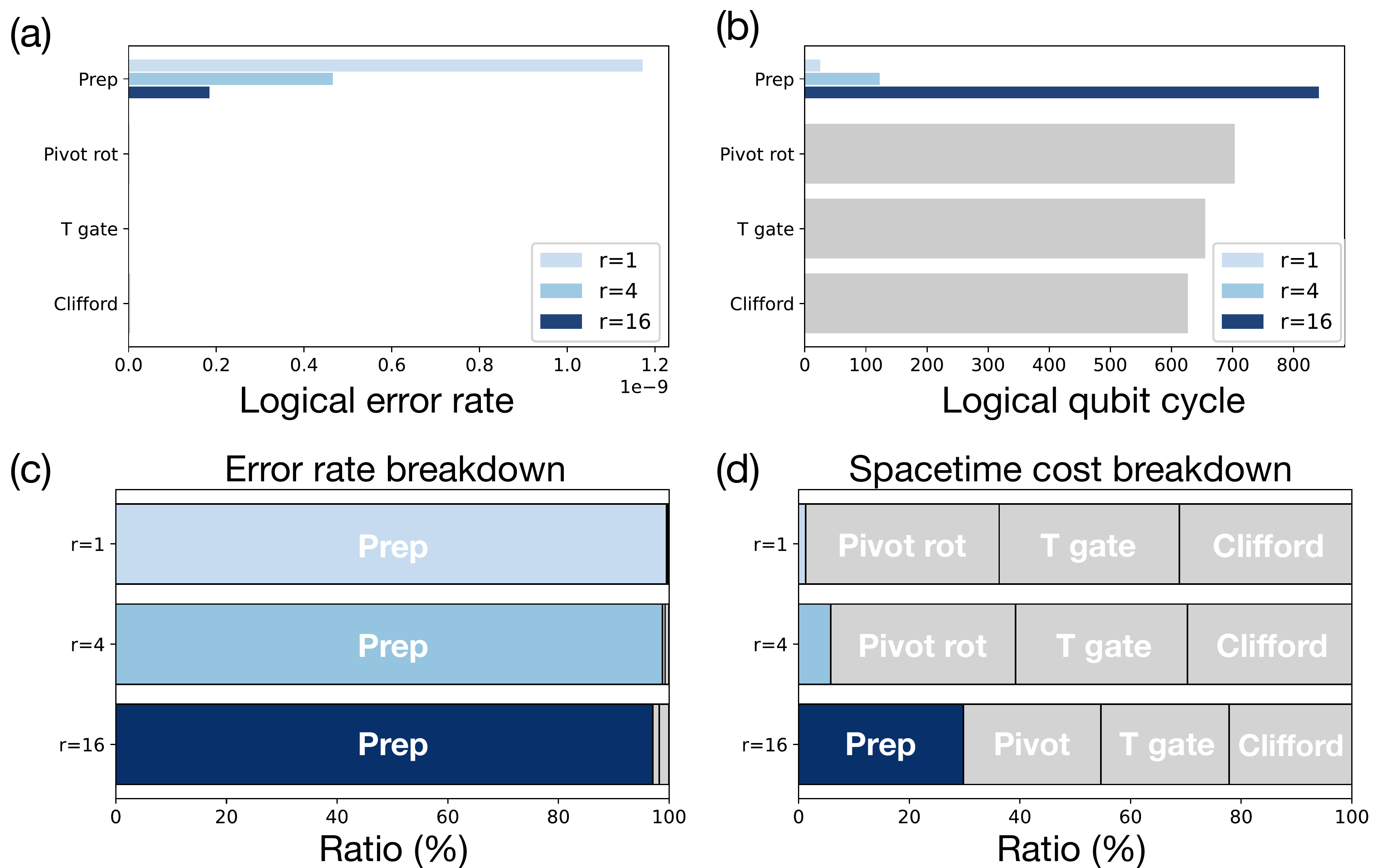}
    \caption{
Breakdown of logical error rate and spacetime cost for magic state distillation of $|M_{l=4}\rangle$. Here, we assume that T gates and the pivot rotation gate (which is a T gate) are already distilled~\cite{hirano2024leveraging}. The physical error rate is assumed to be $\pphys=10^{-4}$, and logical qubits are individually encoded via surface code of code distance $d=11$.
     }
    \label{fig:breakdown_resource}
\end{figure*}
\subsection{Teraquop regime: accuracy-enhanced magic state distillation}
\label{subsec:msd}

Weak transversal gates play a significant role even in the ultra high precision regime that tolerates a trillion gate count in terms of Clifford+T gate set.
One of the most practical examples is the magic state distillation.
The success in the state preparation, as shown in Sec.~\ref{subsec:stateprep}, naturally implies that weak transversal gates can be used to improve the accuracy of magic state injection.
Furthermore, we may benefit from the segmenting the rotation angles, especially when the bottleneck in the precision is in the rotation gate while the spacetime cost is dominated by a different process.

\begin{figure*}[b]
    \centering
\includegraphics[width=0.9\linewidth]    {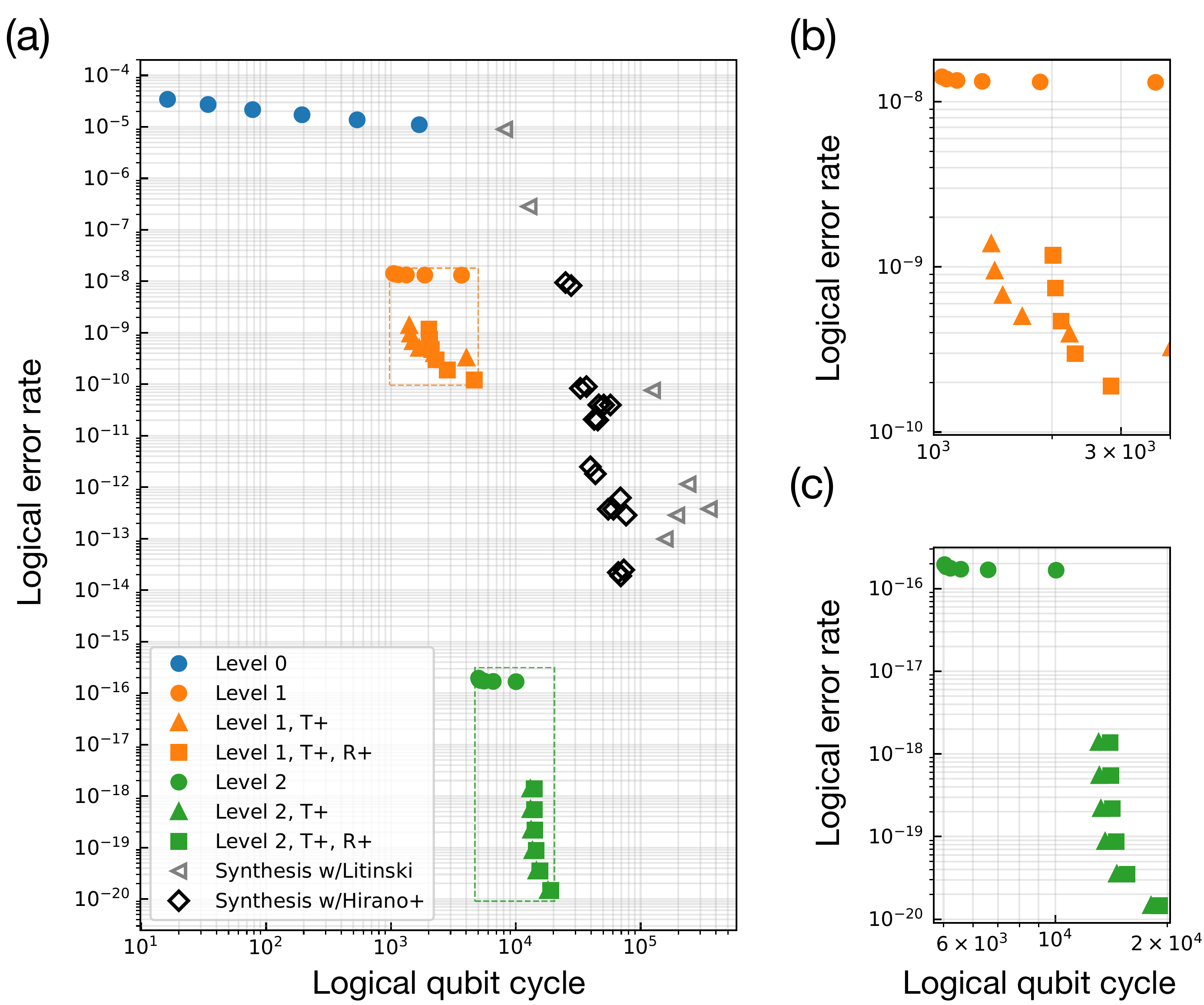}
    \caption{
Tradeoff between spacetime cost in units of logical qubit cycle and the logical error rate for distillation of $\sqrt{T}$ state. Compared with the Clifford+T-synthesis-based preparation using state-of-the-art T-state distillation protocol~\cite{ross2014optimal, litinski2019magic, hirano2024leveraging}, our proposal achieves orders of magnitude more precise magic state preparation. When we prepare the initial state, we take the number of segmentation as
$r = 1, 2, 4, 8, 16, 32.$ The number of distillation rounds by the MEK protocol is indicated by colors of blue (raw), orange (1 round), and green (2 rounds). Physical error rate is assumed to be $\pphys = 10^{-4}$ for all data.
     }
    \label{fig:distillation-resource}
\end{figure*}

One representative situation is indicated in Fig.~\ref{fig:breakdown_resource} where we show the breakdown of contribution to the spacetime cost and logical error rate to the MEK distillation protocol.
Here, we aim to distill an equatorial magic state from the $l=4$-th level of the Clifford hierarchy, which is known to nearly halve the non-Clifford gate count in synthesizing a single-qubit gate, compared to the case when only T gates are used~\cite{Kliuchnikov2023shorterquantum}.
The contributions are classified into four categories: the magic state injection, pivot rotation gate, T gate, and Clifford gates.
As we can see from results for $r=1$ in panels (a) and (c), the magic state injection dominates the error of the output state, while panels (b) and (d) indicate that the spacetime cost is only a few percent of the entire protocol.
By segmenting the rotation angle into $r=16$ steps, we can achieve a 6-fold reduction of logical error rate from $1.2\times10^{-9}$ to $1.9\times 10^{-10}$, with a moderate increase in spacetime cost from $2011$ to $2827$ in terms of the logical qubit cycle.

We gain even more by performing more rounds of distillation (see Fig.~\ref{fig:distillation-resource}). Specifically, we find that the aforementioned bottleneck structure is also present when we perform two rounds of magic state distillation, and hence, by segmenting weak transversal gates into $r=32$ steps, we can achieve a 94-fold reduction of logical error rate from $1.4\times 10^{-18}$ to $1.5\times 10^{-20}$, with an increase in the spacetime cost from $1.4\times 10^4$ to $1.9\times 10^{4}$ in terms of logical qubit cycle. We remark that such a preparation protocol is far more efficient than naively synthesizing the magic state. As shown in Fig.~\ref{fig:distillation-resource}(a), synthesizing the T gates consumes nearly 10 times the spacetime cost even when we rely on state-of-the-art T-state preparation methods~\cite{litinski2019magic, hirano2024leveraging} using a nearly optimal synthesis algorithm~\cite{ross2014optimal}.
For instance, $4\times 10^{3}$ logical qubit cycles are required in order to achieve a logical error rate of $10^{-10}$ using one round of the MEK protocol, while synthesis consumes $3\times 10^4$ logical qubit cycles.
With a similar spacetime cost, our protocol can distill down to a logical error rate of $10^{-20}.$

\section{Discussion} \label{sec:conclusion}

In this work, we have developed a general theory of weak transversal gates---logical operations implemented by local physical unitaries for probabilistic implementation of a set of logical unitaries rather than deterministic one---and shown how they expand the scope of transversal techniques for quantum computation on logical qubits.
Our theoretical contribution is to show the sufficient condition to admit weak transversal implementation of multiqubit Pauli rotation gates, together with a constructive method to realize them in CSS codes.
To provide quantitative validation, we have considered three tasks: factory-free, partially fault-tolerant architecture (Sec.~\ref{subsec:inplace-rot},~\ref{subsec:resource-est}), short-depth state preparation (Sec.~\ref{subsec:stateprep}), and enhanced magic state distillation (Sec.~\ref{subsec:msd}).
Collectively, these results position weak transversality as a practical, scalable complement to deterministic transversal gates, well-suited to early FTQC settings that must tolerate qubit dropout and dynamically changing codes.

Beyond the in-place implementation of Pauli rotation, the application of weak transversal gates are envisioned as follows.
\begin{itemize}
    \item {\bf Scrambling in logical qubits.} Our work directly solves the open question cast in Ref.~\cite{cheng2024emergent} regarding the many-body generalization of emergent randomness generation.
    \item {\bf Dynamical phase transition.} Recent findings show that dynamics of quantum circuit fosters rich physics in terms of competition between measurement, entanglement, and magic~\cite{skinner2019measurement, li2019measurement, bejan2024dynamical}. The framework of weak transversal gates provides an excellent platform to simulate such emergent phenomena.
    \item {\bf Resource distillation.} Beyond magic state, numerous quantum resource preparation including entanglement and coherence leverage distillation protocols. Our scheme is directly applicable to such tasks which is indispensable for quantum communication or distributed quantum computation.
    \item {\bf Teleportation-free architecture in flying qubits.} The in-place Pauli rotation gates opens a possibility to completely eliminate logical ancilla qubits for encoded flying qubits which can perform entangling operations in a transversal manner. Furthermore, combined with code concatenation, we may flexibly tune the achieved error scaling for both the transversal and weak transversal operations.
\end{itemize}

We finally remark on future problems.
First, it is interesting to investigate whether Theorem~\ref{thm:general-pauli-rot} can be extended to non-CSS stabilizer codes, an open problem also mentioned in Sec.~\ref{sec:theory}.
Second, it is important to evaluate the performance of in-place Pauli rotation based on circuit-level noise simulation beyond the currently executed size. The major bottleneck is the coexistence of non-Clifford gates and stochastic noise with multiple syndrome extraction. Large-scale simulation remains challenging even with  current classical simulation techniques~\cite{darmawan2017tensor, miller2025efficient}, and hence further development is required.
Third, it is intriguing to explore whether one can gain from inhomogeneous physical rotation angles. It has been found in the context of strong transversal gates that some non-Clifford gates are implemented via inhomogeneous local gates~\cite{Campbell2016SICC, GarvieDuncan2018SICC}, and we foresee that flexibility of weak transversal gates can be leveraged in a similar way.

\section*{Acknowledgments}
The authors wish to thank Sergey Bravyi,  Michael Gullans, Ying Li, Patrick Rall, Shiro Tamiya for fruitful discussions.
We are grateful to Yutaka Hirano for his generosity for sharing numerical data on (0+1)-level distillation.
N.Y. is supported by JST Grant Number JPMJPF2221, JST CREST Grant Number JPMJCR23I4, IBM Quantum, Google Quantum AI, JST ASPIRE Grant Number JPMJAP2316, JST ERATO Grant Number JPMJER2302, and JST [Moonshot R\&D] [Grant Number JPMJMS256J]. We acknowledge the use of large language models to improve the clarity of original text.
This research used resources of the Oak Ridge Leadership Computing Facility, which is a DOE Office of Science User Facility supported under Contract DE-AC05-00OR22725.
\\

{\it Note added.---} Upon preparation of the manuscript, an independent work~\cite{huang2025robust} appeared on arXiv that proposed to adaptively implement in-place Pauli rotation on a rotated surface code.
Physical rotations are applied to the entire code patch in Ref.~\cite{huang2025robust} and hence it cannot be generalized to multiqubit logical patch, while our scheme is capable of general CSS codes.

\clearpage
\appendix

\section{Proof of Theorem 1} \label{app:thm1-proof}

In this Appendix, we provide a detailed proof for \Cref{thm:z-rotation}. For the sake of convenience, we restate the theorem.
\weakZthm*

\vspace{2mm}
\noindent

To prove \Cref{thm:z-rotation}, it is important to keep in mind that if a quantum operation corresponds to a quantum channel on the logical space, then the action on the logical space should not depend on the input state. As noted in Ref.~\cite{cheng2024emergent}, the following well-known fact is crucial:
\begin{lemma}\label{lem:fact}
    Let a quantum channel $\mathcal{M}(\cdot) = \sum_x K_x\cdot K_x^\dagger$ be specified by a set of Kraus operators $\{K_x\}$.
    The probability distribution $p_x = {\rm Tr}[K_x \rho K_x^\dagger]$ is independent of the input state $\rho$ if and only if $K_x = \sqrt{p_x} U_{x}$.
\end{lemma}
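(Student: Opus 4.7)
The plan is to establish both directions of the equivalence. The reverse implication is a one-line computation, while the forward implication reduces to the standard fact that an operator whose expectation value is constant on the unit sphere must be a multiple of the identity.

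For the easy direction, I would assume $K_x = \sqrt{p_x}\, U_x$ with $U_x$ unitary and compute
\begin{equation}
\operatorname{Tr}[K_x \rho K_x^\dagger] = p_x\, \operatorname{Tr}[U_x \rho U_x^\dagger] = p_x\, \operatorname{Tr}[U_x^\dagger U_x \rho] = p_x\, \operatorname{Tr}[\rho] = p_x,
\end{equation}
using cyclicity of the trace and $U_x^\dagger U_x = I$, which manifestly does not depend on $\rho$.

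For the converse, I would rewrite $p_x = \operatorname{Tr}[K_x \rho K_x^\dagger] = \operatorname{Tr}[K_x^\dagger K_x\, \rho]$ and exploit the assumption that this value is independent of $\rho$. Specializing to pure states $\rho = \ket{\psi}\bra{\psi}$, this says $\bra{\psi} K_x^\dagger K_x \ket{\psi} = p_x$ for every unit vector $\ket{\psi}$. The key step is then to argue that the Hermitian operator $M_x := K_x^\dagger K_x$ must equal $p_x I$. This follows either from the polarization identity, or directly: writing $M_x$ in any orthonormal basis, the diagonal entries are all $p_x$, and the off-diagonal entries $\bra{\phi} M_x \ket{\psi}$ are forced to vanish by evaluating the quadratic form on the two superpositions $(\ket{\phi} \pm \ket{\psi})/\sqrt{2}$ and $(\ket{\phi} \pm i\ket{\psi})/\sqrt{2}$ and subtracting.

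Once $K_x^\dagger K_x = p_x I$ is established, setting $U_x := K_x / \sqrt{p_x}$ (well-defined since $p_x > 0$, which we may assume by discarding zero-probability branches) gives $U_x^\dagger U_x = I$, so $U_x$ is an isometry. Since the channel acts on a fixed finite-dimensional Hilbert space, domain and codomain dimensions agree and $U_x$ is unitary, completing the proof. There is no real obstacle here: the result is elementary and the only point requiring mild care is the jump from state-independence to proportionality with the identity, which I would handle with the polarization argument above.
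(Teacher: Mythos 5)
Your proof is correct and is the standard argument: the paper itself states this lemma without proof, citing it as a well-known fact, so there is nothing to diverge from. Both directions are handled properly --- the polarization (or constant-diagonal) step establishing $K_x^\dagger K_x = p_x I$ is exactly the right key point, and your remark about discarding $p_x = 0$ branches and finite-dimensionality (isometry $\Rightarrow$ unitary) closes the remaining edge cases.
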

In the context of weak transversal gate, the set of Kraus operators are $\{\sqrt{p_{\bm s}} \Pi_{\bm s} U\}_{\bm s}$, where $p_{\bm s}$ is the probability of obtaining the syndrome $\bm s$ and $\Pi_{\bm s} := \prod_{i=1}^{n-k} \frac{1 + (-1)^{s_i} g_i}{2}$ is the projection operator defined by the set of stabilizer generators $\{g_i\}_{i=1}^{n-k}$.
In order to bridge  Lemma~\ref{lem:fact} and Theorem~\ref{thm:z-rotation}, we first prove the following useful lemma.

\begin{lemma} \label{lem:logical_cancel}
    Let $\mathcal{S}$ be a stabilizer group that specifies the code space of an $[[n, k, d_x, d_z]]$ CSS code.
    Assume that conditions in \Cref{thm:z-rotation} are satisfied.
    Then, for any  $S_{1}, S_2 \in \mathcal{S}$, and $\bar{Q}\in \{\bar{I}, \bar{X}, \bar{Y}, \bar{Z}\}^{\otimes k}\backslash \{\bar{I}\}$, a unitary $U = \prod_{j \in {\rm supp}(\bar{Z}_{\rm target})}e^{i \theta_j Z_j}$ satisfies
    \begin{equation}
        {\rm Tr}[S_1 U S_2 \bar{Q} U^\dagger] = 0.\label{eq:traced_conjugation}
    \end{equation}
\end{lemma}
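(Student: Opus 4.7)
The plan is to reduce the trace to a single Pauli coefficient and then kill that coefficient using the CSS decomposition together with the odd-distance hypothesis. Since $U = \prod_{j \in T} e^{i\theta_j Z_j}$ is diagonal in the $Z$ basis (with $T := \supp(\barZ_{\rm target})$), conjugating any Pauli $W$ by $U$ produces a linear combination of Paulis in which the $X/Y$-vs-$I/Z$ classification on each qubit is preserved. Expanding $U S_2 \bar P U^\dagger = \sum_Q c_Q Q$ in the Pauli basis therefore gives $\tr[S_1 U S_2 \bar P U^\dagger] = 2^n c_{S_1}$, so it suffices to show that the Pauli string of $S_1$ cannot occur as a $Q$ in this expansion.

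Next I would leverage the CSS structure to compare $S_1$ against the admissible $Q$'s. Write $S_1 = S_1^X S_1^Z$ and $S_2 = S_2^X S_2^Z$ in the unique $X$-only/$Z$-only decomposition of $\mathS$, and pick $X$-only/$Z$-only representatives so that $\bar P = \barX_{\vec a}\barZ_{\vec b}$. The $X$-type support of $S_2 \bar P$ is that of $S_2^X \barX_{\vec a}$; matching it with the $X$-type support of $S_1$ forces $S_1^X = S_2^X \barX_{\vec a}$ as Pauli strings. If $\vec a \neq 0$ this would place the nontrivial logical $\barX_{\vec a}$ inside $\mathS$, a contradiction. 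We may therefore assume $\vec a = 0$ with $\vec b \neq 0$, in which case $S_1^X = S_2^X$.

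For the $Z$-type support, $z_{S_1}$ may differ from $z_{S_2 \bar P}$ only on qubits of $T$ where $S_2^X$ is supported, so the $Z$-only operator $S_1^Z S_2^Z \barZ_{\vec b}$ lies in the normalizer of $\mathS$, is not in $\mathS$ (because $\vec b \neq 0$), and is supported inside $A := T \cap \supp(S_2^X)$. Being a nontrivial $Z$-type logical representative, it has weight at least $d_z$; combined with $A \subseteq T$ and $|T| = d_z$, this saturates to $A = T$, i.e.\ $T \subseteq \supp(S_2^X)$.

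The odd-distance hypothesis then closes the argument: $\barZ_{\rm target}$ ($Z$-only, supported on $T$) commutes with the stabilizer $S_2^X$ ($X$-only), so $|T \cap \supp(S_2^X)|$ must be even, yet we have just shown it equals $|T| = d_z$, which is odd. This contradiction rules out every candidate $S_1$ and completes the proof. The main obstacle is the CSS bookkeeping in the preceding step: the weight-$d_z$ assumption leaves no slack, so every inequality must saturate in order for the symplectic parity constraint to collide cleanly with the odd-$d_z$ hypothesis---this is precisely where a weight-$d_z$ representative of $\barZ_{\rm target}$ and the odd parity of $d_z$ are both essential.
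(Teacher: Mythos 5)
Your proposal is correct and follows essentially the same route as the paper's proof: you use that conjugation by the diagonal $U$ preserves $X$-parts (so a nontrivial logical $X$ component forces the trace to vanish since it cannot lie in $\mathcal{S}$), and for the purely-$Z$ case you combine the bound $\lvert \supp(\bar{Z}_{\rm target}) \cap \supp_X(S_2)\rvert \le d_z$ with the weight-$\ge d_z$ lower bound on nontrivial logical $Z$ representatives and the parity obstruction (even overlap from commutation versus odd $d_z$). The only difference is presentational—you phrase the trace as extracting the coefficient of $S_1$ in the Pauli expansion rather than arguing each term is a nontrivial Pauli—but the ingredients and logic coincide with the paper's.
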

\begin{proof}
     We first prove Eq.~\eqref{eq:traced_conjugation} for logical operators that include at least one logical X operator, and then consider those without any logical X operator.
    For the sake of description, we utilize the symplectic representation of Pauli operators, while global phase is neglected since it does not affect the discussion after all.
    We denote a stabilizer as $S_{m} = X^{{\bm a}_m}Z^{{\bm b}_m}~({\bm a}_m, {\bm b}_m\in \mathbb{F}_2^n)$, and a single logical operator as $\bar{X}_i = X^{{\bm \alpha}_i}~({\bm \alpha}_i\in\mathbb{F}_2^n)$ and $\bar{Z}_i = Z^{{\bm \beta}_i}~({\bm \beta}_i \in \mathbb{F}_2^n)$.
    For a product of logical operators, we indicate them as $\bar{Q} = \bar{X}^{\bm \mu} \bar{Z}^{ \bm \nu} = X^{\sum_i \mu_i {\bm \alpha}_i}Z^{\sum_i \nu_i {\bm \beta}_i}~({\bm \mu},{\bm \nu}\in\mathbb{F}_2^k).$

    First, assume that the nontrivial logical X operator is present in $\bar{Q}$, i.e., ${\bm \mu} \neq {\bm 0}$.
    It is illustrative to first assume $\bm \nu = {\bm 0}$. Here, the product of stabilizers and a logical operator is written as
    \begin{eqnarray}
        S_1 S_2 \bar{Q} =  S_1 S_2 \bar{X}^{\bm \mu} &=& X^{{\bm a}_1+{\bm a}_2 + \sum_i \mu_i{\bm \alpha}_i} Z^{{\bm b}_1+{\bm b}_2}. \label{eq:stabilizer_XL}
    \end{eqnarray}
    Since any nontrivial logical operator is a centralizer of $\mathcal{S}$, we have ${\rm Tr}[S \bar{X}^{{\bm \mu}\neq {\bm 0}}]=0$ for any $S\in\mathcal{S}$.
    To prove Eq.~\eqref{eq:traced_conjugation}, we inspect how Eq.~\eqref{eq:stabilizer_XL} is modified under the conjugation by $U$.
    Note that for single-qubit Pauli operators we have
    \begin{align}
     e^{i \theta Z} X Z^be^{- i\theta Z} &= \cos (2\theta) X Z^b-i \sin(2\theta) X Z^{b+1} \nonumber \\
     &= \sum_{w=0, 1} c_{w} X Z^{b+w},
     \label{eq:pauli-conjugation}
    \end{align}
    where $c_w \in \mathbb{C}$ is a coefficient.
    We can easily derive a multiqubit version  by noting that the conjugation acts nontrivially only for the overlapping supports between $U$ and $X^{\bm a}$:
    \begin{eqnarray}
        \left(\prod_{j\in{\rm supp}(\bar{Z}_{\rm target})} e^{i  \theta_j Z_j} \right) X^{\bm a}Z^{\bm b}\left(\prod_{j\in{\rm supp}(\bar{Z}_{\rm target})} e^{-i  \theta_j Z_j} \right) = \sum_{{\bm w}\in W({\rm supp}(\bar{Z}_{\rm target})\cap {\rm supp}(X^{\bm a}))} c_{\bm{w}} X^{\bm a}Z^{{\bm b}+ {\bm w} \land {\bm a}}, \nonumber \\
        \
    \end{eqnarray}
    where $W({\Lambda}) = \{\sum_{j\in \Lambda} f_j {\bm e}_j| f_j\in \{0, 1\}\}$ with $e_j \in \mathbb{F}_2^n$ indicating a vector with only the $j$-th element being unity.
    Here, we have introduced $\land$ for the bit-wise AND operation. This allows us to compute the conjugation of Eq.~\eqref{eq:stabilizer_XL} as
    \begin{align}
S_1 U S_2 \bar{X}^{\bm \mu} U^\dagger &=
X^{{\bm a}_1}Z^{{\bm b}_1} \cdot U \cdot  X^{{\bm a}_2} Z^{{\bm b}_2} \cdot X^{\sum_i \mu_i{\bm \alpha}_i} \cdot U^\dagger
\\
&= X^{{\bm a}_1} Z^{{\bm b}_1} \cdot Z^{{\bm b}_2} \cdot \sum_{{\bm w} \in W_{\Lambda'}} c_{\bm{w}} X^{{\bm a}_2+\sum_i \mu_i{\bm \alpha}_i} Z^{{\bm w} \land({\bm a}_2+\sum_i \mu_i{\bm \alpha}_i)}\\
&=X^{{\bm a}_1+{\bm a}_2 + \sum_i \mu_i {\bm \alpha}_i} \sum_{{\bm w} \in W_{\Lambda'}} c_{\bm{w}}  Z^{{\bm b}_1+{\bm b}_2 + {\bm w} \land ({\bm a}_2+\sum_i \mu_i {\bm \alpha}_i)}, \label{eq:XL_conjugation}
    \end{align}
    where we have introduced $\Lambda' := {\rm supp}(U)\cap{\rm supp}(X^{{\bm a}_2+\sum_i \mu_i{\bm \alpha}_i})$.
    Since ${\bm \mu} \neq {\bm 0}$, it holds that ${\bm a}_1 + {\bm a}_2 + \sum_i \mu_i {\bm \alpha}_i \neq {\bm 0}$, and thus the trace of Eq.~\eqref{eq:XL_conjugation} must be zero as
    \begin{eqnarray}
        {\rm Tr}[S_1 U S_2 \bar{X}^{\bm \mu} U^\dagger] = 0. \label{eq:logical-X-trace-zero}
    \end{eqnarray}
    Since Eq.~\eqref{eq:logical-X-trace-zero} follows from the argument regarding the support of X operators,
    Eq.~\eqref{eq:logical-X-trace-zero} holds for ${\bm \nu} \neq {\bm 0}$ as long as ${\bm \mu} \neq {\bm 0}$.

    Next, we consider the case when $\bar{Q}$ is composed only of logical Z operators, i.e., ${\bm \mu} = {\bm 0}$ and ${\bm \nu} \neq {\bm 0}$.
    Similar to Eq.~\eqref{eq:XL_conjugation}, the conjugation of  stabilizer operators can be written as
    \begin{eqnarray}
        S_1 U S_2  \bar{Z}^{\bm \nu} U^\dagger = X^{{\bm a}_1+{\bm a}_2}\sum_{{\bm w} \in W({\rm supp}_Z(U) \cap {\rm supp}_X(S_2))} c_w  Z^{{\bm b}_1 + {\bm b}_2 + {\bm w} \land {\bm a}_2 + \sum_i \nu_i {\bm \beta}_i},
    \end{eqnarray}
    where we have introduced ${\rm supp}_X$ and ${\rm supp}_Z$ to denote the support of X and Z components, respectively.
    Using $|\cdot|$ to denote the Hamming weight, it holds that
    \begin{align}
        \max_{\bm w} |{\bm w} \land {\bm a}_2| = \max_{\bm w} |{\bm w}| =|{\rm supp}_Z(\bar{Z}_{\rm target}) \cap{\rm supp}_X(S_2)| \leq |{\rm supp}_Z(\bar{Z}_{\rm target})|=d_z,\label{eq:hamming_weight_overlap}
    \end{align}
    while the equality in $|\supp_Z(\bar{Z}_{\rm target})\cap \supp_X(S_2)| \leq |\supp_Z(\bar{Z}_{\rm target})|$ cannot be satisfied when $d_z$ is odd.
This follows from the fact that $|{\rm supp}_Z(\bar{Z}_{\rm target}) \cap {\rm supp}_X(S_2)|$ is an even number when $[\bar{Z}_{\rm target}, S_2]=0$ holds.
    On the other hand, since $S_1 S_2 \bar{Z}^{\bm \nu}$ is a nontrivial logical Z operator for any $S_1, S_2 \in \mathcal{S}$, the Hamming weight satisfies
    \begin{align}
    \min_{{\bm b}_1, {\bm b}_2 \in G_{Z}}{\rm wt}({\bm b}_1 + {\bm b}_2 + \sum_i \nu_i {\bm \beta}_i)    \geq d_z,\label{eq:hamming_weight_logical}
    \end{align}
    where $G_Z$ is the generator matrix for Z stabilizers.
    Due to Eqs.~\eqref{eq:hamming_weight_overlap},~\eqref{eq:hamming_weight_logical}, we find that $Z^{{\bm w} \land {\bm a}_2}$ and $Z^{{\bm b}_1 + {\bm b}_2 + \sum_i \nu_i {\bm \beta}_i}$  do not cancel out with each other; their product always yields a nontrivial Pauli Z operator, and hence its trace is zero. Therefore, for any $S_1, S_2\in \mathcal{S}, $
    \begin{align}
        {\rm Tr}\left[S_1 U S_2 \bar{Z}^{\bm \nu} U^\dag\right] = 0.
    \end{align}

    Combining the arguments, we conclude that the following holds for any nontrivial logical operator $\bar{P}\in \{\bar{I}, \bar{X}, \bar{Y}, \bar{Z}\}^{\otimes k}\setminus \{\bar{I}\}$,
    \begin{eqnarray}
        {\rm Tr}[S_1 U S_2 \bar{P} U^\dagger] = 0.
    \end{eqnarray}
\end{proof}

We further proceed by noting that any Kraus operator for stabilizer measurement can be decomposed into sum of stabilizer operators as
\begin{eqnarray}
    \Pi_{\bm s} = \prod_{i=1}^{n-k} \frac{I + (-1)^{s_i} g_i}{2} = \sum_{S \in \mathcal{S}} f_S S.
\end{eqnarray}
By straightforwardly combining this expression with \Cref{lem:logical_cancel}, we obtain the following:

\begin{lemma}\label{lem:logical_cancel2}
    Assume that a CSS code satisfies the conditions given in \Cref{thm:z-rotation}. Then, for all $\bar{Q} \in \{\bar{I}, \bar{X}, \bar{Y}, \bar{Z}\}^{\otimes k} \setminus \{\bar{I}\}$, a unitary $U = \prod_{j \in \supp(\bar{Z}_{\rm target})} e^{i \theta Z_j}$ satisfies
    \begin{eqnarray}
        {\rm Tr}[\Pi_{\bm s} U \Pi_{\bm 0} \bar{Q} U^\dagger] = 0.
    \end{eqnarray}

\end{lemma}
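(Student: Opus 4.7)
The plan is to reduce the claim directly to Lemma~\ref{lem:logical_cancel} by exploiting the linearity of the trace. Starting from the stabilizer expansion given just before the statement,
\begin{equation}
    \Pi_{\bm s} = \prod_{i=1}^{n-k}\frac{I + (-1)^{s_i} g_i}{2} = \sum_{S \in \mathcal{S}} f_S^{(\bm s)} S,
\end{equation}
and similarly $\Pi_{\bm 0} = \sum_{S\in\mathcal{S}} f_S^{(\bm 0)} S$, I would substitute these into ${\rm Tr}[\Pi_{\bm s} U \Pi_{\bm 0} \bar{P} U^\dag]$ and pull the (finite) sums outside the trace to obtain
\begin{equation}
    {\rm Tr}[\Pi_{\bm s} U \Pi_{\bm 0} \bar{P} U^\dag] = \sum_{S_1, S_2 \in \mathcal{S}} f_{S_1}^{(\bm s)} f_{S_2}^{(\bm 0)}\, {\rm Tr}[S_1 U S_2 \bar{P} U^\dag].
\end{equation}

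Each summand then vanishes by Lemma~\ref{lem:logical_cancel}, which established exactly that ${\rm Tr}[S_1 U S_2 \bar{P} U^\dag] = 0$ for arbitrary $S_1, S_2 \in \mathcal{S}$ and any nontrivial $\bar{P}$, under precisely the hypotheses inherited from Theorem~\ref{thm:z-rotation}. Therefore the whole sum is zero, which is the desired conclusion.

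There is no real obstacle here; the only subtlety to check is that the expansion of $\Pi_{\bm s}$ really produces terms proportional to stabilizer elements (with signs absorbed into the scalar coefficients $f_S^{(\bm s)}$), so that Lemma~\ref{lem:logical_cancel} applies term-by-term. This is immediate from distributing the product $\prod_i \tfrac{1}{2}(I + (-1)^{s_i} g_i)$: each resulting monomial is a product of generators $g_i$, hence an element of $\mathcal{S}$, carried by a scalar coefficient $\pm 2^{-(n-k)}$. In short, the lemma is just the projector-expanded restatement of Lemma~\ref{lem:logical_cancel}, and the proof should occupy only a few lines.
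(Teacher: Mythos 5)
Your proposal is correct and is exactly the paper's argument: the paper likewise expands $\Pi_{\bm s}$ (and implicitly $\Pi_{\bm 0}$) as a sum over stabilizer elements and invokes Lemma~\ref{lem:logical_cancel} term by term, describing the step as "straightforwardly combining" the two. Your write-up simply makes the linearity step and the fact that each monomial in the expanded product of $\tfrac{1}{2}(I+(-1)^{s_i}g_i)$ lies in $\mathcal{S}$ explicit, which is fine.
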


Building on top of \cref{lem:logical_cancel2}, we  show that the stabilizer measurement results do not depend on the input logical state. Note that any logical state $\bar{\rho}$ can be decomposed into sum of logical Pauli operator as
\begin{eqnarray}
    \bar{\rho} = \Pi_{\bm 0} \left(\sum_{\bar{Q}\in \{\bar{I}, \bar{X}, \bar{Y}, \bar{Z}\}^{\otimes k}} c_{\bar{Q}} \bar{Q} \right)\Pi_{\bm 0}= \frac{1}{2^{k}}\Pi_{\bm 0}
    + \sum_{\bar{Q}\in\{\bar{I}, \bar{X}, \bar{Y}, \bar{Z}\}^{\otimes k} \backslash \{\bar{I}\}} c_{\bar{Q}} \Pi_{\bm 0} \bar{Q},
\end{eqnarray}
where $c_{\bar{Q}} \in \mathbb{R}$ is the Pauli coefficient that encodes the logical information of the state.
Then, the measurement probability distribution $p_{\bm s}$ can be computed using \Cref{lem:logical_cancel2} as
\begin{eqnarray}
    p_{\bm s} = {\rm Tr}[\Pi_{\bm s} U\bar{\rho} U^\dagger] &=& \frac{1}{2^k}{\rm Tr}[\Pi_{\bm s} U \Pi_{\bm 0} U^\dagger] + \sum_{\bar{Q}\in \{\bar{I}, \bar{X}, \bar{Y}, \bar{Z}\}^{\otimes k}\backslash \{\bar{I}\}} c_{\bar{Q}} {\rm Tr}[\Pi_{\bm s} U \Pi_{\bm 0} \bar{Q} U^\dagger]\\
    &=& \frac{1}{2^k}{\rm Tr}[\Pi_{\bm s} U \Pi_{\bm 0} U^\dagger],
\end{eqnarray}
which is indeed independent of the logical state $\bar{\rho}$.
Therefore, by \Cref{lem:fact}, $U$ yields a weak transversal operation.

In order to complete the proof of \Cref{thm:z-rotation}, we must assure that any logical rotation angle $\bar{\theta}$ can be implemented, as follows.

\begin{lemma} \label{lem:logical-angle-zonly}
Assume that a CSS code satisfies the conditions given in \Cref{thm:z-rotation}, and let
    $U$ be a unitary defined as $U = \prod_{j \in \supp(\bar{Z}_{\rm target})} e^{i \theta Z_j}$. Then, by applying $U$ on an arbitrary logical state $\bar{\rho}$, followed by syndrome measurement $\mathcal{M}$, the gate operation can be written as
    \begin{eqnarray}
        \mathcal{M} \circ \mathcal{U}(\bar{\rho}) = \sum_{\substack{\bm s \in \{0,1\}^{M-1}}} \mathcal{E}_{\bm s} (p_{\bm s} \bar{V}_{\bm s} \bar{\rho} \bar{V}_{\bm s}^\dagger),        \label{eq:weak-transversal-Z-expression}
    \end{eqnarray}
    where $\mathcal{E}_{\bm s}(\cdot)= E_{\bm s} \cdot E_{\bm s}^\dagger$ is a map for $E_{\bm s}$, a minimum-weight Pauli error that invokes syndrome $\bm s$, $p_{\bm s}$ is the probability of obtaining the syndrome $\bm s$, and $\bar{V}_{\bm s} = e^{i \bar{\theta}_{\bm s} \bar{Z}_{\rm target}}$ is a logical rotation unitary.
    The probabilities and angles are given by
    \begin{align}
        p_{\bm s} &=
        |\cos^{d_z - w} \theta \sin^{w}\theta|^2
        +
        |\sin^{d_z - w} \theta \cos^{w}\theta|^2
        ,  \nonumber \\
        \bar{\theta}_{\bm s} &= \arctan\left(
        (-1)^{\frac{d_z-1}{2} - w} \tan^{d_z - 2w}\theta
        \right), \label{eq:logical_Z_prob_angles}
    \end{align}
    where $w := |E_{\bm s}|$ indicates the Pauli weight of $E_{\bm s}$.
\end{lemma}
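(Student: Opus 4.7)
The plan is to expand $U$ in the Pauli basis, project onto each X-stabilizer syndrome sector, and read off the residual logical action. Since $Z_j^2=I$,
\begin{align}
U = \prod_{j\in\supp(\bar{Z}_{\rm target})}\bigl(\cos\theta\,I + i\sin\theta\,Z_j\bigr) = \sum_{A\subseteq \supp(\bar{Z}_{\rm target})} (i\sin\theta)^{|A|}(\cos\theta)^{d_z-|A|}\,Z^A,
\end{align}
with $Z^A \coloneqq \prod_{j\in A}Z_j$. Each $Z^A$ carries a definite syndrome $\bm s(A)$, so only terms with $\bm s(A)=\bm s$ survive $\Pi_{\bm s}$, and my task reduces to cataloguing which subsets of $\supp(\bar{Z}_{\rm target})$ share a syndrome.

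The structural step is to show that these subsets fall into complementary pairs. If $A_1\ne A_2$ share a syndrome, then $Z^{A_1\triangle A_2}$ is a Z-operator supported inside $\supp(\bar{Z}_{\rm target})$ with trivial syndrome; were $A_1\triangle A_2$ a proper nonempty subset, it would be a Z-stabilizer $S$ with $\supp(S)\subsetneq \supp(\bar{Z}_{\rm target})$, making $S\cdot\bar{Z}_{\rm target}$ a logical Z of weight $d_z-|\supp(S)|<d_z$, contradicting the distance. Hence $A_1\triangle A_2\in\{\emptyset,\supp(\bar{Z}_{\rm target})\}$, so the trivial syndrome is reached only by $\{\emptyset,\supp(\bar{Z}_{\rm target})\}$ while each nontrivial syndrome $\bm s$ is reached by exactly one pair $\{A,\supp(\bar{Z}_{\rm target})\setminus A\}$, with weights $w_{\bm s}$ and $d_z-w_{\bm s}$. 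Odd $d_z$ rules out equality of these two weights, and the same minimality argument identifies the smaller weight with $\text{wt}(E_{\bm s})$ and forces $Z^A$ to differ from $E_{\bm s}$ only by a Z-stabilizer. Consequently, on any code state $\ket{\bar\psi}$, $Z^A\ket{\bar\psi}=E_{\bm s}\ket{\bar\psi}$ and $Z^{\supp(\bar{Z}_{\rm target})\setminus A}\ket{\bar\psi}=\bar{Z}_{\rm target}E_{\bm s}\ket{\bar\psi}$.

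Combining these ingredients,
\begin{align}
\Pi_{\bm s}U\ket{\bar\psi} = E_{\bm s}\bigl[\alpha_{\bm s}\,I+\beta_{\bm s}\,\bar{Z}_{\rm target}\bigr]\ket{\bar\psi},\qquad \alpha_{\bm s}=(i\sin\theta)^{w_{\bm s}}(\cos\theta)^{d_z-w_{\bm s}},\ \ \beta_{\bm s}=(i\sin\theta)^{d_z-w_{\bm s}}(\cos\theta)^{w_{\bm s}},
\end{align}
with the convention $w_{\bm 0}\coloneqq 0$. Writing the bracket as $\sqrt{p_{\bm s}}\,e^{i\bar\theta_{\bm s}\bar{Z}_{\rm target}}$ up to a global phase and matching moduli yields $p_{\bm s}=|\alpha_{\bm s}|^2+|\beta_{\bm s}|^2$, which is the first line of \eqref{eq:logical_Z_prob_angles}. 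Matching the ratio, $i\tan\bar\theta_{\bm s}=\beta_{\bm s}/\alpha_{\bm s}=(i\tan\theta)^{d_z-2w_{\bm s}}$; since $d_z-2w_{\bm s}$ is odd, $i^{d_z-2w_{\bm s}}=i(-1)^{(d_z-1)/2-w_{\bm s}}$, producing the stated angle formula. Squaring the state-level identity and summing over $\bm s$ yields \eqref{eq:weak-transversal-Z-expression}; that $p_{\bm s}$ does not depend on $\bar\rho$ is exactly \Cref{lem:logical_cancel2}, which rules out cross-terms between distinct logical components.

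The main obstacle I expect is interpretive rather than computational: one must track which element of the pair $\{A,\supp(\bar{Z}_{\rm target})\setminus A\}$ plays the role of the correction $E_{\bm s}$ and which contributes the additional $\bar{Z}_{\rm target}$, because without the odd-distance assumption the two complementary weights could coincide and the bracketed expression would fail to factorize as $\sqrt{p_{\bm s}}$ times a rotation around $\bar{Z}_{\rm target}$. The distance-plus-minimum-weight argument pins this down unambiguously; the remainder is the trigonometric bookkeeping above.
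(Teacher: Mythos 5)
Your proof is correct and follows essentially the same route as the paper's: expand $U$ over subsets of $\supp(\bar{Z}_{\rm target})$, pair complementary subsets by syndrome, identify $\Pi_{\bm s}U$ with $\sqrt{p_{\bm s}}\,E_{\bm s}\bar{V}_{\bm s}$ on the code space, and read off the angle from the ratio $(i\tan\theta)^{d_z-2w_{\bm s}}$. The only (cosmetic) difference is that you justify the two-to-one subset-to-syndrome correspondence by a direct distance contradiction via $Z^{A_1\triangle A_2}$, whereas the paper counts the $2^{d_z-1}$ correctable punctured errors; both rest on the same fact that the code distance is $d_z$.
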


\begin{proof}

There are $d_z - 1$ X-type stabilizer generators that can be related with
individual entry of a syndrome ${\bm s} = (s_1, ..., s_{d_z-1})$.
Since the Z code distance of the target QECC is $d_z$,  any Pauli Z error $E$ with ${\rm  wt}(E)\leq \frac{d_z-1}{2}$ must be discriminated by the syndrome measurement.
Considering that the number of combinations of such Pauli errors is
$$\sum_{l=0}^{\frac{d_z-1}{2}}\begin{pmatrix}
    d_z \\
    l
\end{pmatrix} = \frac{1}{2}\sum_{l=0}^{d_z}\begin{pmatrix}
    d_z \\
    l
\end{pmatrix} = 2^{d_z-1},$$
we can see that there exists a set of stabilizer generators $\{g_i\}_{i=1}^{d_z-1}$ for correcting the error on $\supp_Z(\bar{Z}_{\rm target})$.
In other words, we can uniquely relate a syndrome $\bm s$ with a Pauli Z error $E$ with ${\rm wt}(E)\leq \frac{d_z-1}{2}$, which is denoted as $E_{\bm s}$ in the following.

We remark that $E_{\bm s}$  can be estimated with an optimal decoding algorithm for a {\it classical} code with $d_z$ bits. Namely, instead of considering the stabilizer generator $g = \prod_{i=1}^{n} Q_i~(Q_i\in \{I, X, Y, Z\})$, one may define a punctured stabilizer generator as
$$
\hat{g}:= \prod_{i \in \supp_Z(\bar{Z}_{\rm target})} Q_i,
$$ and construct a classical code whose check matrices are given by $H = \begin{pmatrix}
    \rule[0.4ex]{1em}{0.4pt}~
    \hat{a}_i^T~\rule[0.4ex]{1em}{0.4pt}
\end{pmatrix}^T$ where the binary vectors $\hat{a}_i$ are defined from $\hat{g}_i = X^{\hat{a}_i}.$
Thus, the computational cost for computing $E_{\bm s}$ is drastically reduced compared to decoding the original QECC. For a moderate $d_z$, one may prepare a lookup table for it.

Now we decompose the unitary as
\begin{align}
    U &= \prod_{j \in \supp_Z(\bar{Z}_{\rm target})} e^{i \theta Z_j} = \prod_{j \in \supp_Z(\bar{Z}_{\rm target})}\left( \cos \theta \cdot I + i \sin \theta \cdot Z_j\right) \\
    &= \sum_{{\bm b}\in \{0, 1\}^{d_z}} (\cos \theta)^{d_z - |\bm b|} (i \sin \theta)^{|\bm b|} Z^{{\bm b}} \\
    &= \sum_{\substack{{\bm b}\in \{0, 1\}^{d_z}\\|{\bm b}|\leq \frac{d_z-1}{2}}} (\cos \theta)^{d_z - |{\bm b}|} (i \sin \theta)^{|{\bm b}|} Z^{{\bm b}}  + (\cos \theta)^{|\bm b|} (i \sin \theta)^{d_z - |\bm b|} Z^{\bar{\bm \beta} + {\bm b}}, \\
\end{align}
where $\bar{Z}_{\rm target} = Z^{\bm \beta}.$ As we have noted above, $Z^{\bm b}$ with $|\bm b| \leq \frac{d_z - 1}{2}$ can be uniquely identified with a syndrome ${\bm s}$. Thus, we can further rewrite the unitary using projector $\Pi_{\bm s} := E_{\bm s} \Pi_{\bm 0} E_{\bm s}$ and $w \coloneqq |E_{\bm s}|$  as
\begin{align}
    U &= \sum_{\bm s} \Pi_{\bm s} U = \sum_{\bm s} U_{\bm s}, \\
    U_{\bm s} &= E_{\bm s} \cdot (\cos \theta)^{{\rm wt}(E_{\bm s})} (i \sin\theta)^{{\rm wt}(E_{\bm s})} (\cos^{d_z - 2 {\rm wt}(E_{\bm s})} \theta\cdot I + (i \sin\theta)^{d_z - 2 {\rm wt}(E_{\bm s})}  \cdot \bar{Z}_{\rm target}) \nonumber \\
    &\simeq  \sqrt{p_{\bm s}} E_{\bm s} \bar{V}_{\bm s},
\end{align}
where we have neglected the global phase in the second line, and $p_{\bm s}$ and $\bar{V}_{\bm s}$ are defined as in Eq.~\eqref{eq:logical_Z_prob_angles}.

Finally, by using the fact that  $\mathcal{M}(\Pi_{\bm s} \cdot \Pi_{{\bm s}'}) = \delta_{{\bm s}, {\bm s}'} \mathcal{M}(\Pi_{\bm s}\cdot \Pi_{\bm s})$ holds for the syndrome measurement $\mathcal{M}$ of the QECC, we obtain
\begin{align}
    \mathcal{M} \circ \mathcal{U}(\bar{\rho}) &= \mathcal{M}(U \bar{\rho} U^\dagger) = \sum_{{\bm s}, {\bm s}'} \mathcal{M}(\Pi_{\bm s} U \bar{\rho} U^\dagger \Pi_{{\bm s}'}) = \sum_{\bm s} \mathcal{M}(\Pi_{\bm s} U \bar{\rho} U^\dagger \Pi_{\bm s}) \nonumber \\ &= \sum_{\bm s} \mathcal{M} \circ \mathcal{E}_{\bm s} (p_{\bm s}\bar{V}_{\bm s} \bar{\rho} \bar{V}_{\bm s}^\dagger) \nonumber \\
    &= \sum_{\bm s}\mathcal{E}_{\bm s} (p_{\bm s} \bar{V}_{\bm s} \bar{\rho} \bar{V}_{\bm s}^\dagger),
\end{align}
where we have used $\mathcal{M} \circ \mathcal{E}_{\bm s} = \mathcal{E}_{\bm s} \circ \mathcal{M}$ and $\mathcal{M}(\bar{V}_{\bm s} \bar{\rho} \bar{V}_{\bm s}^\dagger) = \bar{V}_{\bm s} \bar{\rho} \bar{V}_{\bm s}^\dagger$ in the last line.
\end{proof}

Note that $\mathcal{M}$ in \Cref{lem:logical-angle-zonly} can be in principle replaced by syndrome measurement for the punctured stabilizer generators $\{\hat{g}_i\}_i$ if we are only interested in the weak transversal gate. However,  in practice we desire to suppress the hardware error on the physical qubits, and hence perform the syndrome measurement for the entire QECC.
By combining \Cref{lem:logical-angle-zonly} with the fact that $U$ is a weak transversal gate, we complete the proof of \Cref{thm:z-rotation}.

We note that Theorem~\ref{thm:z-rotation} also implies weak transversal X rotations after exchanging X and Z. If the assumptions in Theorem~\ref{thm:z-rotation} are satisfied for both X and Z operators, a weak transversal Y rotation can also be obtained by a constant-size composition of weak transversal X and Z rotations, for example using $e^{i\theta Y}=e^{-i\pi Z/4}e^{i\theta X}e^{i\pi Z/4}$. This composition uses multiple syndrome-extraction gadgets. In contrast, the disjoint-support construction in Appendix~\ref{app:thm2-proof} gives a single-gadget implementation of general logical Pauli rotations when its conditions are satisfied.

\section{Proof of Theorem 2}\label{app:thm2-proof}

In this Appendix, we provide a detailed proof for \Cref{thm:general-pauli-rot}, which we restate here for convenience.
\weakPthm*

The proof proceeds as follows.
In Appendix~\ref{subsec:disjoint} we claim that, once a partition of logical Pauli operator $\bar{P}$ called {\it disjoint support partition} is obtained, we can implement $\bar{P}$-rotation gate in weak transversal manner for arbitrary logical angle.
Motivated by actually constructing the disjoint support partition, in Appendix~\ref{subsubsec:multi-XZ} we provide Algorithm~\ref{alg:pivoting_xz} for finding the disjoint support partition.

\subsection{Disjoint support partition for weak transversal Pauli rotation} \label{subsec:disjoint}

In order to ensure the input-agnosticity, it is important that we perform physical Pauli rotations with appropriate support.
Here, we consider a specific partition of logical operator as follows.

\begin{definition} (Disjoint support partition)\label{def:disjoint-support-partition}
Let $\bar{P}$ be a nontrivial Pauli operator of an $[[n, k, d]]$ CSS code.
A set of Pauli operators $\{P_m \}_{m=1}^M~(P_m\in \{I, X, Y, Z\}^{\otimes n})$ is said to be a disjoint support partition of $\bar{P}$ if the following conditions are satisfied:
\begin{itemize}
    \item \black{$\bar{P} = \prod_{m=1}^M P_m$ with odd $M$.}
    \item $\supp(P_m) \cap \supp(P_{m'}) = \emptyset$ for all $m \neq m'$.
    \item For any nonempty subset $R \subsetneq [M]$,  it holds that $\Pi_{\bm 0} \prod_{m \in R}P_m \Pi_{\bm 0} = 0$ where $\Pi_{\bm 0}$ is a projection operator onto the code space.
\end{itemize}
\black{In particular, the third condition assures that any nonempty proper subset of the operators yield nontrivial syndrome.}
\end{definition}

Assuming that $\{P_m\}_m$ is a disjoint support partition, we can show that $U=\prod_{m=1}^M e^{i \theta P_m}$ is a weak transversal Pauli rotation gate.
\begin{lemma} \label{lem:weak-pauli-rot-disjoint}
    Let $U=\prod_{m=1}^M e^{i \theta P_m}$ with $\{P_{m}\}_{m=1}^M$ being a disjoint support partition of the target Pauli operator $\bar{P}_{\rm target}$.
    Then, $U$ followed by syndrome measurement is a weak transversal $\bar{P}_{\rm target}$ rotation gate.
\end{lemma}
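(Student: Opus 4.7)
My approach is as follows. Weak transversality of $U$ is assumed, so by \Cref{lem:fact} each syndrome sector $\bm s$ already implements some logical unitary $\bar{V}_{\bm s}$ up to normalization; I only need to (i) identify one sector in which $\bar{V}_{\bm s}$ has the form $e^{i\bar{\theta}\bar{P}_{\rm target}}$ and (ii) show that any target logical angle $\bar{\theta}$ can be realized with strictly positive probability by tuning the physical angle $\phi$ in $U=\prod_{m=1}^M e^{i\phi P_m}$. The plan is to focus on the trivial-syndrome branch $\bm s=\bm 0$, where the algebra is cleanest.

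The first step is to expand $U$ in the Pauli basis. Since the $P_m$ have pairwise disjoint supports the factors commute, and
\begin{align*}
U=\sum_{S\subseteq[M]}(\cos\phi)^{M-|S|}(i\sin\phi)^{|S|}\,Q_S,\qquad Q_S:=\prod_{m\in S}P_m .
\end{align*}
Sandwiching between code-space projectors and invoking the third defining condition of a disjoint support partition, $\Pi_{\bm 0}Q_S\Pi_{\bm 0}=0$ for every $\emptyset\subsetneq S\subsetneq[M]$, eliminates every intermediate term and leaves only $S=\emptyset$ (giving $I$) and $S=[M]$ (giving $\bar{P}_{\rm target}$). Using that $M$ is odd to simplify $i^M=i(-1)^{(M-1)/2}$, I expect
\begin{align*}
\Pi_{\bm 0}U\Pi_{\bm 0}=\bigl[\cos^M\phi\cdot I \;+\; i(-1)^{(M-1)/2}\sin^M\phi\cdot\bar{P}_{\rm target}\bigr]\Pi_{\bm 0}=\sqrt{p_{\bm 0}}\;e^{i\bar{\theta}_{\bm 0}\bar{P}_{\rm target}}\Pi_{\bm 0},
\end{align*}
with $p_{\bm 0}=\cos^{2M}\phi+\sin^{2M}\phi$ and $\tan\bar{\theta}_{\bm 0}=(-1)^{(M-1)/2}\tan^M\phi$.

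To close the argument I would establish surjectivity of the physical-to-logical angle map. Because $M$ is odd, $\phi\mapsto (-1)^{(M-1)/2}\tan^M\phi$ is a continuous bijection from $(-\pi/2,\pi/2)$ onto $\mathbb{R}$, so for any $\bar{\theta}\in(-\pi/2,\pi/2)$ I can pick $\phi$ with $\bar{\theta}_{\bm 0}=\bar{\theta}$; the residual $\bar{\theta}\leftrightarrow\bar{\theta}+\pi$ ambiguity merely flips the overall sign of the logical unitary and is a harmless global phase. For such $\phi$, $p_{\bm 0}>0$, so the trivial-syndrome branch implements $e^{i\bar{\theta}\bar{P}_{\rm target}}$ with nonzero probability, matching the definition of a weak transversal $\bar{P}_{\rm target}$ rotation gate.

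The main obstacle, and the step that does the real work, is isolating the two-term structure of $\Pi_{\bm 0}U\Pi_{\bm 0}$. Without the disjoint partition hypothesis an intermediate $Q_S$ could lie in the stabilizer group, or represent some other logical Pauli in the normalizer, and would contaminate the code-space action with admixtures of non-target logicals; the third condition of the disjoint support partition is tailored precisely to exclude this. Combined with the parity requirement that $M$ be odd, which is exactly what makes $\tan^M$ surjective onto $\mathbb{R}$ and hence lets any logical angle be realized, this is the only place in the argument where the partition structure is genuinely used.
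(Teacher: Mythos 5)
Your proposal is correct, and its core computation --- expanding $U$ in the Pauli basis, using the third condition of the disjoint support partition to kill every $\Pi_{\bm 0}Q_S\Pi_{\bm 0}$ with $\emptyset\subsetneq S\subsetneq[M]$, and reading off $p_{\bm 0}=\cos^{2M}\phi+\sin^{2M}\phi$ and $\tan\bar{\theta}_{\bm 0}=(-1)^{(M-1)/2}\tan^{M}\phi$ --- is exactly the computation the paper performs for the $\bm s=\bm 0$ term of its decomposition $U=\sum_{\bm s}U_{\bm s}$. Where you diverge is in scope. The paper analyzes \emph{every} syndrome branch: it shows that the operators $E_{\bm s}=\prod_m P_m^{s_m}$ with ${\rm wt}(\bm s)\le(M-1)/2$ satisfy the Knill--Laflamme condition, so each syndrome is uniquely attributable to one $E_{\bm s}$, and each branch implements $e^{i\bar{\theta}_{\bm s}\bar{P}_{\rm target}}$ followed by a correctable Pauli; this yields the full output channel $\sum_{\bm s}p_{\bm s}\,\mathcal{E}_{\bm s}(\bar{V}_{\bm s}\bar{\rho}\bar{V}_{\bm s}^{\dagger})$, which is what Lemma~\ref{lem:logical-theta-dist} and the repeat-until-success protocol actually consume. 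You instead note that the definition of a weak transversal $\bar{P}$ rotation gate only demands that, for every target angle, \emph{some} branch realizes $e^{i\bar{\theta}\bar{P}_{\rm target}}$ with nonzero probability, and you discharge that with the trivial branch plus the surjectivity of $\phi\mapsto\tan^M\phi$ for odd $M$ (a step the paper leaves implicit). This is a legitimately leaner proof of the literal statement; the trade-off is that it does not establish the per-syndrome angle formulas needed downstream. Two minor points to tighten: (i) your bijection covers $\bar{\theta}\in(-\pi/2,\pi/2)$ modulo the global-sign identification, but $\bar{\theta}\equiv\pi/2\pmod{\pi}$ falls outside the open interval --- handle it by noting $\phi=\pi/2$ gives $U=i^M\bar{P}_{\rm target}=e^{i\pi\bar{P}_{\rm target}/2}$ up to phase, deterministically; (ii) your commutation of the factors $e^{i\phi P_m}$ is justified by disjointness of supports, which the paper's definition conveys by name and by construction rather than as an explicit axiom, so it is worth stating that you are using it.
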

\begin{proof}
    First, let us decompose the unitary into sum of Pauli operators as
    \begin{eqnarray}
        U = \prod_{m=1}^M e^{i \theta P_m} = \sum_{{\bm b}\in \{0, 1\}^{M}} \alpha_{\bm b} \prod_{m=1}^M P_m^{b_m}
    \end{eqnarray}
    where $\alpha_{\bm b} \in \mathbb{C}$ is a coefficient of product of Pauli operators.
Since $\{P_m\}_m$ is a disjoint support partition, for a given proper subset $R \subsetneq [M]$, only $R^c = [M]\setminus R \neq R$ is a proper subset that satisfies  $(\prod_{m\in R} P_m)\cdot (\prod_{m \in R^c}P_m) = \bar{P}_{\rm target} \in \{\bar{I}, \bar{X}, \bar{Y}, \bar{Z}\}^{\otimes k}$.
Therefore,
\begin{align}
\Pi_0 (\prod_{m\in R} P_m) \cdot (\prod_{m\in R'} P_m) \Pi_0 = \begin{cases}
    \Pi_0 & \text{if $R' = R$}\\
    \Pi_0 \bar{P}_{\rm target} & \text{if $R' = R^c$} \\
    0 & \text{otherwise}.
\end{cases}\label{eq:P_product_condition}
\end{align}
Note that $\prod_m P_m^{\bm b_m}$ can be uniquely related to a correction operation $E_{\bm s(\bm b)}$ where $\bm s(\bm b)$ indicates the syndrome invoked by the Pauli operator. Combining $\bm s(\bm b) = \bm s(\bm 1- \bm b)$ with Eq.~\eqref{eq:P_product_condition}, it follows that
\begin{eqnarray}
    \Pi_{\bm 0} E_{\bm s} E_{{\bm s}'} \Pi_{\bm 0} = \delta_{{\bm s}, {\bm s}'} \Pi_{\bm 0}, \label{eq:KL-condition}
\end{eqnarray}
which satisfies the Knill-Laflamme error correcting condition~\cite{knill1997theory}.
This indicates that we can discriminate the elements of $\{E_{\bm s}\}$ from each other via syndrome measurement.

Observe that, since $M$ is odd due to the condition of disjoint support partition, we can decompose the unitary as
\begin{align}
U = \sum_{\bm s} U_{\bm s}\label{eq:unitary_decomposition_by_s}
\end{align}
where
\begin{align}
    U_{\bm s}
    &= E_{\bm s}(\cos \theta)^{w} (i \sin \theta)^{w} (\cos^{M - 2w}\theta + (i \sin \theta)^{M - 2w} \bar{P}_{\rm target}) \\
    &\simeq \sqrt{p_{\bm s}} E_{\bm s} \bar{V}_{\bm s},
\end{align}
where $w \coloneqq |E_{\bm s}|$. This is not even a unitary operator if $M$ is an even number.
Note that we have neglected the global phase in the second line.
Here, $p_{\bm s} \in \mathbb{R}$ is the normalization factor, and $\bar{V}_{\bm s} = e^{i \bar{\theta}_{\bm s} \bar{P}_{\rm target}}$ is the  logical rotation unitary that are explicitly given as
\begin{align}
    p_{\bm s} &= |\cos^{M - w} \theta \sin^{w}\theta|^2 + |\sin^{M - w}\theta \cos^{w}\theta|^2, \\
    \bar{\theta}_{\bm s} &= \arctan\left((-1)^{\frac{M-1}{2} - w} \tan^{M - 2w} \theta\right).
\end{align}

Now, let us confirm that we indeed obtain $\bar{V}_{\bm s}$ with probability $\bm s$ when the syndrome measurement result is $\bm s$.
Let $\Pi_{\bm s} = E_{\bm s} \Pi_0 E_{\bm s}$ be the Kraus operator corresponding to the measurement with syndrome $\bm s$. Due to Eq.~\eqref{eq:KL-condition}, it holds that $\Pi_{\bm s} U \Pi_0 = \Pi_{\bm s} U_{\bm s} \Pi_0$.
Then, it follows that, for any logical operator $\bar{Q}\in\{\bar{I}, \bar{X}, \bar{Y}, \bar{Z}\}^{\otimes k}$,
\begin{align}
    {\rm Tr}[\Pi_{\bm s} U \Pi_0 \bar{Q} U^\dagger] &=
    {\rm Tr}[\Pi_{\bm s} U_{\bm s} \Pi_0 \bar{Q} U_{\bm s}^\dagger] \nonumber \\
    &= p_{\bm s} {\rm Tr}[\Pi_{\bm s}E_{\bm s} \bar{V}_{\bm s}\Pi_0 \bar{Q} E_{\bm s}\bar{V}_{\bm s}^\dagger] \nonumber\\
    &= p_{\bm s}{\rm Tr}[\Pi_0 \bar{Q}]\nonumber \\
    &=\begin{cases}
        p_{\bm s} \cdot 2^k & \text{if $\bar{Q}=\bar{I}$} \\
        0 & \text{otherwise}
    \end{cases}. \label{eq:trace-eval}
\end{align}
From the second to third line, we have used the fact that operators consisting of logical operators ($\bar{Q}$ and $\bar{V}_{\bm s}$) and those consisting only of stabilizers and/or destabilizers ($E_{\bm s}$ and $\Pi_{\bm s}$) commute with each other.
Now decompose a logical state as
\begin{eqnarray}
    \bar{\rho} = \Pi_{\bm 0} \left(\frac{I}{2^k} + \sum_{\bar{Q}\in \{\bar{I}, \bar{X}, \bar{Y}, \bar{Z}\}^{\otimes k}} c_{\bar{Q}} \bar{Q}\right),\label{eq:logical_state}
\end{eqnarray}
where $c_{\bar{Q}}$ is the coefficient that encodes the logical information of the state.
By combining Eqs.~\eqref{eq:trace-eval},~\eqref{eq:logical_state} we find that the probability of obtaining the syndrome $s$ is independent of the logical state, since
$
{\rm Tr}[\Pi_{\bm s} U \bar{\rho}U^\dagger] = p_{\bm s}.
$
Concretely, by denoting syndrome measurement channel as $\mathcal{M}$ and the unitary channel as $\mathcal{U}$,  we find that
\begin{align}
    \mathcal{M} \circ \mathcal{U}(\bar{\rho}) =
    \mathcal{M}(U \bar{\rho} U^\dagger)
    &= \sum_{\bm s} p_{\bm s} \mathcal{E}_{\bm s} (\bar{V}_{\bm s} \bar{\rho} \bar{V}_{\bm s}^\dagger),
\end{align}
where we have defined $\mathcal{E}_{\bm s}(\cdot) \coloneqq E_{\bm s}\cdot E_{\bm s}$.
This indicates that unitary $U$ followed by syndrome measurement results in implementing logical unitary $\bar{V}_{\bm s}$ with probability $p_{\bm s}$, associated with Pauli error $E_{\bm s}$.
\end{proof}

\subsection{Construction of disjoint support partition}
\label{subsubsec:multi-XZ}
The argument in the previous section motivates us to construct an algorithm to compute disjoint support partition. Here, we provide an efficient algorithm to determine the disjoint support partition of arbitrary logical operator into partition number of an integer $M \le d$.

Let $H_X$ and $H_Z$ be the X- and Z-check matrices of an $[[n, k, d]]$ code, and let a syndrome vector for any Pauli operator $P=X^{\bm a} Z^{\bm b}$
be defined as
\begin{equation}
\bm s(\bm a, \bm b) \coloneqq (H_Z \bm a^\top, H_X \bm b^\top).
\end{equation}
Let $\bar P_{\mathrm{target}}=X^{\bm \alpha} Z^{\bm \beta}$ be a physical Pauli operator representation of a nontrivial logical Pauli operator, and let
\begin{equation}
S := \supp(\bm \alpha)\cup\supp(\bm \beta) \subseteq [n]
\end{equation}
be its support.  For each $j\in S$, we define its syndrome column vector in $\mathbb{F}_2^{r_Z+r_X}$ by
\begin{align}
\bm h_j &:= \bm s(\alpha_j \bm e_j, \beta_j \bm e_j) =
\bigl(\alpha_j\,H_Z(:,j),\; \beta_j\,H_X(:,j)\bigr),\\
\label{eq:hj}
&= \begin{cases}
\binom{H_Z \bm e_j}{0} \in \mathbb{F}_2^{r_Z+r_X} & \text{if } (\alpha_j, \beta_j)=(1, 0) \ (\text{syndrome vector for }X_j),\\[1ex]
\binom{0}{H_X \bm e_j} \in \mathbb{F}_2^{r_Z+r_X} & \text{if } (\alpha_j,\beta_j)=(0, 1) \ (\text{syndrome vector for }Z_j),\\[1ex]
\binom{H_Z \bm e_j}{H_X \bm e_j} \in \mathbb{F}_2^{r_Z+r_X} & \text{if } (\alpha_j, \beta_j)=(1, 1) \ (\text{syndrome vector for }Y_j),\\[1ex]
\end{cases}
\end{align}
where $H(:,j)$ denotes the $j$-th column and $\bm e_j$ is the $j$-th standard basis vector.
Note that $\bm s(\bm \alpha,\bm \beta)=\sum_{j\in S} \bm h_j$.

As discussed in the previous section, if $\{P_m\}_{m=1}^{M}$ is a disjoint support partition of the logical operator $\bar{P}_{\rm target}$, for any proper subset $ R\subsetneq [M]$, the product $\prod_{m\in R} P_m$ invokes a different syndrome. This can be formally stated as the following Lemma.

\begin{lemma}[Rank of restricted check matrix]\label{lem:rank_condition_xz}
Let $C$ be an $[[n,k,d]]$ CSS code with check matrices
$H_X\in\mathbb{F}_2^{r_X\times n}$ and $H_Z\in\mathbb{F}_2^{r_Z\times n}$.
Let $\bar{P}_{\mathrm{target}}=X^{\bm \alpha}Z^{\bm \beta}$ be a nontrivial logical Pauli operator, and define its support as $S \coloneqq \supp(\bm \alpha)\cup \supp(\bm \beta)$.
Let $H_S$ be the $(r_Z+r_X)\times |S|$ matrix whose columns are $\{\bm h_j\}_{j\in S}$.
Then,
the rank of the restricted check matrix can be bounded from below as
\begin{equation}
{\rm rank}(H_S)\ \ge\ d-1. \label{eq:rank-condition}
\end{equation}
\end{lemma}

\begin{proof}
Because $\bar{P}_{\mathrm{target}}$ is a logical operator, it commutes with all stabilizer
generators, hence it has trivial syndrome. For CSS codes, this means
$H_Z \bm \alpha^\top=\bm 0$ and $H_X \bm \beta^\top=\bm 0$. Equivalently,
\[
\sum_{j\in S} \bm h_j \;=\; 0,
\]
implying that the column set $\{\bm h_j\}_{j\in S}$ is linearly dependent.

We show Eq.~\eqref{eq:rank-condition} by contradiction.
Assume for contradiction that $\rank(H_S)\le d-2$.
Since the columns are dependent, there exists a minimal dependent subset
$D\subseteq S$ that satisfies $\sum_{j\in D} \bm h_j=0$ and every nonempty proper subset of $\{\bm h_j\}_{j \in D}$
is linearly independent. For such a $D$, we have $\rank(H_D)=|D|-1$ where $H_D$ is the restriction
of $H_S$ to the columns indexed by $D$. Therefore,
\[
|D|-1=\rank(H_D)\le \rank(H_S)\le d-2 \quad\Rightarrow\quad |D|\le d-1.
\]

Let $\bar{P}_{\rm target}|_D$ be the restriction of $\bar{P}_{\mathrm{target}}$ to the qubits in $D$.
By construction, $\sum_{j\in D} \bm h_j=0$ holds, hence $P_D$ has trivial syndrome and thus lies in
the normalizer. Its weight is $\mathrm{wt}(P_D)=|D|\le d-1<d$.
By the definition of the code distance $d$, no nontrivial logical operator can have weight
$<d$. Hence $P_D$ must be a stabilizer (up to a phase).

Now multiply $\overline{P}_{\mathrm{target}}$ by this stabilizer $P_D$; this removes the
factors on $D$ while keeping the syndrome trivial. Repeating the same argument on the
remaining support, we can decompose $\overline{P}_{\mathrm{target}}$ as a product of
stabilizers, implying that it itself is a stabilizer, contradicting the nontriviality.
Therefore, $\rank(H_S)\ge d-1$.
\end{proof}

Using this Lemma, we can prove the following Proposition.

\begin{proposition}[Construction of disjoint support partition]
\label{prop:multi-xz}
Let $C$ be an $[[n,k,d]]$ CSS code, and $M$ be an odd number satisfying $M \leq d$.
Then, Algorithm~\ref{alg:pivoting_xz} outputs binary vectors $\{(\bm A_m,\bm B_m)\}_{m=1}^M$ such that
$\{X^{\bm A_m}Z^{\bm B_m}\}_{m=1}^M$ is a disjoint support partition of $\bar{P}_{\mathrm{target}}$ in the sense of Definition~\ref{def:disjoint-support-partition}.
\end{proposition}

\begin{proof}
Algorithm~\ref{alg:pivoting_xz} chooses $S' \subsetneq S$ with $|S'| = M-1$ such that $\{\bm h_j\}_{j\in S'}$ are linearly independent,
which exists by Lemma~\ref{lem:rank_condition_xz}. Define
\[
P_m := X^{\bm A_m}Z^{\bm B_m}.
\]
Because any $P_m$ with $1 \leq m \leq M-1$ act on distinct single qubits and $P_M$ acts on the remaining qubits,
the supports are pairwise disjoint and $\bar {Z}_{\mathrm{target}}=\prod_{m=1}^M P_m$.

For the condition where there is no proper subproduct on code space,
note that $\bm s(\bar P_{\mathrm{target}})=0$ and syndromes that add up under multiplication of operators.
Due to the independence of syndromes, for every nonempty proper subset $R \subsetneq [M]$, it also holds that $\bm s(\prod_{m\in R} P_m) \neq 0$, which implies $\Pi_0 (\prod_{m\in R} P_m)\Pi_0 = 0$ for every such $R$. Together  with the above argument, $\{P_m\}_{m=1}^M$ is a disjoint support partition of $\bar{P}_{\rm target}$.

\end{proof}

\begin{algorithm}
\caption{Disjoint support partition for a general logical Pauli operator}
\label{alg:pivoting_xz}
\KwIn{CSS check matrices $H_X\in \mathbb{F}_2^{r_X\times n}$, $H_Z\in \mathbb{F}_2^{r_Z\times n}$; \\
\hspace{3.7em}
Target logical operator
$P_{\mathrm{target}}= \bar{X}^{\bm \mu} \bar{Z}^{\bm \nu} = X^{\bm \alpha}Z^{\bm \beta}$.
}
\KwOut{Disjoint support partition $\{X^{\bm A_m}Z^{\bm B_m}\}_{m=1}^M$ for $P_{\mathrm{target}}$.
}

$S \gets \supp(\bm \alpha)\cup\supp(\bm \beta)$\;

\For{$j\in S$}{
    $\bm h_j \gets (\alpha_j\,H_Z(:,j),\; \beta_j\,H_X(:,j))
    \in \mathbb{F}_2^{r_Z+r_X}$\;
}

Find $S'\subsetneq S$ such that $|S'| = M-1$ and
$\{\bm h_j\}_{j\in S'}$ are linearly independent, e.g., via Gaussian elimination\;

\For{$j_m \in S'$}{
$X^{\bm A_m} Z^{\bm B_m} \gets \bar{P}_{\rm target}|_{j_m}$\;
}

$\bm A_M \gets \bm \alpha - \sum_{m=1}^{M-1}\bm A_m, \quad \bm B_M \gets \bm \beta - \sum_{m=1}^{M-1}\bm B_m$\;

\Return{$\{(\bm A_m,\bm B_m)\}_{m=1}^M$}\;
\end{algorithm}

With Lemma~\ref{lem:weak-pauli-rot-disjoint} and \Cref{prop:multi-xz}, we complete the proof of \Cref{thm:general-pauli-rot}.

\section{Coherent error suppression in probabilistic implementation} \label{app:mixed-synthesis}

Let $\mathcal{H}$ be the target Hilbert space of interest, and $\mathcal{L}(\mathcal{H})$ and $\mathcal{B}(\mathcal{H})$ be the set of linear operator and bounded linear operators acting on $\mathcal{H}$.
Let the diamond distance between two quantum maps $\Phi : \mathcal{L}(\mathcal{H}) \to \mathcal{L}(\mathcal{H})$ and $\Psi:\mathcal{L}(\mathcal{H}) \to \mathcal{L}(\mathcal{H})$ be defined as follows,
\begin{align}
    \ddist(\Phi, \Psi) := \| \Phi - \Psi\|_{\diamond}/2,
\end{align}
where the diamond norm for a map is defined using the trace norm $\|\cdot\|_1 := \tr[\sqrt{\cdot \cdot ^\dagger}]$ as
\begin{align}
    \|\Phi\|_{\diamond} = \max_{X\in \mathcal{B}(\mathcal{H})} \frac{\|(\Phi\otimes \mathds{1}) (X)\|_1}{\|X\|_1}.
\end{align}

Now let us consider a Pauli Z rotation channel $\mathR_\theta = e^{i \theta Z} \cdot e^{-i\theta Z}.$
It can be shown that the diamond distance from the identity channel is
\begin{align}
    \ddist(\mathR_\theta, \mathI) = \ddist(\mathR_{-\theta}, \mathI) = |\sin \theta|.
\end{align}
Next, assume that we cannot implement the identity channel itself (or the noiseless target unitary channel), but we have access to channels that deviate by $\mathR_{\theta}$ and $\mathR_{-\theta}$. Then, by simply taking the average over the channels, we can quadratically suppress the error~\cite{hastings2016turning}:
\begin{align}
\ddist\left(\frac{\mathR_{\theta} + \mathR_{-\theta}}{2}, \mathI\right) = \sin^2\theta.
\end{align}

Furthermore, assume that the angles of the accessible channels $\mathR_{\theta_1>0}$ and $\mathR_{\theta_2<0}$ do not sum up to zero, i.e., $\theta_1 + \theta_2 \neq 0$. Even in such a case, it can be shown that optimal mixture of two channels achieve quadratic improvement, as follows~\cite{akibue2024probabilistic, yoshioka2025error}:
\begin{lemma}
    Let $\mathR_{\theta_1}$ and $\mathR_{\theta_2}$ be Z-rotation channels with $\theta_1 > 0$ and $\theta_2 < 0,$ and let $p_1, p_2 \in [0, 1]$ satisfy $p_1+p_2=1$.
    Then, it holds that
    \begin{align}
        \min_{p_1,p_2}\ddist\left(p_1 \mathR_{\theta_1} + p_2 \mathR_{\theta_2}, \mathI\right) \leq \epsilon^2,
    \end{align}
where $\epsilon = \max\{\ddist(\mathR_{\theta_1}, \mathI), \ddist(\mathR_{\theta_2}, \mathI)\}$.

\end{lemma}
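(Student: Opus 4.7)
The plan is to pass to the Pauli-basis decomposition of the superoperator. Writing each rotation as $\mathR_\theta(\rho) = K_\theta \rho K_\theta^\dagger$ with $K_\theta = \cos\theta\, I + i\sin\theta\, Z$, a direct expansion gives
\begin{equation*}
p_1 \mathR_{\theta_1}(\rho) + p_2 \mathR_{\theta_2}(\rho) \;=\; \alpha\,\rho \;+\; \beta\, Z\rho Z \;+\; i\gamma\,(Z\rho - \rho Z),
\end{equation*}
with $\alpha = \sum_j p_j \cos^2\theta_j$, $\beta = \sum_j p_j \sin^2\theta_j$, $\gamma = \tfrac{1}{2}\sum_j p_j \sin(2\theta_j)$, and $\alpha + \beta = 1$. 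The $i\gamma\,(Z\cdot - \cdot Z)$ piece is the anti-Hermitian coherent contribution that prevents the mixture from being a Pauli channel, and the whole strategy is to choose $(p_1,p_2)$ so as to cancel it.

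Since $\theta_1 > 0 > \theta_2$, and implicitly $|\theta_j|<\pi/2$ so that $\mathrm{sgn}(\sin 2\theta_j)=\mathrm{sgn}(\theta_j)$, the two terms contributing to $\gamma$ carry opposite signs, and the linear constraint $\gamma = 0$ together with $p_1+p_2=1$ admits the unique convex solution
\begin{equation*}
p_1^\star \;=\; \frac{|\sin(2\theta_2)|}{\sin(2\theta_1) + |\sin(2\theta_2)|}\in(0,1),\qquad p_2^\star \;=\; 1 - p_1^\star.
\end{equation*}
At this value the mixture collapses to the pure dephasing channel $\Phi^\star = (1-\beta^\star)\mathI + \beta^\star\, Z(\cdot)Z$ with dephasing strength $\beta^\star = p_1^\star\sin^2\theta_1 + p_2^\star\sin^2\theta_2$. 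A standard Pauli-channel computation (the supremum defining the diamond norm is saturated on a maximally entangled input, on which $\Phi^\star\otimes\mathrm{id}$ produces $(1-\beta^\star)|\Phi^+\rangle\langle\Phi^+| + \beta^\star|\Phi^-\rangle\langle\Phi^-|$) yields $\ddist(\Phi^\star,\mathI) = \beta^\star$.

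The bound claimed in the lemma is now pure convexity: $\beta^\star \le \max_j \sin^2\theta_j$, which by the already-noted identity $\ddist(\mathR_{\theta_j},\mathI) = |\sin\theta_j|$ equals $\epsilon^2$; passing to the actual minimum over $(p_1,p_2)$ can only improve this. The main obstacle I foresee is not analytical but rather making the sign hypothesis precise: the argument requires $|\theta_j|<\pi/2$ for the convex solution $p_1^\star\in(0,1)$ to exist, which should be stated as an implicit assumption consistent with the small-angle regime motivating the lemma. A secondary technical step is the diamond-norm identification for the dephasing channel, which I would cite as a standard fact rather than reprove.
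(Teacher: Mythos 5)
Your proposal is correct, and it is worth noting that the paper itself does not prove this lemma at all --- it is stated in Appendix~\ref{app:mixed-synthesis} with only a citation to external work, so yours is a genuine standalone argument rather than a rederivation of something in the text. The structure is sound: expanding $\mathR_{\theta_j}$ in the Pauli basis, the only non-Pauli piece is the anti-Hermitian commutator term with coefficient $\tfrac12\sum_j p_j\sin(2\theta_j)$; since $\theta_1>0>\theta_2$ these contributions have opposite signs (given $|\theta_j|<\pi/2$), so a convex weight $p_1^\star\in(0,1)$ cancels it exactly, leaving a pure dephasing channel with strength $\beta^\star=\sum_j p_j^\star\sin^2\theta_j$. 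The identification $\ddist\bigl((1-\beta)\,\mathI+\beta\,Z(\cdot)Z,\ \mathI\bigr)=\beta$ is standard (the difference is $\beta$ times the difference of two perfectly distinguishable unitary channels), and $\beta^\star\le\max_j\sin^2\theta_j=\epsilon^2$ by convexity, using the identity $\ddist(\mathR_\theta,\mathI)=|\sin\theta|$ that the paper states just above the lemma. Evaluating the objective at one feasible point upper-bounds the minimum, so no optimality claim is needed. Your caveat about the angle range is not merely cosmetic: for, say, $\theta_1=3\pi/4$ and $\theta_2=-\pi/4$ one has $\mathR_{\theta_1}=\mathR_{\theta_2}$ as channels, every mixture has diamond distance $1/\sqrt{2}$ from $\mathI$, yet $\epsilon^2=1/2$, so the lemma as literally stated is false without the implicit restriction $|\theta_j|<\pi/2$; flagging this as a hypothesis is the right call.
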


\section{Details on numerics for repeat-until-success protocol}\label{app:rus-numerics}

\subsection{Decoders for repeat-until-success}
\label{subsubsec:inplace-rot-decoders}
In the circuit-level repeat-until-success (RUS) simulations, the syndrome
record contains two qualitatively different contributions.  
The first is the
usual signal caused by physical faults during syndrome
extraction, and the second is induced by weak transversal gadget. When only a single round of RUS is performed, as in the small angle regimes or postselected case, we numerically found that it suffices to treat with an {\it one-step decoder} which does not discriminate two errors. Meanwhile, in the case when we perform multiple rounds of RUS, the failure in earlier rounds applies physical Pauli operations to some physical qubits, and hence one must discriminate two errors so that memory decoder would not harmfully introduce additional error.
We refer to the RUS-aware decoder that separates these two
contributions before applying the ordinary memory decoder as the {\it two-step
decoder},
and describe how it is implemented in the following.

Let $s_{\rm raw}^{(r)}$ be the raw syndrome obtained from the $r$-th round of RUS branch, and the syndromes stored in the memory be given as $s^{(r-1)}_{\rm mem}$ and $s_{\rm RUS}^{(r-1)}$ for the memory and RUS protocol, respectively. 
We first construct a detector $\Delta s^{(r)} = s_{\rm raw}^{(r)} \oplus s^{(r-1)}_{\rm raw}$. Then, one use a pre-computed lookup table that considers the entire neighboring check detector pattern to decide whether this increment is attributed to error supported on the weak transversal gadget or not. This enable us to split the detector as $d_{\rm RUS}^{(r)}\oplus d_{\rm mem}^{(r)} = \Delta s^{(r)}$.
The RUS detector $d_{\rm RUS}^{(r)}$ determines the physical operation $E_{\bm s}$ applied to the logical support, and hence related with the logical angle applied to the system. This is used to decide whether or not to terminate the RUS protocol.
The memory detector $d_{\rm mem}^{(r)}$ is used to compute the memory syndrome $s_{\rm mem}^{(r)} = s_{\rm mem}^{(r-1)}\oplus d_{\rm mem}^{(r)}$, which is stored on a classical register. 
If the RUS protocol finishes in $R$ rounds, one can perform decoding for the entire set of memory detector $\{d_{\rm mem}^{(r)}\}_{r=1}^R.$

For the numerical simulation of weak transversal gadget, we have initialized the system so that no nontrivial syndrome exists at the beginning of the RUS protocol, as $s_{\rm raw}^{(0)} = s_{\rm mem}^{(0)} = s_{\rm RUS}^{(0)} = 0$. Also, after the RUS protocol, we have performed transversal Pauli measurement on the data qubits, and thus the decoding graph for the memory is similar to the ordinary memory experiment. We leave the extension to window decoding as an interesting future work.

\subsection{Optimization of implementation policy}

The RUS protocol for weak transversal gadget is not unique, in the sense that we have degrees of freedom to choose physical angles and termination condition. Here, we discuss how to optimize the protocol in two stages.
First, we use inexpensive repetition-code simulations, and, when appropriate, code-capacity branch-policy searches, to identify promising termination policies and physical-angle rules.
Second, we evaluate the resulting candidates in the circuit-level surface-code simulation (using matrix product state, or MPS) and selected the best point after scanning the maximum number of RUS rounds.
This separation is useful because the latter is too expensive to optimize directly over both the branch policy and the continuous physical angles.

For the one-step decoder, we optimized the physical angles by minimizing the trace distance
between the output logical state $\bar{\rho}_{\rm out}$ and the ideal state $\bar{\rho}_{\rm ideal}$ in the repetition-code density-matrix simulation:
\begin{align}
    D_{\rm tr}(\rho_{\rm out},\rho_{\rm ideal})
    := \frac{1}{2}\|\rho_{\rm out}-\rho_{\rm ideal}\|_1.
\end{align}
The one-step decoder does not apply a Pauli-frame correction before the final maximum-likelihood decoding, so the dominant effect of the angle optimization is to make the averaged logical output channel close to the target logical rotation.
As briefly mentioned in the privious subsection, for the small-angle regime considered in the surface-code simulations, the best repetition-code strategy was obtained with a single RUS round, $R=1$, using the optimized physical angle.
As a representative example, for $\barthetatarget=10^{-4}$, the unoptimized physical angle makes the trivial-syndrome branch close to the target angle, but the rare nontrivial-syndrome branches produce a large opposite rotation.
The optimized physical angle shifts the trivial branch slightly away from the target so that the average logical channel is much closer to the desired unitary.
The same optimized $R=1$ one-step strategy was then used in the circuit-level simulation for surface code, where it remained the best low-angle candidate among the strategies tested.

For the two-step decoder, we considered a broader class of policies because the Pauli-frame correction makes the result sensitive to the history of observed syndromes.
We first tested fixed termination schedules, specified by a target value of the Pauli correction weight $\chi$ in each round. For instance, we denote $\chi_{\rm target}=[0,0, 1]$ to indicate that the RUS protocol terminates if $\chi = 0$ for the first and second round, and $\chi=1$ for larger number of rounds.
For these fixed schedules, we compared analytic, unoptimized physical angles against physical angles optimized in the density-matrix simulation for repetition code.
While we did not exhaustively explore in this work,  the policy family can be even more diverse; we may perform a search over history-dependent branch policies based on the code-capacity noise  model.
Namely, the target termination value of $\chi$ can be chosen from the current branch history: the first round uses a root target, the next two failed rounds use maps depending on the last recorded $\chi$, and later failed rounds use a tail map. It is interesting direction how to efficiently optimize over such policies, and to investigate what is the potential room of further improvement.

\subsection{Error metric}

In the code-capacity simulations in the main text, we evaluate the logical channel itself and report the diamond distance $\ddist$ from the target logical unitary.
This channel-level metric is natural for comparing coherent logical errors, but it becomes numerically expensive in the circuit-level simulations.
In particular, reconstructing the full logical channel for a surface-code repeat-until-success circuit would require simulating the logical process for multiple input states while resolving the measurement-feedback branches.
We therefore use state-based error metrics for the circuit-level simulations.

For the circuit-level data, we initialize the logical qubit in $\ket{0_L}$ and compare the decoded output state $\rho_{\rm out}$ with the ideal output state $\rho_{\rm id}$ for the same target angle $\barthetatarget$.
When direct density-matrix simulation is available, we compute the trace distance between the output logical state $\bar{\rho}_{\rm out}$ and the ideal state $\bar{\rho}_{\rm ideal}$ as
\begin{align}
    D_{\rm tr}(\rho_{\rm out},\rho_{\rm ideal})
    := \frac{1}{2}\|\rho_{\rm out}-\rho_{\rm ideal}\|_1.
\end{align}
For circuit-level simulation for the surface code, we instead estimate an effective error from the logical Pauli expectation value.
For the logical $X$-rotation instances considered in the circuit-level simulations, the ideal output from $\ket{\bar{0}}$ satisfies
\begin{align}
    \langle \bar{Z}\rangle_{\rm ideal} = \cos(2\barthetatarget),
    \qquad
    \langle \bar{Y}\rangle_{\rm ideal} = \sin(2\barthetatarget),
\end{align}
up to the sign convention for the rotation direction.
We define
\begin{align}
    \epsilon_Z
    :=
    \frac{1}{2}
    \left|
    \langle \bar{Z}\rangle_{\rm out}
    -
    \langle \bar{Z}\rangle_{\rm ideal}
    \right|,
    \qquad
    \epsilon_Y
    :=
    \frac{1}{2}
    \left|
    \langle \bar{Y}\rangle_{\rm out}
    -
    \langle \bar{Y}\rangle_{\rm ideal}
    \right|.
\end{align}
In general, such expectation-value deviations are only lower bounds on the trace distance:
for any observable $O$ with $\|O\|_\infty\leq 1$, since
$
    \frac{1}{2}\left|\tr\left[O(\rho-\sigma)\right]\right|
    \leq
    D_{\rm tr}(\rho,\sigma)$.
In this sense, although $\epsilon_Z$ and $\epsilon_Y$ are not guaranteed to reproduce the full error for an arbitrary logical noise channel, we find that the proxy $\epsilon_Z$ is in particular faithful for simulations in this work.

\begin{figure*}[t]
    \centering
    \includegraphics[width=0.65\linewidth]{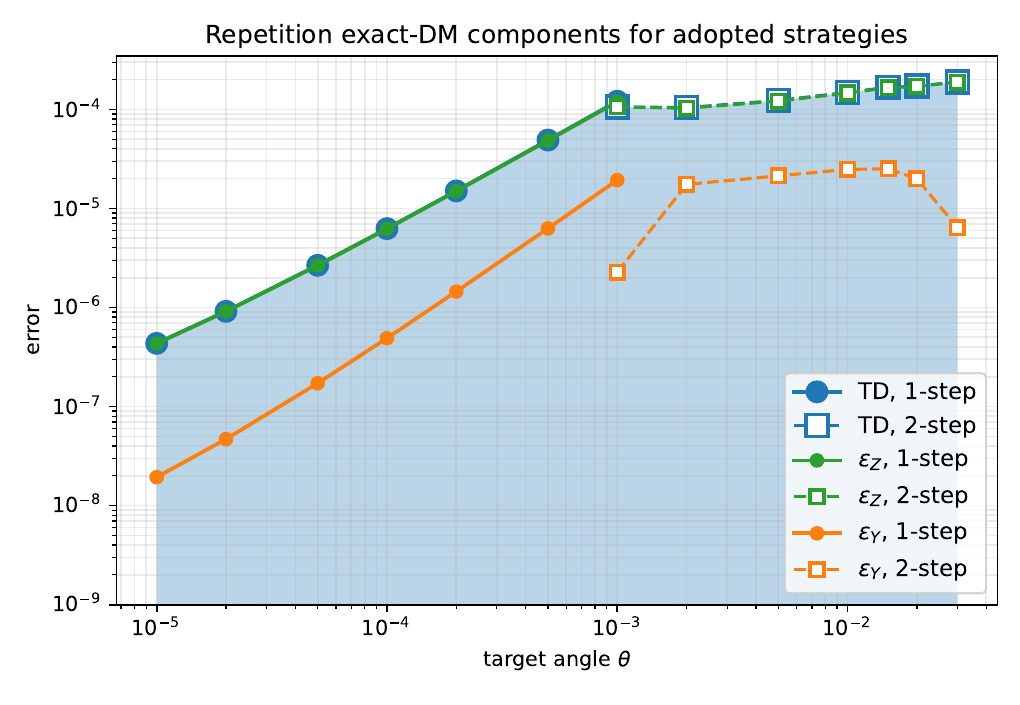}
    \caption{
    Validation of the circuit-level error metric using exact density-matrix simulations of the bit-flip repetition code.
    We compare the trace distance from the ideal logical output state with the effective errors inferred from logical Pauli expectation values, $\epsilon_Z$ and $\epsilon_Y$, for the same class of repeat-until-success strategies used in the surface-code simulations.
    In both the optimized one-step-decoder regime and the two-step-decoder regime, $\epsilon_Z$ closely tracks the trace distance, whereas $\epsilon_Y$ is subdominant.
    }
    \label{fig:app-error-metric-repetition-dm}
\end{figure*}

In Fig.~\ref{fig:app-error-metric-repetition-dm}, we show the numerical results of exact density-matrix circuit-level simulations for the bit-flip repetition code.
The $Z$-expectation-based estimate $\epsilon_Z$  reproduces the trace distance over the relevant range of target angles and for both decoder choices.
The $Y$-expectation contribution is much smaller in the optimized one-step-decoder regime and remains below the dominant $Z$ component in the two-step-decoder regime.
This indicates that the trace distance is dominated by the component visible in $\langle Z_L\rangle$ for the RUS strategies considered here.

The circuit-level simulation of the surface code does not give the full logical channel at comparable cost, but their error curves follow the same qualitative envelope as the circuit-level simulation of the bitflip code.
Moreover, the memory-only checks show that the independent memory contribution is suppressed below the RUS-induced error in the relevant parameter regime.
We therefore use $\epsilon_Z$ as the reported effective circuit-level error rate for surface code as well.

\subsection{Error scaling for small angles}
\label{subsec:rus-codecapacity-scaling}

Here, we briefly explain the error scaling of code-capacity simulation quoted in the main text.
In the small-angle regime, the ordinary RUS envelope is governed by first-order support faults that cause the decoder to assign a neighboring RUS branch.
The probability of such a branch-misclassification event is $\mathcal{O}(\pphys)$.
Expanding Eq.~\eqref{eq:logical-theta-expression} around the branch selected by the optimized policy and using the small-angle relation $\theta=\mathcal{O}(\bartheta^{1/M})$, the residual logical-angle error associated with one such misclassified branch scales as $\mathcal{O}(\bartheta^{1-1/M})$ for the code-capacity policy used in Fig.~\ref{fig:rus_history}(b).
The induced diamond distance therefore scales as
\begin{align}
    d_{\diamond}^{\rm RUS}
    =
    \mathcal{O}\!\left(\pphys \bartheta^{1-1/M}\right)
    + \mathcal{O}(\pphys^2),
\end{align}
up to constants depending on the selected branch policy.
This estimate is intended for the ordinary RUS envelope; the diluted protocol changes the angle dependence by probabilistically mixing this larger-angle channel with the identity channel, as discussed in Appendix~\ref{app:mixed-synthesis}.

\subsection{Implemented protocol}
\label{subsec:surface-m3-implemented-protocol}

For the rotated surface-code circuit-level simulations with $d=M=3$, we used the protocols summarized in \cref{tab:surface-m3-implemented-protocol}.
The reported logical error is the $Z$-based effective logical error estimated by the circuit level simulation of surface code using MPS.

\begin{table*}[t]
\centering
\begin{tabular}{|c|c|c|c|c|}
\hline
Target $\bartheta$ & Logical error $\bar p_Z^{\rm eff}$ & Decoder & Best $R$ & Policy \\
\hline \hline
$5\times 10^{-5}$ & $(3.00\pm1.73)\times 10^{-6}$ & 1-step & 1 & $\chi_{\rm target}=[0]$; optimized \\
\hline
$1\times 10^{-4}$ & $(3.99\pm2.00)\times 10^{-6}$ & 1-step & 1 & $\chi_{\rm target}=[0]$; optimized \\
\hline
$2\times 10^{-4}$ & $(1.50\pm0.39)\times 10^{-5}$ & 1-step & 1 & $\chi_{\rm target}=[0]$; optimized \\
\hline
$3\times 10^{-4}$ & $(2.39\pm0.49)\times 10^{-5}$ & 1-step & 1 & $\chi_{\rm target}=[0]$; optimized\\
\hline
$5\times 10^{-4}$ & $(5.48\pm0.74)\times 10^{-5}$ & 1-step & 1 & $\chi_{\rm target}=[0]$; optimized \\
\hline
$1\times 10^{-3}$ & $(1.30\pm0.11)\times 10^{-4}$ & 1-step & 1 & $\chi_{\rm target}=[0]$; optimized \\
\hline
$2\times 10^{-3}$ & $(3.52\pm0.19)\times 10^{-4}$ & 2-step & 15 & $\chi_{\rm target}=[0,0,1]$; unoptimized \\
\hline
$5\times 10^{-3}$ & $(4.48\pm0.22)\times 10^{-4}$ & 2-step & 28 & $\chi_{\rm target}=[0,1]$; unoptimized \\
\hline
$1\times 10^{-2}$ & $(5.06\pm0.25)\times 10^{-4}$ & 2-step & 21 & $\chi_{\rm target}=[0,0,1]$;unoptimized \\
\hline
$1.5\times 10^{-2}$ & $(6.37\pm0.29)\times 10^{-4}$ & 2-step & 15 & $\chi_{\rm target}=[0,1]$; unoptimized \\
\hline
$2\times 10^{-2}$ & $(5.65\pm0.31)\times 10^{-4}$ & 2-step & 20 & $\chi_{\rm target}=[0,1]$; unoptimized \\
\hline
$3\times 10^{-2}$ & $(7.22\pm0.40)\times 10^{-4}$ & 2-step & 20 & $\chi_{\rm target}=[0,1]$; unoptimized \\
\hline
\end{tabular}
\caption{
Implemented protocols used for the rotated surface-code circuit-level simulations with $d=M=3$.
The logical error is the $Z$-based effective logical error from the hook-aware circuit level simulation for surface code using MPS at $\pphys=10^{-4}$ with depolarizing physical noise; the uncertainty denotes one standard error from the $10^6$-shot estimates.
$\chi_{\rm target}$ indicates the target value of $\chi$ used at $r$-th round. 
The ``optimized" policy indicates that physical angles are optimized to minimize the diamond distance for circuit-level density matrix simulation using $d=3$ repetition code.
}
\label{tab:surface-m3-implemented-protocol}
\end{table*}

\section{CNOT scheduling during syndrome extraction}\label{app:scheduling-d7-surface}
As described in the main text, a weak transversal Pauli rotation can be implemented by first applying $M$ physical Pauli rotations and subsequently performing syndrome extraction.
When $M$ equals the physical Pauli weight of the target logical operator, the physical rotations can be chosen to be single-qubit rotations, so the additional circuit overhead is essentially negligible.
In contrast, when $M$ is strictly smaller than the weight of the target logical operator, at least some of the physical rotations are multiqubit Pauli rotations.
Because most hardware platforms do not natively support arbitrary multiqubit Pauli rotations, a direct implementation would require decomposing them into single-qubit Pauli rotations sandwiched by entangling gates, thereby introducing additional opportunities for noise.

In this Appendix, we show how this additional overhead can be suppressed by embedding the required multiqubit rotations into the syndrome-extraction circuit.
The mechanism uses the same error-propagation structure that gives rise to hook errors; we refer to the deliberate use of this structure as \emph{hook doping}.
We illustrate the construction for the $d=7$ rotated surface code with $M=5$ in Fig.~\ref{fig:hook-doping}.
The basic idea is to push physical rotation gates inside the syndrome-extraction circuit with modified CNOT ordering so that conjugation by the surrounding CNOTs maps the rotations to the desired physical Pauli rotations.
Although the resulting hook-doped schedule reduces the memory fault distance, we find that such a  reduction does not affect the overall logical error rate in the parameter regime where the intrinsic error of the weak transversal protocol is the dominant contribution.

To implement a weak transversal gate using $M<d$ physical rotations, we intentionally introduce the corresponding hook propagation.
As an example, Fig.~\ref{fig:hook-doping}(c) shows a partial modification of the CNOT-layer ordering for $Z$-type stabilizers.
This schedule has three useful properties:
\begin{itemize}
    \item The CNOT depth is unchanged; no additional CNOT layer is required.
    \item $X$-type syndrome extraction is unaffected, so the fault distance for the logical $X$ operator remains $d=7$.
    \item Although the fault distance for the logical $Z$ operator is reduced to $5$, the effect on the overall performance is small.
\end{itemize}
For the first point, Fig.~\ref{fig:distance-7-inplace-rot-steps} shows that the schedule indeed contains no additional CNOT layers beyond the original four.
The only extra layer is an $R_{\rm Z}$ layer, which is much faster than entangling gates on most platforms and can be virtual on superconducting devices.
To verify the second and third points, we performed numerical simulations; see Fig.~\ref{fig:hook-doping-LER}.
Because $X$-type syndrome extraction is unchanged, the logical-error-rate scaling $O(p^4)$ for the $Z$-memory circuit remains consistent with fault distance $7$.
By contrast, in the partially hook-doped circuit, the fault distance for logical $Z$ errors is reduced to $5$, consistent with an $O(p^3)$ scaling.
We also simulated a fully hook-doped schedule obtained by reversing the CNOT ordering for all $Z$-type stabilizers.
In this case the fault distance drops to $\lceil d/2\rceil=4$, giving an $O(p^2)$ logical-error-rate scaling.

Although the drop of fault distance  is undesirable from the viewpoint of quantum memory, its quantitative impact is limited in the regime of interest. For instance, at $\pphys=10^{-4}$, the memory logical error rate remains below $10^{-7}$. This value is comparable to the logical error rate of in-place Pauli rotations only when the target logical angle is as small as $\bartheta\lesssim 10^{-5}$. By contrast, for practical computations using a distance-$7$ surface code, we expect the relevant regime to be $\bartheta\gtrsim 10^{-4}$, in which the intrinsic error of the weak-transversal rotation dominates over the additional memory error introduced by hook doping. Thus, in this regime, hook doping does not degrade the overall performance.

\if0
When we want to implement weak transversal gate using $M < d$ physical rotation gates, we must intentionally invoke hook errors.
As an example, we can partially modify the CNOT layer ordering for Z stabilizers as indicated in Fig.~\ref{fig:hook-doping}(c).
Note that such a scheduling is favorable because
\begin{itemize}
    \item CNOT depth is unchanged. Namely, there is no conflict between unmodified CNOTs.
    \item X-type syndromes extraction is unaffected, and hence fault distance for logical X operator remain as $d=7$.
    \item Although the fault distance of logical Z operator has dropped to $5$, the effect on the actual performance is not significant.
\end{itemize}
For the first point, we can see from Fig.~\ref{fig:distance-7-inplace-rot-steps} that indeed there is no additional round of CNOT layers; in addition to the original 4 layers, we only have the $R_{\rm Z}$ layer which can be performed much faster than entangling gates in most platforms.
To verify the second and third points, we have performed numerical simulation (See Fig.~\ref{fig:hook-doping-LER}). Reflecting the fact that X-type syndrome extraction is unchanged, the scaling of logical error rate $O(p^{\lfloor \frac{d+1}{2}\rfloor}) =O(p^4)$ for Z memory circuit remains consistent with fault distance of 7.
Meanwhile, in the partially hook-doped circuit shown in Fig.~\ref{fig:hook-doping}(c), the fault distance drops to 5, which yields the $O(p^3)$ scaling.
Although such a reduction is undesirable from the viewpoint of quantum memory, its quantitative impact is limited in the regime of interest. For instance, at $\pphys=10^{-4}$, the memory logical error rate remains below $10^{-7}$. This value is comparable to the logical error rate of in-place Pauli rotations only when the target logical angle is as small as $\bartheta\lesssim 10^{-5}$. By contrast, for practical computations using a distance-$7$ surface code, we expect the relevant regime to be $\bartheta\gtrsim 10^{-4}$, in which the intrinsic error of the weak-transversal rotation dominates over the additional memory error introduced by hook doping. Thus, in this regime, hook doping does not degrade the overall performance.

As an instructive comparison with partial hook doping, we can flip
the CNOT ordering for all $Z$-type stabilizers, which we refer to as
the ``fully hook-doped'' circuit; see Fig.~\ref{fig:hook-doping}(d).
In this case, the fault distance is reduced to
$\lceil d/2\rceil$, which is 4 for $d=7$, and hence the memory
logical error rate scales as $O(p^2)$. Equivalently, for a planar
surface-code patch, weak-transversal rotations with
$M\ge \lceil d/2\rceil$ can be accommodated without increasing the
CNOT depth, at the cost of reducing the memory fault distance to
$\lceil d/2\rceil$.
\fi

In summary, this example shows that a modified syndrome-extraction schedule can be used to implement multiqubit Pauli rotations required for weak transversal gates without adding CNOT depth.
Although the discussion here focused on a single rotated-surface-code patch, the same idea can in principle be applied to more general codes, provided that the relevant Pauli rotations can be embedded into their syndrome-extraction circuits through controlled error propagation.

\begin{figure}[t]

    \centering

    \includegraphics[width=0.98\linewidth]{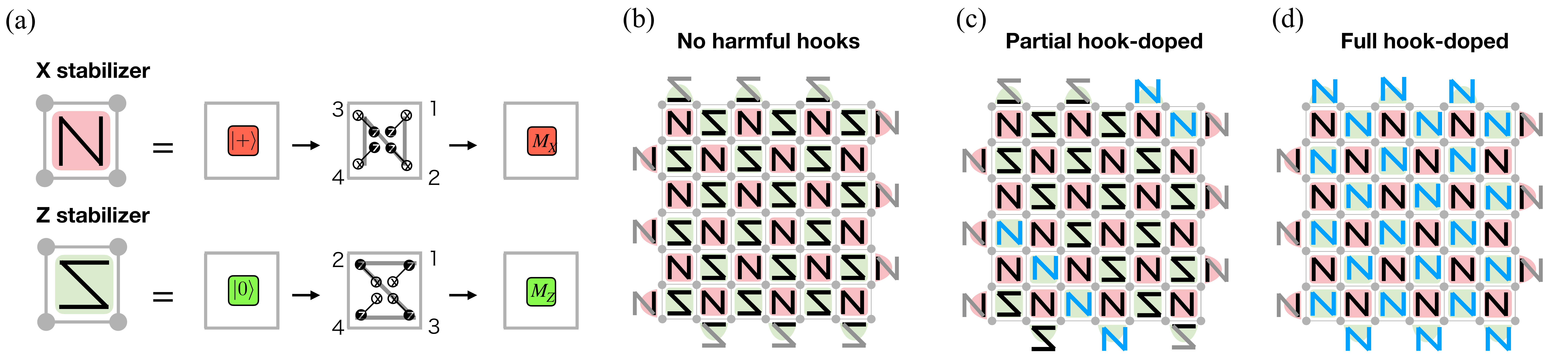}

    \caption{(a) Graphical notation for CNOT orderings for $X$- and $Z$-type stabilizer measurements. (b) An ordinary syndrome-extraction schedule that preserves the code distance of 7 at the circuit level. (c) A hook-doped schedule. The blue strokes indicate the modified parts of the schedule, which intentionally invokes harmful hooks for the sake of implementing ZZ  rotation gate. Here the fault distance is 5.
    (d) Fully hook-doped schedule. Here the fault distance is 4.
    }

    \label{fig:hook-doping}

    \vspace{1em}

    \includegraphics[width=0.98\linewidth]{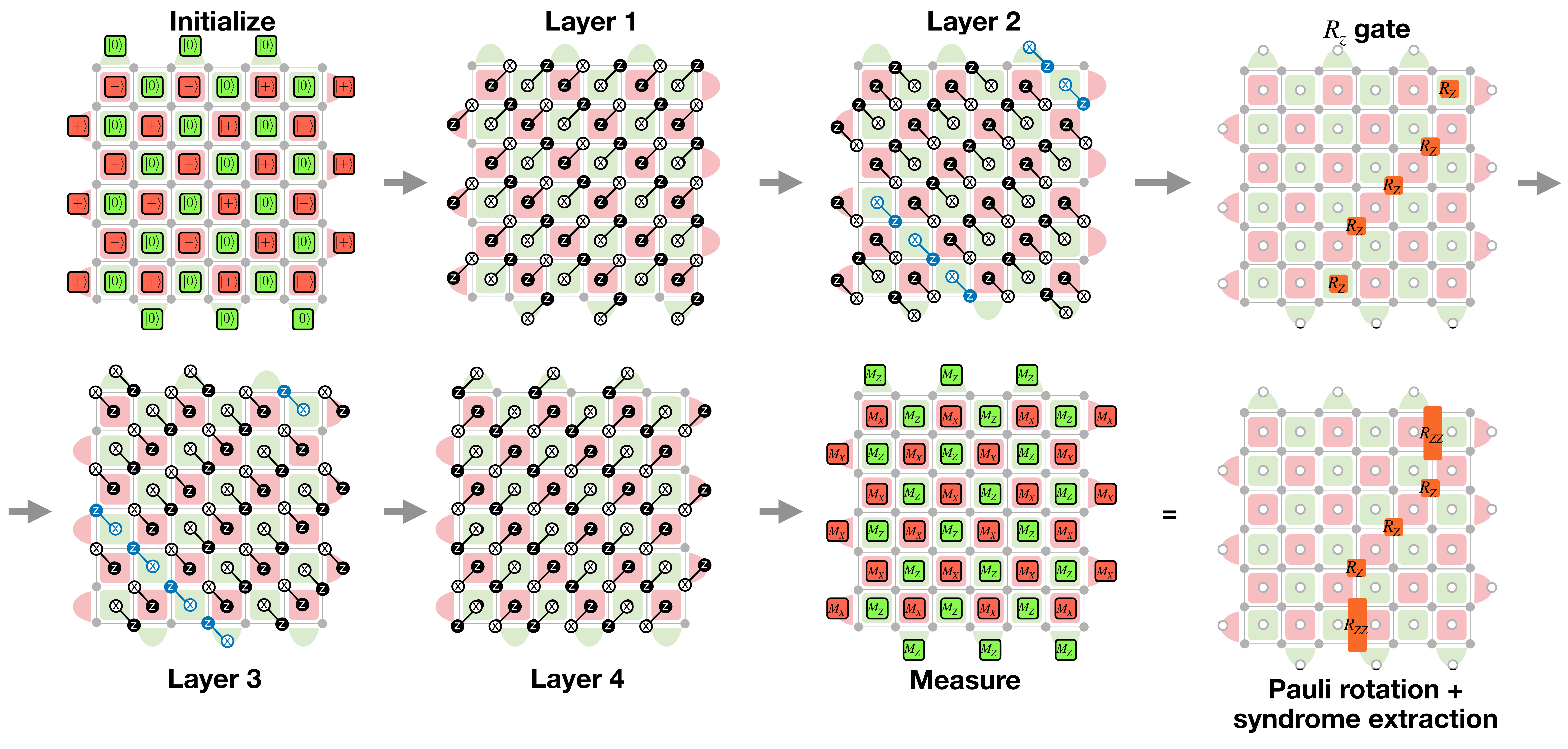}

    \caption{Full circuit description of weak transversal Z rotation for surface code of $d=7$ with $M=5$ rotation gates.}

    \label{fig:distance-7-inplace-rot-steps}

\end{figure}
\color{black}

\begin{figure}[t]
    \centering
    \includegraphics[width=0.65\linewidth]{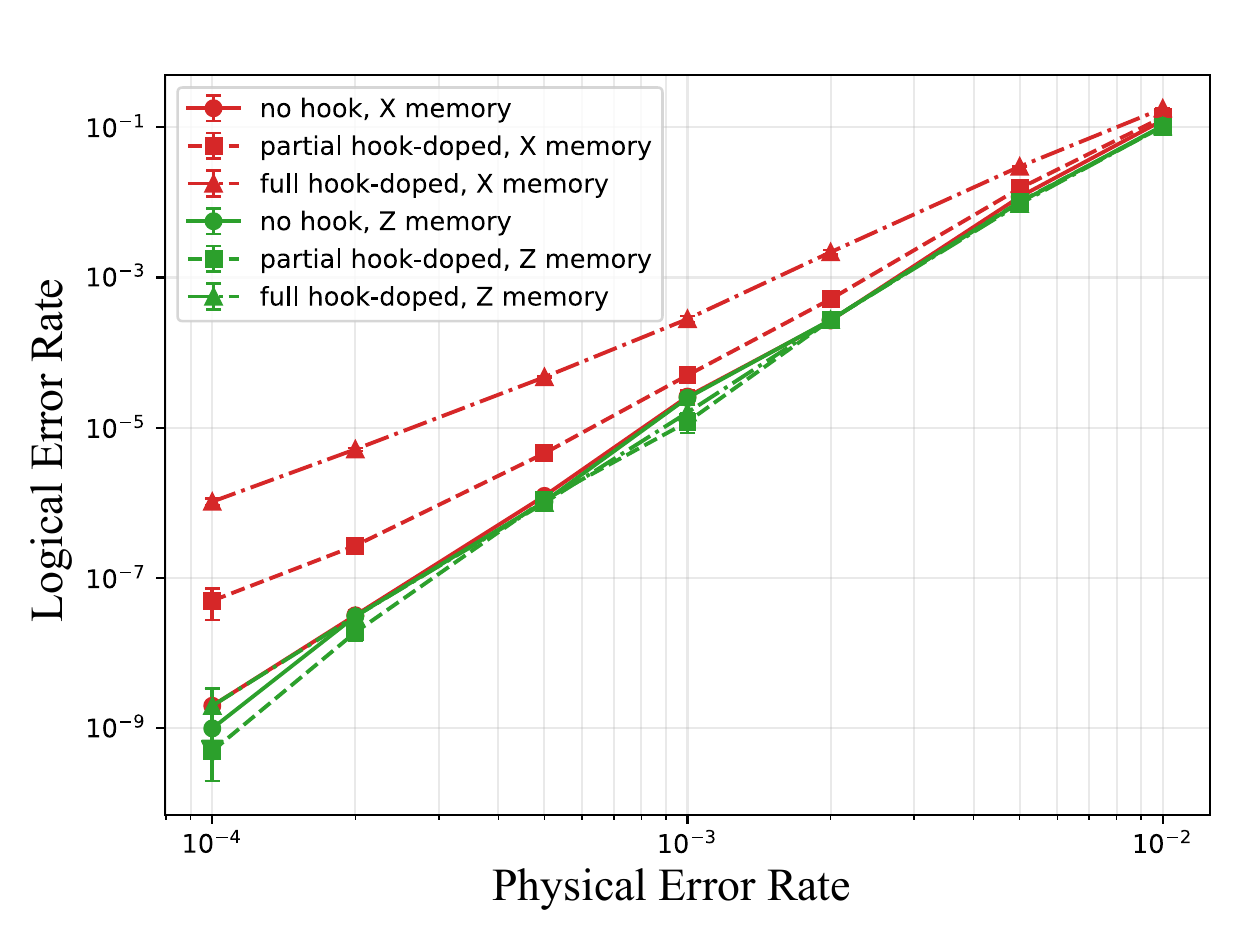}
    \caption{Logical error rate in memory experiments for the $d=7$ rotated surface code after $d$ rounds of syndrome extraction. The red and green data points show the results of $X$- and $Z$-type memory experiments, respectively. Circle, square, and triangle markers correspond to the harmful-hook-free schedule, the partially hook-doped schedule shown in Fig.~\ref{fig:hook-doping}(c), and the fully hook-doped schedule, respectively.
    }
    \label{fig:hook-doping-LER}
\end{figure}

\section{Resource estimation details for the megaquop benchmark} \label{app:resource-est-details}

In this appendix, we give the formulas used to compute the timestep $\tau$, the number of active physical qubits $N_{\rm phys}$, and the total logical error rate $\bar{p}_{\rm tot}$ for the megaquop benchmark in Sec.~\ref{subsec:resource-est}.

\subsection{Surface code}
\subsubsection{Clifford+T compilation}
\label{subsubsec:surface-clifford-t}

In general, when we consider Clifford+T compilation of a quantum circuit, all the non-Clifford gates are first synthesized so that all gates are given as a sequence of Clifford and T gates.
Regarding the benchmarked circuit, this can be done by simply performing the single-qubit gate synthesis of the rotation gate.
Here we employ the ancilla-free near-optimal synthesis algorithm proposed in Ref.~\cite{ross2014optimal}, which requires approximately $3\log_2(1/\varepsilon)+O(1)$ T gates to implement a single rotation gate up to accuracy of $\varepsilon.$
These T gates are implemented via gate teleportation consuming a single magic state that is distilled once~\cite{litinski2019game}.
For Clifford gates, we implement CNOT using lattice surgery~\cite{horsman2012surface}, Hadamard gate by transversal gate followed by logical patch rotation~\cite{fowler2018low, litinski2019game}, and S gate via lattice surgery~\cite{litinski2019game}.

That said, the total error can be estimated as follows,
\begin{align}
    \bar{p}_{\rm tot} &= \bar{p}_{\rm dec} + \bar{p}_{\rm dis} + \bar{p}_{\rm syn},
\end{align}
where $\bar{p}_{\rm dec}, \bar{p}_{\rm dis}, \bar{p}_{\rm syn}$ are decoding error, distillation error, and synthesis error, respectively~\cite{yoshioka2025error}.
Explicitly broken down, the individual expression of errors are given as follows:
\begin{align}
    \bar{p}_{\rm dec} &= N_{\rm CNOT} \cdot \bar{p}_{\rm CNOT} +N_{\rm H} \cdot \bar{p}_{\rm H} + N_{\rm S} \cdot \bar{p}_{\rm S},\\
    \bar{p}_{\rm dis} &= N_{\rm T} \cdot \bar{p}_T,\\
    \bar{p}_{\rm syn} &= N_{\rm rot} \cdot \varepsilon_{\rm syn},
\end{align}
where $N_{\rm CNOT/S/H/T/rot}$ is the number of each gate in the circuit, $\bar{p}_{\rm CNOT/S/H/T}$ is the logical error rate per gate that is estimated as given in Table~\ref{tab:gate-resource-surface-cliffordt}. The gate synthesis error per each rotation gate, $\varepsilon_{\rm syn}$, is homogeneously taken to be $10^{-4}$.

The total timesteps are estimated as follows,
\begin{eqnarray}
    \tau = \sum_{i=1}^{\# {\rm gates}} \tau_i,
\end{eqnarray}
where $\tau_i$ indicates the timestep for the $i$-th gate operation. Each $\tau$ for the instruction set architecture (ISA) is given in Table~\ref{tab:gate-resource-surface-cliffordt}.

Finally the number of active physical qubits is computed assuming the qubit plane layout as shown in Fig.~\ref{fig:architecture}(b) as
\begin{eqnarray}
    N_{\rm phys} = 1.5^2N \cdot 2d^2 + n_F A_F,
\end{eqnarray}
where $n_F$ is the number of magic state factories and $A_F$ is the number of physical qubits per magic state factory.
For the sake of comparison with the bicycle architecture (Sec.~\ref{subsubsec:gross-clifford-t}), here we take $n_F=1$, and use $A_F = 810$ following Ref.~\cite{litinski2019magic}.

\begin{table}[t]
\begin{tabular}{|c|c|c|c|}
\hline
Instruction         & Cycles $t_i$ & Timestep $\tau_i$ & Logical Error $\bar{p}_i$       \\
\hline \hline
CNOT              & 14   & 84    & $1.4\times 10^{-8}$  \\
\hline
Hadamard              & 21   & 126    & $2.1\times 10^{-8}$  \\
\hline
S              & 7   & 42    & $7.0\times 10^{-9}$  \\
\hline
T              & 18.1   & 109    & $6.2\times 10^{-8}$  \\
\hline
\end{tabular}
\caption{Cost for elementary instructions in Clifford+T compilation using surface code of code distance $d=7$ on 2d planar architecture. Here, logical error rates for Clifford gates are computed as $\bar{p} \cdot t_i$ where $\bar{p}=0.1(\pphys/p_{\rm th})^{\frac{d+1}{2}}$ is the logical error rate per code cycle under threshold error rate of $p_{\rm th}=0.01$.
For physical error rate of $\pphys = 10^{-4}$ and code distance of $d=7$, the logical error rate per cycle is $ \bar{p}= 1.0\times 10^{-9}$.
The number of code cycles, $t_{\rm CNOT/H/S}$ are $2d$, $3d$, and $d$ for CNOT, H, and S gates, respectively~\cite{litinski2019game}.} \label{tab:gate-resource-surface-cliffordt}
\end{table}

\begin{table}[t]
\begin{tabular}{|c|c|c|c|}
\hline
Instruction         & Cycles $t_i$ & Timestep $\tau_i$ & Logical Error $\bar{p}_i$       \\
\hline \hline
CNOT              & 14   & 84    & $1.4\times 10^{-8}$  \\
\hline
\begin{tabular}[c]{@{}c@{}}In-place rotation\\ ($\theta=0.001$)\end{tabular} & -  & 180   & $9.6\times 10^{-5}$\\
\hline
\end{tabular}
\caption{Cost for elementary instructions of Clifford+$\phi$ compilation in surface code of code distance $d=7$ on 2d planar architecture.}
\label{tab:gate-resource-surface-inplace}
\end{table}

\subsubsection{\texorpdfstring{Clifford+$\phi$}{Clifford+phi} compilation}
\label{subsubsec:surface-inplace}

When we use in-place Pauli rotation, the total error rate can be estimated as follows,
\begin{align}
    \bar{p}_{\rm tot} = \bar{p}_{\rm dec} + \bar{p}_{\rm rot},
\end{align}
where $\bar{p}_{\rm dec}$ is the decoding error as introduced in the previous section, and $\bar{p}_{\rm rot}$ is the error of in-place rotation. Using the phenomenological simulation presented in Sec.~\ref{subsec:inplace-rot}, we assume that the logical error rate per logical Z rotation is $9.6\times 10^{-5}.$
As we summarize in \cref{tab:gate-resource-surface-inplace}, the costs for Clifford gates are taken to be the same as in Clifford+T compilation.

For the timesteps, we estimate as
\begin{align}
    \tau_{\rm tot} = \sum_i \tau_{i},
\end{align}
where $\tau_i$ is the timestep required to execute the $i$-th instruction.
As summarized in \cref{tab:gate-resource-surface-inplace}, the timesteps for Clifford gates are taken to be the same as in \cref{subsubsec:surface-clifford-t}.
For the in-place Pauli rotation, we fix the number of iterations to 30 to give a conservative estimate, while in reality it may finish in fewer iterations. We foresee that dynamical recompilation of circuits~\cite{sethi2025rescq} can further reduce the runtime.
At each iteration, we assume $d$ rounds of syndrome extraction, with the first round preceded by the Pauli rotation as discussed in Sec.~\ref{app:scheduling-d7-surface}.
While such an additional operation may naively seem to increase the number of timesteps, we find that there is no overhead to perform in-place Z rotation (See Appendix~\ref{app:scheduling-d7-surface}) in this case.
Therefore, the number of timesteps is obtained by multiplying the number of code cycles by 6.

The number of active physical qubits is computed assuming the qubit plane layout shown in \cref{fig:architecture}(c) as
\begin{align}
    N_{\rm phys} = 1.5^2 N \cdot 2d^2.
\end{align}

\subsection{Gross code}
\subsubsection{Pauli-based computation}
\label{subsubsec:gross-clifford-t}

Pauli-based computation is a way of compiling a quantum circuit into one composed entirely of Pauli rotation and Pauli measurements (allowing multiqubit Paulis)~\cite{bravyi2016trading, litinski2019game}.
Pauli operators are conjugated so that Clifford gates no longer appear explicitly in the circuit. As a result, the circuit can be expressed by the instructions indicated in \cref{tab:gate-resource-gross-pbc}.

The timesteps are calculated by a public code available in Ref.~\cite{bicycle-architecture-compiler}.

The number of active physical qubits is provided as
\begin{align}
    N_{\rm phys} = N_{\rm patch} \cdot (c + u + a) - a + a' + n_F A_F,
\end{align}
where $N_{\rm patch}$ is the number of gross code modules, $c=288$ is the number of physical qubits within each gross code module, $u =90$ indicates the contribution from the logical processing unit, $a = 22$ is required for adapter between gross codes, and $a'=13$ is for adapter between a gross code and magic state factory.
We assume that the magic states are prepared by level-1 distillation for a surface code of distance 7, and hence $A_F=810$.
Since we consider the Pauli-based computation, we take $n_F=1$.

\begin{table}[t]
\begin{tabular}{|c|c|c|}
\hline
Instruction        & Timestep $\tau_i$ & Logical Error $\bar{p}_i$       \\
\hline \hline
idle                & 8    & $1.4\times 10^{-15}$  \\
\hline
shift automorphism                & 12    & $6.1\times 10^{-14}$  \\
\hline
in-module meas.                & 120    & $1.0\times 10^{-9}$  \\
\hline
inter-module meas.                & 120    & $4.8\times 10^{-8}$  \\
\hline
T injection                & 73+120    & $8.8\times 10^{-7}$  \\
\hline
\end{tabular}
\caption{Cost for elementary instructions of Pauli-based computation in gross code~\cite{yoder2025tour}. }
\label{tab:gate-resource-gross-pbc}
\end{table}

\subsubsection{\texorpdfstring{Clifford+$\phi$}{Clifford+phi} compilation}
\label{subsubsec:gross-inplace}

In Clifford+$\phi$ compilation, we rather aim to implement the Clifford gates using the in-module and inter-module measurements interleaved by shift automorphisms.
For instance, a CNOT gate can be decomposed into XX and ZZ measurements followed by Pauli correction.
In order to perform resource estimation of the benchmarked circuit, we first calculate the cost for performing CNOT for all neighbors as $\prod_{i:{\rm odd}} {\rm CNOT}_{i, i+1} \prod_{i:{\rm even}} {\rm CNOT}_{i, i+1}$, whose results are shown in \cref{tab:gate-resource-gross-inplace}.
For the in-place Pauli rotation, we assume the scheme uses $M=5$ physical rotations, and fix the number of iterations to 30, as in the surface-code case. However, unlike the surface code, the logical support provided in Ref.~\cite{yoder2025tour} does not allow us to implement all multi-Z rotations mediated by Z-check ancilla qubits; instead, we must use X-check ancilla qubits. This forces us to switch the basis between X and Z, and hence imposes 2 additional physical CNOT layers. Since the number of timesteps is 8 for a single round of syndrome extraction, we need 10 timesteps when we introduce physical rotation gates.

The number of active physical qubits is provided as
\begin{align}
    N_{\rm phys} = N_{\rm patch} \cdot (c + u + a) - a.
\end{align}
Note that there is no need for a magic state factory or the adapter between a gross code patch and a magic state factory.

\begin{table}[t]
\centering
\begin{tabular}{|c|c|c|c|}
\hline
Instruction & Max. logical per patch & Timestep $\tau_i$ & Logical Error $\bar{p}_i$ \\
\hline \hline
idle & Any & 8 & $1.44\times 10^{-15}$ \\
\hline
\multirow{4}{*}{CNOT for all neighbors}
  & 11 & 42768 & $3.4\times 10^{-6}$ \\
\cline{2-4}
 & 6 & 17976 & $3.2\times 10^{-6}$ \\
\cline{2-4}
 & 4 & 10200 & $3.0\times 10^{-6}$ \\
\cline{2-4}
 & 3 & 6864 & $3.3\times 10^{-6}$ \\
\hline
\multirow{4}{*}{Simultaneous in-place $Z$ rot.}
  & 11 & 1500 & $1.0\times 10^{-2}$ \\
\cline{2-4}
 & 6 & 600 & $1.0\times 10^{-2}$ \\
\cline{2-4}
 & 4 & 600 & $1.0\times 10^{-2}$ \\
\cline{2-4}
 & 3 & 600 & $1.0\times 10^{-2}$ \\
\hline
\end{tabular}
\caption{Cost of instructions in the Clifford+$\phi$ formalism for the gross code. The logical rotation angles are homogeneously taken as $\bar{\theta}=0.001$.}
\label{tab:gate-resource-gross-inplace}
\end{table}

\section{Analysis of small angle rotation gate cost under postselection} \label{app:small-angle-gate}

Here, we explain the scaling shown in Table~\ref{tab:small-angle-subroutines} in the main text.
As mentioned, we assume here that logical qubits are encoded into surface code of distance $d$.
Therefore, the fallback recursion of teleportation-based rotation gates terminates at the fault-tolerant $S$ gate.

With angle-agnostic injection, an arbitrary single-qubit state can be injected into the surface code with a logical error rate of $\frac{2}{5}p$ and logical qubitcycle cost of $O(d)$~\cite{li2015magic}.
The injected states may also be distilled using the Meier-Eastin-Knill (MEK) protocol~\cite{meier2012magic, campbell2016_efficient}, which uses the $[[4,2,2]]$ code to obtain quadratic error suppression.
Writing the cost of an $R_l$ gate as $\mathfrak{C}_{R_l}$ and the nonrecursive MEK overhead as $\mathfrak{C}_{{\rm MEK}_l}$, the fallback recursion gives
\begin{align} \label{eq:mek-cost}
\mathfrak{C}_{R_l}
    = \mathfrak{C}_{|M_l\rangle} + 0.5 \mathfrak{C}_{R_{l-1}}
    = \mathfrak{C}_{R_{l-1}} + 0.5\mathfrak{C}_{{\rm MEK}_l}.
\end{align}
For angle-agnostic injection, $\mathfrak{C}_{{\rm MEK}_l}=O(d)$ is independent of $l$, giving $\mathfrak{C}_{R_l}=O(dl)$.

Angle-dependent injection improves the error of the injected magic state to $O(\pphys\bartheta^\alpha)$ with $\alpha=2(1-1/M)$.
However, after the fallback recursion, the logical error of the implemented gate scales only as $O(\pphys\bartheta)$~\cite{toshio2025practical}, unless one relies on unitary synthesis~\cite{toshio2026star}.
The lower-level fallback rotations dominate because the fallback probability is reduced by a factor of two per level, whereas the angle-dependent error grows by a factor $2^\alpha>2$ when moving down one level.
The numerical data in Fig.~\ref{fig:state-prep}(a) confirm this behavior, and the circuit volume approaches
\begin{eqnarray}
    14d + O(d^2 2^{-2l/d}),
\end{eqnarray}
which is the sum of state preparation with cost $V_L\sim d+O(d^2 2^{-2l/d})$, a CNOT gate with $V_L\sim 4d$, and measurement feedback with cost $V_L\sim 2d$. Since the expected number of fallback steps is two, we obtain the above expression.

For the weak transversal gates, the relevant acceptance event is the absence of a syndrome.
For an $M$-rotation weak transversal gadget realizing a logical angle $\bartheta$, the no-syndrome probability is
\begin{align}
    P_{\rm success} = p_{\chi = 0} = |\cos \theta|^{2M} + |\sin \theta|^{2M} \approx 1 - M\bar{\theta}^{2/M}.
\end{align}
The dominant contribution to the logical error is the $\chi=1$ event being misclassified as $\chi=0$. This occurs with probability of $p_{\chi=1}\cdot (M \pphys)=O(\pphys M \theta^2)$, and results in diamond norm error of $O(\theta^{2M-4})$, and hence the overall error is $O(\pphys M\theta^{2M-2}) = O(\pphys M \bartheta^{2(1-1/M)})$.
Due to the superlinear scaling with the angle, the logical error can be reduced by segmenting the desired rotation into $r$ rotations of angle $\bartheta/r$:
\begin{align}
\left({\rm Diamond~norm~error}\right) = O\left(r\cdot \pphys M \left(\bartheta/ r\right)^{2(1-1/M)}\right) = O(\pphys M\bartheta^{2(1-1/M)} r^{2/M-1}).
\end{align}
The corresponding success probability is
\begin{align}
P_{\rm success} \approx \left(1 - M \left(\frac{\bar{\theta}}{r}\right)^{2/M}\right)^r,
\label{eq:psuccess_weak}
\end{align}
so the expected spacetime cost is estimated as
\begin{align}
    \mathfrak{C}_{R_l} = rd/P_{\rm success} = rd + O(dMr^{2-2/M}2^{-2l/M}),
\end{align}
for $l\gg 1$. Note that this cost analysis assumes that the spacetime cost of the gate preceding the weak transversal gate is negligibly small, such as graph state preparation.
The segmentation parameter therefore controls the error-volume tradeoff: larger $r$ improves the logical error but also increases the baseline volume $rd$.

\section{Meier-Eastin-Knill protocol for distillation of equatorial magic states} \label{app:MEK}

Here, we provide a brief review on the distillation protocol for equatorial states proposed by Meier-Eastin-Knill~\cite{meier2012magic} and later improved by Campbell and O'Gorman~\cite{campbell2016_efficient}. We hereby refer to the protocol as the MEK protocol. We denote by $|M_k\rangle = R_k|+\rangle = R_{\rm Z}\left(\frac{\pi}{2^k}\right) |+\rangle$ the magic state for Pauli rotation gate from the $k$-th level of Clifford hierarchy, and also denote by $H_k = R_k X R_k^\dagger$ the unitary and Hermitian operator that stabilizes $|M_k\rangle$.
In short, the MEK protocol performs a parity check on $|M_k\rangle$ regarding $H_l$, using the four qubit code which is also known as the $C_4$ code (see Fig.~\ref{fig:MEK}).

Thanks to the [[4, 2, 2]] code, the errors of non-Clifford $R_3$ gates can be suppressed up to quadratic factor. On the other hand, the non-Clifford $R_{l-1}^\dagger$ is not encoded, and hence contributes linearly to the total error rate.
By performing numerical analysis neglecting the correlations of noise, we find that the total error rate per distilled magic state $\epsilon'_l$ can be written as
\begin{align}
    \epsilon'_l =
    8 \epsilon_3^2 + \epsilon_l^2 +  \epsilon^2_2
    + \frac{1}{4} \eta_{l-1} + \frac{3}{20} \eta_F + \frac{1}{16}\eta_{\rm CNOT},
\end{align}
where $\epsilon_l$ and $\epsilon_{\rm CNOT}$ are the error rate per magic state for the $l$-th level of Clifford hierarchy and CNOT gate, $\eta_{l-1}$ is the error rate of the $R_{l-1}$ gate, $\eta_{F}$ is the error rate of the encoding circuit into the four qubit code, and $\eta_{\rm CNOT}$ is the error rate of the CNOT layer.
Note that $\epsilon$ and $\eta$ indicate whether the error is quadratically suppressed or not.
For instance, the input magic state infidelity and T gate error can be suppressed quadratically, while the error in the mid-circuit $R_{l-1}$ cannot be.
Due to the fact that the $R_{l-1}$ gate is not protected by the [[4,2,2]] code, we must ensure that the gate is implemented with high accuracy, and hence it is called the ``pivot gate.''
This has motivated us to perform an additional round of distillation for the magic state corresponding to $R_{l-1}$.

\begin{figure*}[h]
    \centering
\includegraphics[width=0.9\linewidth]    {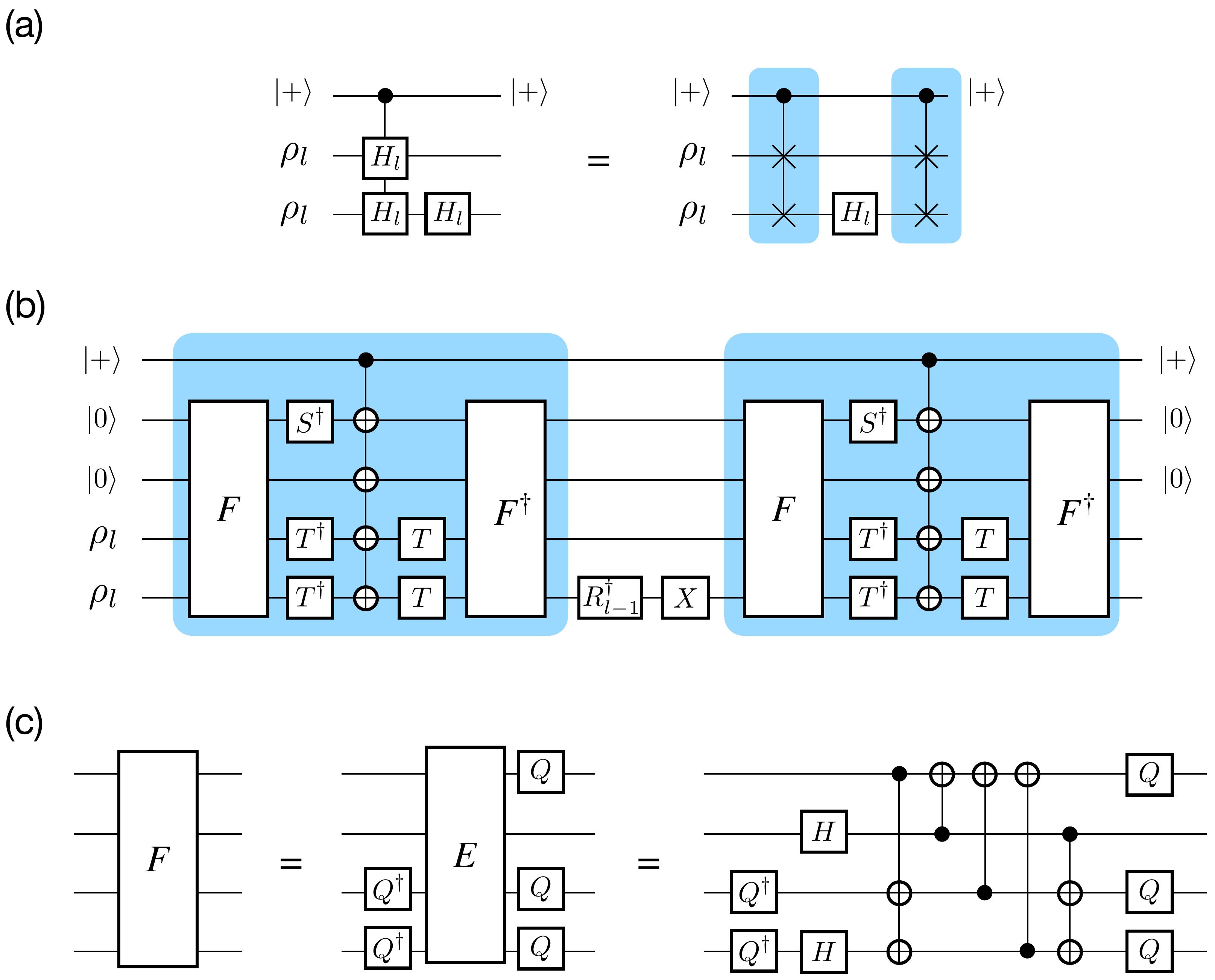}
    \caption{
(a) Hadamard test regarding a unitary and Hermitian operator $H_l= R_l X R_l^\dagger$, which stabilizes $|M_l\rangle$, over two noisy magic states $\rho_l$. (b) The Hadamard test with code concatenation into [[4,2,2]] code, in which the transversal implementation of Hadamard gate realizes a logical SWAP gate. Note that some rotation gates can be canceled out~\cite{campbell2016_efficient}, which results in consuming eight T gates and one $R_{l-1}$ gate. The blue region corresponds to the controlled SWAP gate as indicated in panel (a). The operator is rewritten using the identity $H_l = R_l X R_l^\dagger = X R_{l-1}^\dagger$. (c) Detailed circuit structure of the encoding circuit $F$. Here, we have introduced a unitary $Q$ that conjugates Pauli $Y$ and $Z$ as $Q Y Q^\dagger = Z$ and hence also $Q R_{\rm Y}(\theta) Q^\dagger = R_{\rm Z} (\theta)$.}
    \label{fig:MEK}
\end{figure*}

\bibliography{bib}

@article{toshio2026star,
  title={STAR-magic mutation: Even more efficient analog rotation gates for early fault-tolerant quantum computer},
  author={Toshio, Riki and Kanasugi, Shota and Fujisaki, Jun and Oshima, Hirotaka and Sato, Shintaro and Fujii, Keisuke},
  journal={arXiv preprint arXiv:2603.22891},
  year={2026}
}

@article{akahoshi2025compilation,
  title={Compilation of trotter-based time evolution for partially fault-tolerant quantum computing architecture},
  author={Akahoshi, Yutaro and Toshio, Riki and Fujisaki, Jun and Oshima, Hirotaka and Sato, Shintaro and Fujii, Keisuke},
  journal={PRX Quantum},
  volume={6},
  number={4},
  pages={040319},
  year={2025},
  publisher={APS}
}

@article{campbell2022early,
  title={Early fault-tolerant simulations of the Hubbard model},
  author={Campbell, Earl T},
  journal={Quantum Science \& Technology},
  volume={7},
  number={1},
  pages={015007},
  year={2022},
  publisher={IOP Publishing}
}

@article{bremner2011classical,
  title={Classical simulation of commuting quantum computations implies collapse of the polynomial hierarchy},
  author={Bremner, Michael J and Jozsa, Richard and Shepherd, Dan J},
  journal={Proceedings of the Royal Society A: Mathematical, Physical and Engineering Sciences},
  volume={467},
  number={2126},
  pages={459--472},
  year={2011},
  publisher={The Royal Society Publishing}
}

@article{Leibfried2004,
  author = {D. Leibfried and M. D. Barrett and T. Schaetz and J. Britton and J. Chiaverini and W. M. Itano and J. D. Jost and C. Langer and D. J. Wineland},
  title = {Toward Heisenberg-Limited Spectroscopy with Multiparticle Entangled States},
  journal = {Science},
  volume = {304},
  number = {5676},
  pages = {1476--1478},
  year = {2004},
  doi = {10.1126/science.1097576},
  URL = {www.science.org},
  eprint = {https://www.science.org/doi/pdf/10.1126/science.1097576}
}

@article{giovanetti2006quantum,
  title = {Quantum Metrology},
  author = {Giovannetti, Vittorio and Lloyd, Seth and Maccone, Lorenzo},
  journal = {Phys. Rev. Lett.},
  volume = {96},
  issue = {1},
  pages = {010401},
  numpages = {4},
  year = {2006},
  month = {Jan},
  publisher = {American Physical Society},
  doi = {10.1103/PhysRevLett.96.010401},
  url = {https://link.aps.org/doi/10.1103/PhysRevLett.96.010401}
}

@article{ma2024learning,
  title={Learning $ k $-body Hamiltonians via compressed sensing},
  author={Ma, Muzhou and Flammia, Steven T and Preskill, John and Tong, Yu},
  journal={arXiv preprint arXiv:2410.18928},
  year={2024}
}

@article{malz2024_mpslog,
  title   = {Preparation of Matrix Product States with Log-Depth Quantum Circuits},
  author  = {Malz, Daniel and Styliaris, Georgios and Wei, Zhi-Yuan and Cirac, J. Ignacio},
  journal = {Phys. Rev. Lett.},
  volume  = {132},
  pages   = {040404},
  year    = {2024},
  doi     = {10.1103/PhysRevLett.132.040404}
}

@article{smith2024_mpsconst,
  title   = {Constant-Depth Preparation of Matrix Product States with Adaptive Quantum Circuits},
  author  = {Smith, Kevin C. and Khan, Abid and Clark, Bryan K. and Girvin, S. M. and Wei, Tzu-Chieh},
  journal = {PRX Quantum},
  volume  = {5},
  pages   = {030344},
  year    = {2024},
  doi     = {10.1103/PRXQuantum.5.030344}
}

@article{smith2023_aklt,
  title   = {Deterministic Constant-Depth Preparation of the AKLT State on a Quantum Processor Using Fusion Measurements},
  author  = {Smith, Kevin C. and Crane, Eleanor and Wiebe, Nathan and Girvin, S. M.},
  journal = {PRX Quantum},
  volume  = {4},
  pages   = {020315},
  year    = {2023},
  doi     = {10.1103/PRXQuantum.4.020315}
}

@inproceedings{hirano2024leveraging,
  title={Leveraging zero-level distillation to generate high-fidelity magic states},
  author={Hirano, Yutaka and Itogawa, Tomohiro and Fujii, Keisuke},
  booktitle={2024 IEEE International Conference on Quantum Computing and Engineering (QCE)},
  volume={1},
  pages={843--853},
  year={2024},
  organization={IEEE}
}

@article{Kliuchnikov2023shorterquantum,
  doi = {10.22331/q-2023-12-18-1208},
  url = {https://doi.org/10.22331/q-2023-12-18-1208},
  title = {Shorter quantum circuits via single-qubit gate approximation},
  author = {Kliuchnikov, Vadym and Lauter, Kristin and Minko, Romy and Paetznick, Adam and Petit, Christophe},
  journal = {{Quantum}},
  issn = {2521-327X},
  publisher = {{Verein zur F{\"{o}}rderung des Open Access Publizierens in den Quantenwissenschaften}},
  volume = {7},
  pages = {1208},
  month = dec,
  year = {2023}
}

@article{li2015magic,
  title={A magic state’s fidelity can be superior to the operations that created it},
  author={Li, Ying},
  journal={New Journal of Physics},
  volume={17},
  number={2},
  pages={023037},
  year={2015},
  publisher={IOP Publishing}
}

@article{meier2012magic,
  title={Magic-state distillation with the four-qubit code},
  author={Meier, Adam M and Eastin, Bryan and Knill, Emanuel},
  journal={arXiv preprint arXiv:1204.4221},
  year={2012}
}

@misc{gidney2016gradients2qfts,
  author       = {Craig Gidney},
  title        = {Turning Gradients into Additions into {QFT}s},
  year         = {2016},
  month        = jul,
  howpublished = {\url{https://algassert.com/post/1620}},
  note         = {Blog post, published 31 Jul 2016. Licensed under CC BY 4.0. Accessed 2025-11-16}
}

@article{Gidney_2018,
   title={Halving the cost of quantum addition},
   volume={2},
   ISSN={2521-327X},
   url={http://dx.doi.org/10.22331/q-2018-06-18-74},
   DOI={10.22331/q-2018-06-18-74},
   journal={Quantum},
   publisher={Verein zur Forderung des Open Access Publizierens in den Quantenwissenschaften},
   author={Gidney, Craig},
   year={2018},
   month=jun, pages={74} }

@inproceedings{babai2011code,
author = {Babai, L\'{a}szl\'{o} and Codenotti, Paolo and Grochow, Joshua A. and Qiao, Youming},
title = {Code equivalence and group isomorphism},
year = {2011},
publisher = {Society for Industrial and Applied Mathematics},
address = {USA},
abstract = {The isomorphism problem for groups given by their multiplication tables has long been known to be solvable in time nlog n+O(1). The decades-old quest for a polynomial-time algorithm has focused on the very difficult case of class-2 nilpotent groups (groups whose quotient by their center is abelian), with little success. In this paper we consider the opposite end of the spectrum and initiate a more hopeful program to find a polynomial-time algorithm for semisimple groups, defined as groups without abelian normal subgroups. First we prove that the isomorphism problem for this class can be solved in time nO(log log n). We then identify certain bottlenecks to polynomial-time solvability and give a polynomial-time solution to a rich subclass, namely the semisimple groups where each minimal normal subgroup has a bounded number of simple factors. We relate the results to the filtration of groups introduced by Babai and Beals (1999).One of our tools is an algorithm for equivalence of (not necessarily linear) codes in simply-exponential time in the length of the code, obtained by modifying Luks's algorithm for hypergraph isomorphism in simply-exponential time in the number of vertices (FOCS 1999).We comment on the complexity of the closely related problem of permutational isomorphism of permutation groups.},
booktitle = {Proceedings of the Twenty-Second Annual ACM-SIAM Symposium on Discrete Algorithms},
pages = {1395–1408},
numpages = {14},
location = {San Francisco, California},
series = {SODA '11}
}

@article{webster2023transversal,
  title={Transversal diagonal logical operators for stabiliser codes},
  author={Webster, Mark A and Quintavalle, Armanda O and Bartlett, Stephen D},
  journal={New Journal of Physics},
  volume={25},
  number={10},
  pages={103018},
  year={2023},
  publisher={IOP Publishing}
}

@article{bombin2006topological,
  title = {Topological Quantum Distillation},
  author = {Bombin, H. and Martin-Delgado, M. A.},
  journal = {Phys. Rev. Lett.},
  volume = {97},
  issue = {18},
  pages = {180501},
  numpages = {4},
  year = {2006},
  month = {Oct},
  publisher = {American Physical Society},
  doi = {10.1103/PhysRevLett.97.180501},
  url = {https://link.aps.org/doi/10.1103/PhysRevLett.97.180501}
}

@article{landahl2011fault,
  title={Fault-tolerant quantum computing with color codes},
  author={Landahl, Andrew J and Anderson, Jonas T and Rice, Patrick R},
  journal={arXiv preprint arXiv:1108.5738},
  year={2011}
}

@article{kitaev2003fault,
  title={Fault-tolerant quantum computation by anyons},
  author={Kitaev, A Yu},
  journal={Annals of physics},
  volume={303},
  number={1},
  pages={2--30},
  year={2003},
  publisher={Elsevier}
}

@article{fowler2012surface,
  title = {Surface codes: Towards practical large-scale quantum computation},
  author = {Fowler, Austin G. and Mariantoni, Matteo and Martinis, John M. and Cleland, Andrew N.},
  journal = {Phys. Rev. A},
  volume = {86},
  issue = {3},
  pages = {032324},
  numpages = {48},
  year = {2012},
  month = {Sep},
  publisher = {American Physical Society},
  doi = {10.1103/PhysRevA.86.032324},
  url = {https://link.aps.org/doi/10.1103/PhysRevA.86.032324}
}

@article{suzuki2017efficient,
  title={Efficient simulation of quantum error correction under coherent error based on the nonunitary free-fermionic formalism},
  author={Suzuki, Yasunari and Fujii, Keisuke and Koashi, Masato},
  journal={Physical review letters},
  volume={119},
  number={19},
  pages={190503},
  year={2017},
  publisher={APS}
}

@inproceedings{sethi2025rescq,
  title={RESCQ: Realtime Scheduling for Continuous Angle Quantum Error Correction Architectures},
  author={Sethi, Sayam and Baker, Jonathan Mark},
  booktitle={Proceedings of the 30th ACM International Conference on Architectural Support for Programming Languages and Operating Systems, Volume 2},
  pages={1028--1043},
  year={2025}
}

@article{gavriel2023transversal,
  title = {Transversal injection for direct encoding of ancilla states for non-Clifford gates using stabilizer codes},
  author = {Gavriel, Jason and Herr, Daniel and Shaw, Alexis and Bremner, Michael J. and Paler, Alexandru and Devitt, Simon J.},
  journal = {Phys. Rev. Res.},
  volume = {5},
  issue = {3},
  pages = {033019},
  numpages = {16},
  year = {2023},
  month = {Jul},
  publisher = {American Physical Society},
  doi = {10.1103/PhysRevResearch.5.033019},
  url = {https://link.aps.org/doi/10.1103/PhysRevResearch.5.033019}
}

@article{choi2023fault,
  title={Fault tolerant non-clifford state preparation for arbitrary rotations},
  author={Choi, Hyeongrak and Chong, Frederic T and Englund, Dirk and Ding, Yongshan},
  journal={arXiv preprint arXiv:2303.17380},
  year={2023}
}

@article{akahoshi2024partially,
  title = {Partially Fault-Tolerant Quantum Computing Architecture with Error-Corrected Clifford Gates and Space-Time Efficient Analog Rotations},
  author = {Akahoshi, Yutaro and Maruyama, Kazunori and Oshima, Hirotaka and Sato, Shintaro and Fujii, Keisuke},
  journal = {PRX Quantum},
  volume = {5},
  issue = {1},
  pages = {010337},
  numpages = {21},
  year = {2024},
  month = {Mar},
  publisher = {American Physical Society},
  doi = {10.1103/PRXQuantum.5.010337},
  url = {https://link.aps.org/doi/10.1103/PRXQuantum.5.010337}
}

@article{leroux2024snakes,
  title={Snakes and Ladders: Adapting the surface code to defects},
  author={Leroux, Catherine and Lin, Sophia F and Bienias, Przemyslaw and Sankar, Krishanu R and Benhemou, Asmae and Kubica, Aleksander and Iverson, Joseph K},
  journal={arXiv preprint arXiv:2412.11504},
  year={2024}
}

@article{strikis2023quantum,
  title={Quantum computing is scalable on a planar array of qubits with fabrication defects},
  author={Strikis, Armands and Benjamin, Simon C and Brown, Benjamin J},
  journal={Physical Review Applied},
  volume={19},
  number={6},
  pages={064081},
  year={2023},
  publisher={APS}
}

@article{auger2017fault,
  title = {Fault-tolerance thresholds for the surface code with fabrication errors},
  author = {Auger, James M. and Anwar, Hussain and Gimeno-Segovia, Mercedes and Stace, Thomas M. and Browne, Dan E.},
  journal = {Phys. Rev. A},
  volume = {96},
  issue = {4},
  pages = {042316},
  numpages = {9},
  year = {2017},
  month = {Oct},
  publisher = {American Physical Society},
  doi = {10.1103/PhysRevA.96.042316},
  url = {https://link.aps.org/doi/10.1103/PhysRevA.96.042316}
}

@article{nagayama2017surface,
  title={Surface code error correction on a defective lattice},
  author={Nagayama, Shota and Fowler, Austin G and Horsman, Dominic and Devitt, Simon J and Van Meter, Rodney},
  journal={New Journal of Physics},
  volume={19},
  number={2},
  pages={023050},
  year={2017},
  publisher={IOP Publishing}
}

@article{knill1997theory,
  title = {Theory of quantum error-correcting codes},
  author = {Knill, Emanuel and Laflamme, Raymond},
  journal = {Phys. Rev. A},
  volume = {55},
  issue = {2},
  pages = {900--911},
  numpages = {0},
  year = {1997},
  month = {Feb},
  publisher = {American Physical Society},
  doi = {10.1103/PhysRevA.55.900},
  url = {https://link.aps.org/doi/10.1103/PhysRevA.55.900}
}

@article{butt2024fault,
  title = {Fault-Tolerant Code-Switching Protocols for Near-Term Quantum Processors},
  author = {Butt, Friederike and Heu\ss{}en, Sascha and Rispler, Manuel and M\"uller, Markus},
  journal = {PRX Quantum},
  volume = {5},
  issue = {2},
  pages = {020345},
  numpages = {26},
  year = {2024},
  month = {May},
  publisher = {American Physical Society},
  doi = {10.1103/PRXQuantum.5.020345},
  url = {https://link.aps.org/doi/10.1103/PRXQuantum.5.020345}
}

@article{anderson2014fault,
  title = {Fault-Tolerant Conversion between the Steane and Reed-Muller Quantum Codes},
  author = {Anderson, Jonas T. and Duclos-Cianci, Guillaume and Poulin, David},
  journal = {Phys. Rev. Lett.},
  volume = {113},
  issue = {8},
  pages = {080501},
  numpages = {5},
  year = {2014},
  month = {Aug},
  publisher = {American Physical Society},
  doi = {10.1103/PhysRevLett.113.080501},
  url = {https://link.aps.org/doi/10.1103/PhysRevLett.113.080501}
}

@article{kubica2015universal,
  title = {Universal transversal gates with color codes: A simplified approach},
  author = {Kubica, Aleksander and Beverland, Michael E.},
  journal = {Phys. Rev. A},
  volume = {91},
  issue = {3},
  pages = {032330},
  numpages = {12},
  year = {2015},
  month = {Mar},
  publisher = {American Physical Society},
  doi = {10.1103/PhysRevA.91.032330},
  url = {https://link.aps.org/doi/10.1103/PhysRevA.91.032330}
}

@article{gottesman1999demonstrating,
	title = {Demonstrating the viability of universal quantum computation using teleportation and single-qubit operations},
	volume = {402},
	issn = {1476-4687},
	url = {https://doi.org/10.1038/46503},
	doi = {10.1038/46503},
	abstract = {Algorithms such as quantum factoring1 and quantum search2 illustrate the great theoretical promise of quantum computers; but the practical implementation of such devices will require careful consideration of the minimum resource requirements, together with the development of procedures to overcome inevitable residual imperfections in physical systems3,4,5. Many designs have been proposed, but none allow a large quantum computer to be built in the near future6. Moreover, the known protocols for constructing reliable quantum computers from unreliable components can be complicated, often requiring many operations to produce a desired transformation3,4,5,7,8. Here we show how a single technique—a generalization of quantum teleportation9—reduces resource requirements for quantum computers and unifies known protocols for fault-tolerant quantum computation. We show that single quantum bit (qubit) operations, Bell-basis measurements and certain entangled quantum states such as Greenberger–Horne–Zeilinger (GHZ) states10—all of which are within the reach of current technology—are sufficient to construct a universal quantum computer. We also present systematic constructions for an infinite class of reliable quantum gates that make the design of fault-tolerant quantum computers much more straightforward and methodical.},
	number = {6760},
	journal = {Nature},
	author = {Gottesman, Daniel and Chuang, Isaac L.},
	month = nov,
	year = {1999},
	pages = {390--393},
}

@ARTICLE{zeng2011transversality,
  author={Zeng, Bei and Cross, Andrew and Chuang, Isaac L.},
  journal={IEEE Transactions on Information Theory}, 
  title={Transversality Versus Universality for Additive Quantum Codes}, 
  year={2011},
  volume={57},
  number={9},
  pages={6272-6284},
  keywords={Logic gates;Quantum computing;Fault tolerance;Fault tolerant systems;Error correction codes;Additives;Quantum mechanics;Fault-tolerant quantum computing;quantum error correction;stabilizer codes;transversal gates;universal gates},
  doi={10.1109/TIT.2011.2161917}}

@article{eastin2009restrictions,
  title = {Restrictions on Transversal Encoded Quantum Gate Sets},
  author = {Eastin, Bryan and Knill, Emanuel},
  journal = {Phys. Rev. Lett.},
  volume = {102},
  issue = {11},
  pages = {110502},
  numpages = {4},
  year = {2009},
  month = {Mar},
  publisher = {American Physical Society},
  doi = {10.1103/PhysRevLett.102.110502},
  url = {https://link.aps.org/doi/10.1103/PhysRevLett.102.110502}
}

@article{choi2023preparing,
  title={Preparing random states and benchmarking with many-body quantum chaos},
  author={Choi, Joonhee and Shaw, Adam L and Madjarov, Ivaylo S and Xie, Xin and Finkelstein, Ran and Covey, Jacob P and Cotler, Jordan S and Mark, Daniel K and Huang, Hsin-Yuan and Kale, Anant and others},
  journal={Nature},
  volume={613},
  number={7944},
  pages={468--473},
  year={2023},
  publisher={Nature Publishing Group UK London}
}

@article{cotler2023emergent,
  title = {Emergent Quantum State Designs from Individual Many-Body Wave Functions},
  author = {Cotler, Jordan S. and Mark, Daniel K. and Huang, Hsin-Yuan and Hern\'andez, Felipe and Choi, Joonhee and Shaw, Adam L. and Endres, Manuel and Choi, Soonwon},
  journal = {PRX Quantum},
  volume = {4},
  issue = {1},
  pages = {010311},
  numpages = {29},
  year = {2023},
  month = {Jan},
  publisher = {American Physical Society},
  doi = {10.1103/PRXQuantum.4.010311},
  url = {https://link.aps.org/doi/10.1103/PRXQuantum.4.010311}
}

@inproceedings{shor1996fault,
  title={Fault-tolerant quantum computation},
  author={Shor, Peter W},
  booktitle={Proceedings of 37th conference on foundations of computer science},
  pages={56--65},
  year={1996},
  organization={IEEE}
}

@misc{Campbell2016SICC,
  author       = {Earl T. Campbell},
  title        = {The smallest interesting colour code},
  howpublished = {\url{https://earltcampbell.com/2016/09/26/the-smallest-interesting-colour-code/}},
  note         = {Blog post},
  year         = {2016},
  month        = sep,
  day          = {26},
  urldate      = {2025-09-13}
}

@inproceedings{GarvieDuncan2018SICC,
  author    = {Liam Garvie and Ross Duncan},
  title     = {Verifying the Smallest Interesting Colour Code with Quantomatic},
  booktitle = {Proceedings of the 14th International Conference on Quantum Physics and Logic (QPL 2017)},
  series    = {Electronic Proceedings in Theoretical Computer Science},
  volume    = {266},
  pages     = {147--163},
  year      = {2018},
  doi       = {10.4204/EPTCS.266.10},
  url       = {https://arxiv.org/abs/1706.02717}
}

@article{darmawan2017tensor,
  title = {Tensor-Network Simulations of the Surface Code under Realistic Noise},
  author = {Darmawan, Andrew S. and Poulin, David},
  journal = {Phys. Rev. Lett.},
  volume = {119},
  issue = {4},
  pages = {040502},
  numpages = {5},
  year = {2017},
  month = {Jul},
  publisher = {American Physical Society},
  doi = {10.1103/PhysRevLett.119.040502},
  url = {https://link.aps.org/doi/10.1103/PhysRevLett.119.040502}
}

@article{huang2025robust,
title={A robust phase of continuous transversal gates in quantum stabilizer codes},
author={
Huang, Eric and Rozon, Pierre-Gabriel and Dua, Arpit and Gopalakrishnan, Sarang and Gullans, Michael J.},
journal={arXiv preprint arXiv:2510.01319},
  year={2025}
}

@article{miller2025efficient,
  title={Efficient simulation of Clifford circuits with small Markovian errors},
  author={Miller, Ashe and Ostrove, Corey and Hines, Jordan and Blume-Kohout, Robin and Young, Kevin and Proctor, Timothy},
  journal={arXiv preprint arXiv:2504.15128},
  year={2025}
}

@article{akibue2024probabilistic,
  title={Probabilistic unitary synthesis with optimal accuracy},
  author={Akibue, Seiseki and Kato, Go and Tani, Seiichiro},
  journal={ACM Transactions on Quantum Computing},
  volume={5},
  number={3},
  pages={1--27},
  year={2024},
  publisher={ACM New York, NY}
}

@article{vasmer2019three,
  title = {Three-dimensional surface codes: Transversal gates and fault-tolerant architectures},
  author = {Vasmer, Michael and Browne, Dan E.},
  journal = {Phys. Rev. A},
  volume = {100},
  issue = {1},
  pages = {012312},
  numpages = {20},
  year = {2019},
  month = {Jul},
  publisher = {American Physical Society},
  doi = {10.1103/PhysRevA.100.012312},
  url = {https://link.aps.org/doi/10.1103/PhysRevA.100.012312}
}

@article{Bombin2015GaugeColorCodes,
  title   = {Gauge color codes: Optimal transversal gates and gauge fixing in topological stabilizer codes},
  author  = {Bomb{\'{\i}}n, H{\'e}ctor},
  journal = {New Journal of Physics},
  volume  = {17},
  pages   = {083002},
  year    = {2015},
  doi     = {10.1088/1367-2630/17/8/083002},
  url     = {https://arxiv.org/abs/1311.0879}
}

@article{paetznick2013universal,
  title = {Universal Fault-Tolerant Quantum Computation with Only Transversal Gates and Error Correction},
  author = {Paetznick, Adam and Reichardt, Ben W.},
  journal = {Phys. Rev. Lett.},
  volume = {111},
  issue = {9},
  pages = {090505},
  numpages = {5},
  year = {2013},
  month = {Aug},
  publisher = {American Physical Society},
  doi = {10.1103/PhysRevLett.111.090505},
  url = {https://link.aps.org/doi/10.1103/PhysRevLett.111.090505}
}

@article{BravyiHaah2012,
  title   = {Magic state distillation with low overhead},
  author  = {Bravyi, Sergey and Haah, Jeongwan},
  journal = {Physical Review A},
  volume  = {86},
  pages   = {052329},
  year    = {2012},
  doi     = {10.1103/PhysRevA.86.052329},
  url     = {https://arxiv.org/abs/1209.2426}
}

@article{skinner2019measurement,
  title = {Measurement-Induced Phase Transitions in the Dynamics of Entanglement},
  author = {Skinner, Brian and Ruhman, Jonathan and Nahum, Adam},
  journal = {Phys. Rev. X},
  volume = {9},
  issue = {3},
  pages = {031009},
  numpages = {21},
  year = {2019},
  month = {Jul},
  publisher = {American Physical Society},
  doi = {10.1103/PhysRevX.9.031009},
  url = {https://link.aps.org/doi/10.1103/PhysRevX.9.031009}
}

@article{bejan2024dynamical,
  title = {Dynamical Magic Transitions in Monitored Clifford+$T$ Circuits},
  author = {Bejan, Mircea and McLauchlan, Campbell and B\'eri, Benjamin},
  journal = {PRX Quantum},
  volume = {5},
  issue = {3},
  pages = {030332},
  numpages = {30},
  year = {2024},
  month = {Aug},
  publisher = {American Physical Society},
  doi = {10.1103/PRXQuantum.5.030332},
  url = {https://link.aps.org/doi/10.1103/PRXQuantum.5.030332}
}

@article{li2019measurement,
  title = {Measurement-driven entanglement transition in hybrid quantum circuits},
  author = {Li, Yaodong and Chen, Xiao and Fisher, Matthew P. A.},
  journal = {Phys. Rev. B},
  volume = {100},
  issue = {13},
  pages = {134306},
  numpages = {26},
  year = {2019},
  month = {Oct},
  publisher = {American Physical Society},
  doi = {10.1103/PhysRevB.100.134306},
  url = {https://link.aps.org/doi/10.1103/PhysRevB.100.134306}
}

@article{campbell2016_efficient,
doi = {10.1088/2058-9565/1/1/015007},
url = {https://dx.doi.org/10.1088/2058-9565/1/1/015007},
year = {2016},
month = {dec},
publisher = {IOP Publishing},
volume = {1},
number = {1},
pages = {015007},
author = {Campbell, Earl T and O’Gorman, Joe},
title = {An efficient magic state approach to small angle rotations},
journal = {Quantum Science and Technology}
}

@article{duclos2015reducing,
  title = {Reducing the quantum-computing overhead with complex gate distillation},
  author = {Duclos-Cianci, Guillaume and Poulin, David},
  journal = {Phys. Rev. A},
  volume = {91},
  issue = {4},
  pages = {042315},
  numpages = {9},
  year = {2015},
  month = {Apr},
  publisher = {American Physical Society},
  doi = {10.1103/PhysRevA.91.042315},
  url = {https://link.aps.org/doi/10.1103/PhysRevA.91.042315}
}

@article{Knill2004PostselectedQC,
  title = {Fault‐Tolerant Postselected Quantum Computation: Schemes},
  author = {Knill, Emanuel},
  journal = {arXiv:quant‐ph},
  year = {2004},
  eprint = {quant-ph/0402171}
}

@article{BravyiKitaev2004,
  title = {Universal Quantum Computation with Ideal Clifford Gates and Noisy Ancillas},
  author = {Bravyi, Sergey and Kitaev, Alexei},
  journal = {Physical Review A},
  volume = {71},
  number = {2},
  pages = {022316},
  year = {2005},
  doi = {10.1103/PhysRevA.71.022316},
  archivePrefix = {arXiv},
  primaryClass = {quant-ph}
}

@ARTICLE{tillich2014quantum,
  author={Tillich, Jean-Pierre and Zémor, Gilles},
  journal={IEEE Transactions on Information Theory}, 
  title={Quantum LDPC Codes With Positive Rate and Minimum Distance Proportional to the Square Root of the Blocklength}, 
  year={2014},
  volume={60},
  number={2},
  pages={1193-1202},
  keywords={Parity check codes;Decoding;Cascading style sheets;Quantum mechanics;Sparse matrices;Vectors;Quantum computing;LDPC codes;quantum codes;CSS codes},
  doi={10.1109/TIT.2013.2292061}}

@INPROCEEDINGS{kovalev2012improved,
  author={Kovalev, Alexey A. and Pryadko, Leonid P.},
  booktitle={2012 IEEE International Symposium on Information Theory Proceedings}, 
  title={Improved quantum hypergraph-product LDPC codes}, 
  year={2012},
  volume={},
  number={},
  pages={348-352},
  keywords={Generators;Vectors;Parity check codes;Lattices;Symmetric matrices;Polynomials;Bismuth},
  doi={10.1109/ISIT.2012.6284206}}

@article{zeng2019higher,
  title = {Higher-Dimensional Quantum Hypergraph-Product Codes with Finite Rates},
  author = {Zeng, Weilei and Pryadko, Leonid P.},
  journal = {Phys. Rev. Lett.},
  volume = {122},
  issue = {23},
  pages = {230501},
  numpages = {6},
  year = {2019},
  month = {Jun},
  publisher = {American Physical Society},
  doi = {10.1103/PhysRevLett.122.230501},
  url = {https://link.aps.org/doi/10.1103/PhysRevLett.122.230501}
}

@article{breuckmann2021quantum,
  title = {Quantum Low-Density Parity-Check Codes},
  author = {Breuckmann, Nikolas P. and Eberhardt, Jens Niklas},
  journal = {PRX Quantum},
  volume = {2},
  issue = {4},
  pages = {040101},
  numpages = {19},
  year = {2021},
  month = {Oct},
  publisher = {American Physical Society},
  doi = {10.1103/PRXQuantum.2.040101},
  url = {https://link.aps.org/doi/10.1103/PRXQuantum.2.040101}
}

@article{toshio2025practical,
  title = {Practical Quantum Advantage on Partially Fault-Tolerant Quantum Computer},
  author = {Toshio, Riki and Akahoshi, Yutaro and Fujisaki, Jun and Oshima, Hirotaka and Sato, Shintaro and Fujii, Keisuke},
  journal = {Phys. Rev. X},
  volume = {15},
  issue = {2},
  pages = {021057},
  numpages = {43},
  year = {2025},
  month = {May},
  publisher = {American Physical Society},
  doi = {10.1103/PhysRevX.15.021057},
  url = {https://link.aps.org/doi/10.1103/PhysRevX.15.021057}
}

@article{hastings2021dynamically,
  title={Dynamically generated logical qubits},
  author={Hastings, Matthew B and Haah, Jeongwan},
  journal={Quantum},
  volume={5},
  pages={564},
  year={2021},
  publisher={Verein zur F{\"o}rderung des Open Access Publizierens in den Quantenwissenschaften}
}

@article{eickbusch2024demonstrating,
  title={Demonstrating dynamic surface codes},
  author={Eickbusch, Alec and McEwen, Matt and Sivak, Volodymyr and Bourassa, Alexandre and Atalaya, Juan and Claes, Jahan and Kafri, Dvir and Gidney, Craig and Warren, Christopher W and Gross, Jonathan and others},
  journal={arXiv preprint arXiv:2412.14360},
  year={2024}
}

@article{debroy2024luci,
  title={LUCI in the Surface Code with Dropouts},
  author={Debroy, Dripto M and McEwen, Matt and Gidney, Craig and Shutty, Noah and Zalcman, Adam},
  journal={arXiv preprint arXiv:2410.14891},
  year={2024}
}

@article{bravyi2024highthreshold,
	title = {High-threshold and low-overhead fault-tolerant quantum memory},
	volume = {627},
	issn = {1476-4687},
	doi = {10.1038/s41586-024-07107-7},
	number = {8005},
	journal = {Nature},
	author = {Bravyi, Sergey and Cross, Andrew W. and Gambetta, Jay M. and Maslov, Dmitri and Rall, Patrick and Yoder, Theodore J.},
	month = mar,
	year = {2024},
	pages = {778--782},
}

@Misc{codetable,
  author =       "Grassl, Markus",
  title =        "{Bounds on the minimum distance of linear codes and quantum codes}",
  howpublished = "Online available at \url{http://www.codetables.de}",
  year =         "2007",
  note =         "Accessed on 2025-09-11"
}

@article{bravyi2016trading,
  title={Trading classical and quantum computational resources},
  author={Bravyi, Sergey and Smith, Graeme and Smolin, John A},
  journal={Physical Review X},
  volume={6},
  number={2},
  pages={021043},
  year={2016},
  publisher={APS}
}

@misc{bicycle-architecture-compiler,
  title = "{qiskit-community/bicycle-architecture-compiler}: A compiler prototype targeting a bicycle architecture from PBC circuits",
  howpublished = "\url{https://github.com/qiskit-community/bicycle-architecture-compiler/}",
  note = "GitHub repository, accessed on 2025-09-10",
  year = "2025"
}

@article{yoshioka2025error,
  title={Error crafting in mixed quantum gate synthesis},
  author={Yoshioka, Nobuyuki and Akibue, Seiseki and Morisaki, Hayata and Tsubouchi, Kento and Suzuki, Yasunari},
  journal={npj Quantum Information},
  volume={11},
  number={1},
  pages={95},
  year={2025},
  publisher={Nature Publishing Group UK London}
}

@article{ross2014optimal,
  title={Optimal ancilla-free Clifford+ T approximation of z-rotations},
  author={Ross, Neil J and Selinger, Peter},
  journal={arXiv preprint arXiv:1403.2975},
  year={2014}
}

@article{fowler2018low,
  title={Low overhead quantum computation using lattice surgery},
  author={Fowler, Austin G and Gidney, Craig},
  journal={arXiv preprint arXiv:1808.06709},
  year={2018}
}

@article{litinski2019magic,
  title={Magic state distillation: Not as costly as you think},
  author={Litinski, Daniel},
  journal={Quantum},
  volume={3},
  pages={205},
  year={2019},
  publisher={Verein zur F{\"o}rderung des Open Access Publizierens in den Quantenwissenschaften}
}

@article{litinski2019game,
  title={A game of surface codes: Large-scale quantum computing with lattice surgery},
  author={Litinski, Daniel},
  journal={Quantum},
  volume={3},
  pages={128},
  year={2019},
  publisher={Verein zur F{\"o}rderung des Open Access Publizierens in den Quantenwissenschaften}
}

@article{horsman2012surface,
  title={Surface code quantum computing by lattice surgery},
  author={Horsman, Dominic and Fowler, Austin G and Devitt, Simon and Van Meter, Rodney},
  journal={New Journal of Physics},
  volume={14},
  number={12},
  pages={123011},
  year={2012},
  publisher={IOP Publishing}
}

@article{mckay2017efficient,
  title={Efficient Z gates for quantum computing},
  author={McKay, David C and Wood, Christopher J and Sheldon, Sarah and Chow, Jerry M and Gambetta, Jay M},
  journal={Physical Review A},
  volume={96},
  number={2},
  pages={022330},
  year={2017},
  publisher={APS}
}

@article{yoder2025tour,
  title={Tour de gross: A modular quantum computer based on bivariate bicycle codes},
  author={Yoder, Theodore J and Schoute, Eddie and Rall, Patrick and Pritchett, Emily and Gambetta, Jay M and Cross, Andrew W and Carroll, Malcolm and Beverland, Michael E},
  journal={arXiv preprint arXiv:2506.03094},
  year={2025}
}

@article{hastings2016turning,
  title={Turning gate synthesis errors into incoherent errors},
  author={Hastings, Matthew B},
  journal={arXiv preprint arXiv:1612.01011},
  year={2016}
}

@article{cheng2024emergent,
  title = {Emergent Unitary Designs for Encoded Qubits from Coherent Errors and Syndrome Measurements},
  author = {Cheng, Zihan and Huang, Eric and Khemani, Vedika and Gullans, Michael J. and Ippoliti, Matteo},
  journal = {PRX Quantum},
  volume = {6},
  issue = {3},
  pages = {030333},
  numpages = {22},
  year = {2025},
  month = {Aug},
  publisher = {American Physical Society},
  doi = {10.1103/bnld-2chd},
  url = {https://link.aps.org/doi/10.1103/bnld-2chd}
}

@article{bravyi2018correcting,
  title={Correcting coherent errors with surface codes},
  author={Bravyi, Sergey and Englbrecht, Matthias and K{\"o}nig, Robert and Peard, Nolan},
  journal={npj Quantum Information},
  volume={4},
  number={1},
  pages={55},
  year={2018},
  publisher={Nature Publishing Group UK London}
}

@article{yoshioka2024hunting,
	title = {Hunting for quantum-classical crossover in condensed matter problems},
	volume = {10},
	issn = {2056-6387},
	url = {https://doi.org/10.1038/s41534-024-00839-4},
	doi = {10.1038/s41534-024-00839-4},
	number = {1},
	journal = {npj Quantum Information},
	author = {Yoshioka, Nobuyuki and Okubo, Tsuyoshi and Suzuki, Yasunari and Koizumi, Yuki and Mizukami, Wataru},
	month = apr,
	year = {2024},
	pages = {45},
}

@article{beverland2022assessing,
  title={Assessing requirements to scale to practical quantum advantage},
  author={Beverland, Michael E and Murali, Prakash and Troyer, Matthias and Svore, Krysta M and Hoeffler, Torsten and Kliuchnikov, Vadym and Low, Guang Hao and Soeken, Mathias and Sundaram, Aarthi and Vaschillo, Alexander},
  journal={arXiv preprint arXiv:2211.07629},
  year={2022}
}

@article{tsubouchi2023universal,
	title = {Universal {Cost} {Bound} of {Quantum} {Error} {Mitigation} {Based} on {Quantum} {Estimation} {Theory}},
	volume = {131},
	url = {https://link.aps.org/doi/10.1103/PhysRevLett.131.210601},
	doi = {10.1103/PhysRevLett.131.210601},
	abstract = {We present a unified approach to analyzing the cost of various quantum error mitigation methods on the basis of quantum estimation theory. By analyzing the quantum Fisher information matrix of a virtual quantum circuit that effectively represents the operations of quantum error mitigation methods, we derive for a generic layered quantum circuit under a wide class of Markovian noise that, unbiased estimation of an observable encounters an exponential growth with the circuit depth in the lower bound on the measurement cost. Under the global depolarizing noise, we in particular find that the bound can be asymptotically saturated by merely rescaling the measurement results. Moreover, we prove for random circuits with local noise that the cost grows exponentially also with the qubit count. Our numerical simulations support the observation that, even if the circuit has only linear connectivity, such as the brick-wall structure, each noise channel converges to the global depolarizing channel with its strength growing exponentially with the qubit count. This not only implies the exponential growth of cost both with the depth and qubit count, but also validates the rescaling technique for sufficiently deep quantum circuits. Our results contribute to the understanding of the physical limitations of quantum error mitigation and offer a new criterion for evaluating the performance of quantum error mitigation techniques.},
	number = {21},
	urldate = {2024-04-29},
	journal = {Physical Review Letters},
	author = {Tsubouchi, Kento and Sagawa, Takahiro and Yoshioka, Nobuyuki},
	month = nov,
	year = {2023},
	note = {Publisher: American Physical Society},
	pages = {210601},
	file = {APS Snapshot:/Users/nobu/Zotero/storage/DPPUGP3L/PhysRevLett.131.html:text/html;報告したバージョン:/Users/nobu/Zotero/storage/W5WMW42B/Tsubouchi et al. - 2023 - Universal Cost Bound of Quantum Error Mitigation B.pdf:application/pdf},
}

@article{takagi2023universal,
  title = {Universal Sampling Lower Bounds for Quantum Error Mitigation},
  author = {Takagi, Ryuji and Tajima, Hiroyasu and Gu, Mile},
  journal = {Phys. Rev. Lett.},
  volume = {131},
  issue = {21},
  pages = {210602},
  numpages = {8},
  year = {2023},
  month = {Nov},
  publisher = {American Physical Society},
  doi = {10.1103/PhysRevLett.131.210602},
  url = {https://link.aps.org/doi/10.1103/PhysRevLett.131.210602}
}

@article{PhysRevLett.131.060603,
  title = {Coherent-Error Threshold for Surface Codes from Majorana Delocalization},
  author = {Venn, Florian and Behrends, Jan and B\'eri, Benjamin},
  journal = {Phys. Rev. Lett.},
  volume = {131},
  issue = {6},
  pages = {060603},
  numpages = {6},
  year = {2023},
  month = {Aug},
  publisher = {American Physical Society},
  doi = {10.1103/PhysRevLett.131.060603},
  url = {https://link.aps.org/doi/10.1103/PhysRevLett.131.060603}
}

@article{behrends2024statistical,
  title={Statistical mechanical mapping and maximum-likelihood thresholds for the surface code under generic single-qubit coherent errors},
  author={Behrends, Jan and B{\'e}ri, Benjamin},
  journal={arXiv preprint arXiv:2410.22436},
  year={2024}
}

@article{quek2022exponentially,
    title = {Exponentially tighter bounds on limitations of quantum error mitigation},
    journal = {arXiv preprint arXiv:2210.11505},
    author = {Quek, Yihui and França, Daniel Stilck and Khatri, Sumeet and Meyer, Johannes Jakob and Eisert, Jens},
    year = {2022},
}

@article{suzuki1985general,
    title = {General correction theorems on decomposition formulae of exponential operators and extrapolation methods for quantum monte carlo simulations},
    volume = {113},
    number = {6},
    journal = {Physics Letters A},
    author = {Suzuki, Masuo},
    year = {1985},
    note = {Publisher: Elsevier},
    pages = {299--300},
}

\end{document}